\crefname{section}{\textsection}{\textsection}
\crefname{subsection}{\textsection}{\textsection}
\crefname{appendix}{\textsection}{\textsection}
\DeclareMathAlphabet\mathbfcal{OMS}{cmsy}{b}{n}
\newtheorem{theorem}{Theorem}[section]
\newtheorem{lemma}[theorem]{Lemma}
\newtheorem{proposition}[theorem]{Proposition}
\newtheorem{corollary}[theorem]{Corollary} 
\theoremstyle{definition}
\newtheorem{definition}{Definition}[section]
\newtheorem{example}[theorem]{Example}
\newtheorem{remark}[theorem]{Remark} 
\numberwithin{equation}{section}
\newcommand{\bb}{\beta}
\newcommand{\one}{\mathds{1}}
\newcommand{\cv}{\mathbf{c}}
\newcommand{\xv}{\mathbf{x}}
\newcommand{\xvp}{\mathbf{x}^{\prime}}
\newcommand{\kv}{\mathbf{k}}
\newcommand{\qv}{\mathbf{q}}
\newcommand{\psiv}{\bm{\psi}}
\newcommand{\phiv}{\bm{\phi}}
\newcommand{\varphiv}{\bm{\varphi}}
\newcommand{\fv}{\bm{f}}
\newcommand{\giv}{\bm{G}}
\newcommand{\gmv}{\mathcal{G}}
\newcommand{\dom}{\mathscr{D}}
\newcommand{\diff}{\mathrm{d}}
\newcommand{\xiv}{\bm{\xi}}
\newcommand{\aav}{\mathbf{A}}
\newcommand{\avg}[1]{\lf\langle #1 \ri\rangle}
\newcommand{\teta}{\vartheta}
\newcommand{\beq}{\begin{equation}}
\newcommand{\eeq}{\end{equation}}
\newcommand{\OO}{\mathcal{O}}
\newcommand{\UU}{\mathcal{U}}
\newcommand{\WW}{\mathcal{W}}
\newcommand{\VV}{\mathcal{V}}
\newcommand{\Z}{\mathbb{Z}}
\newcommand{\CC}{\mathbb{C}}
\newcommand{\R}{\mathbb{R}}
\newcommand{\bdm}{\begin{displaymath}}
\newcommand{\edm}{\end{displaymath}}
\newcommand{\bdn}{\begin{eqnarray}}
\newcommand{\edn}{\end{eqnarray}}
\newcommand{\bay}{\begin{array}{c}}
\newcommand{\eay}{\end{array}}
\newcommand{\ben}{\begin{enumerate}}
\newcommand{\een}{\end{enumerate}}
\newcommand{\beqn}{\begin{eqnarray}}
\newcommand{\eeqn}{\end{eqnarray}}
\newcommand{\bml}[1]{\begin{multline} #1 \end{multline}}
\newcommand{\bmln}[1]{\begin{multline*} #1 \end{multline*}}
\newcommand{\be}{\begin{equation}}%
\newcommand{\ee}{\end{equation}}
\newcommand{\lf}{\left}
\newcommand{\ri}{\right}
\newcommand{\disp}{\displaystyle}
\newcommand{\tx}{\textstyle}
\newcommand{\braket}[2]{\lf\langle #1|#2 \ri\rangle}
\newcommand{\braketl}[2]{\lf.\lf\langle #1\ri|#2 \ri\rangle}
\newcommand{\meanlrlr}[3]{\lf\langle #1\lf|#2\ri|#3\ri\rangle}
\renewcommand{\leq}{\leqslant}
\renewcommand{\geq}{\geqslant}
\newcommand{\hpauli}{H_{\mathrm{P}}}
\newcommand{\hpaulif}{H^{(\mathrm{F})}_{\mathrm{P}}}
\newcommand{\hsch}{H_{\mathrm{S}}}
\newcommand{\hschf}{H^{(\mathrm{F})}_{\mathrm{S}}}
\DeclareMathOperator*{\slim}{s\,--\,lim}
\title[Pauli Hamiltonians with an Aharonov-Bohm flux]{Pauli Hamiltonians with an Aharonov-Bohm flux}
\author{William Borrelli}
\address{Dipartimento di Matematica, Politecnico di Milano, P.zza Leonardo da Vinci, 32, 20133, Milano, Italy}
\email{william.borrelli@polimi.it}
\urladdr{}
\author{Michele Correggi}
\address{Dipartimento di Matematica, Politecnico di Milano, P.zza Leonardo da Vinci, 32, 20133, Milano, Italy}
\email{michele.correggi@gmail.com}
\urladdr{https://sites.google.com/view/michele-correggi}
\author{Davide Fermi}
\address{Dipartimento di Matematica, Politecnico di Milano, P.zza Leonardo da Vinci, 32, 20133, Milano, Italy
and Istituto Nazionale di Fisica Nucleare, Sezione di Milano, Italy}
\email{davide.fermi@polimi.it}
\urladdr{https://fermidavide.com}
\begin{document}

\begin{abstract} 
We study a two-dimensional Pauli operator describing a charged quantum particle with spin $1/2$ moving on a plane in presence of an orthogonal Aharonov-Bohm magnetic flux. We classify all the admissible self-adjont realizations and give a complete picture of their spectral and scattering properties. Symmetries of the resulting Hamiltonians are also discussed, as well as their connection with the Dirac operator perturbed by an Aharonov-Bohm singularity.
\end{abstract}


\keywords{Pauli operators, Aharonov-Bohm effect, singular magnetic fields, self-adjoint extensions.}
\subjclass[2020]{35J10, 47A07, 47A40, 47B25, 81Q10, 81Q70, 81U99}

\maketitle
\section{Introduction}

The motion of a charged particle with spin $1/2$ on a two-dimensional plane in presence of an external magnetic field perpendicular to it is described in non-relativistic quantum mechanics by the Pauli operator
\be\label{eq:Pauli}
	H_{\mathrm{P}} = \big(\bm{\sigma}\cdot(-i\nabla +\mathbf{A})\big)^2\,, 
\ee
where, for any $\xv = (x,y) \in \mathbb{R}^2$, $\mathbf{A}$ is a vector potential such that $b = \mbox{curl}\mathbf{A} = \nabla^{\perp} \cdot \aav $, $ \nabla^{\perp} : = (- \partial_y, \partial_x) $, equals the magnetic field and $\bm{\sigma} = (\sigma_1,\sigma_2)$ is a matrix-valued vector whose components are the Pauli matrices
	\be
		\sigma_1=\begin{pmatrix}0 & 1 \\ 1 &0 \end{pmatrix}\,, \qquad \sigma_2=\begin{pmatrix}0 & -i \\ i &0 \end{pmatrix}.
	\ee
For regular magnetic fields, such an operator has already been studied in the literature and we refer to the recent works \cite{BTRS21,BTRS23} and references therein for further details. A similar analysis on non-simply connected domains is performed in \cite{LB22}, sharing some analogies with our purpose here (see also \cite{Fi23} for the case of the Dirac operator).

We are indeed interested in discussing the properties of the Pauli operator when the magnetic field $b$ is concentrated at a point, which we choose as the origin without loss of generality, \emph{i.e.}, 
\[
b(\xv) = 2\pi \alpha\, \delta(\xv) \,, \qquad \alpha \in \R\,.
\]
This ideally corresponds to an infinite solenoid of zero diameter passing through the origin. In the Coulomb gauge, the associated vector potential reads
\be\label{eq:AB}
	\mathbf{A}(\xv) := \alpha\, {\xv^{\perp} \over |\xv|^2}\,,
\ee
where $\xv^{\perp} := (-y,x)$.
This setting matches the one of the famous Aharonov-Bohm (AB) effect \cite{AB59}, where however we are here taking also into account the spin degrees of freedom. 

It is known that the singularity at the origin of the field affects the self-adjointness of the Hamiltonian operator and, accordingly, different self-adjoint extensions of the formal expression exist.
The analogous subject for the Schr\"odinger operator, \emph{i.e.} for a spinless charged particle, has been thoroughly analysed in the literature (we refer to \cite{AT98,CF20,CF23,CF24,DS98,F23,PR11} for further details; see also \cite{DF23,DG21} and \cite{GHT23,GPS22} for results about the related radial operators) and our starting point is a similar classification of all self-adjoint realizations of the Pauli operator \eqref{eq:Pauli} (\cref{sec:friedrichs} and \cref{sec: saext}). Such a question has only been partially addressed in \cite{GS04,Pe05,Pe06}, also for more than one solenoid, yet focusing mainly on the Aharonov-Casher phenomenon about the existence and number of zero-energy modes. In this connection we also mention \cite{EV02}, where a distinguished self-adjoint extension is characterized for Pauli operators with measure-valued magnetic fields.

Our main goal is however a careful description of the spectral and scattering properties of the self-adjoint realizations of the Pauli operator (\cref{sec:spesca}): with the exception of possible embedded eigenvalues, we provide a complete picture of the spectrum of each self-adjoint extension, including zero-energy resonances, and derive an explicit expression of the generalized eigenfunctions. Similar results for regular Pauli operators are contained in \cite{BK23,Ko22} and \cite{AHK23,CFK20} (in the latters some applications to the Friedrichs realization of the AB Pauli operator are discussed). After proving the existence and completeness of the wave operators, this in turn allows us to obtain an expression for the scattering amplitude.

Inspired by the analysis in \cite{FS92,FS93} of the physical symmetries of the regular Pauli operator inherited from quantum field theory, we also investigate such symmetries (\cref{app:2Dsym}) in presence of a singular magnetic field (\cref{sec:symdir}). Furthermore, since the Pauli operator can be formally viewed as the square of the Dirac operator $D = \bm{\sigma} \cdot (-i\nabla +\mathbf{A})$, we examine the interplay of this relation with the self-adjoint extensions mentioned above.

\bigskip

\begin{footnotesize}
\noindent
\textbf{Acknowledgments.}
The present research has been supported by MUR grant ``Dipartimento di Eccellenza'' 2023-2027 of Dipartimento di Matematica, Politecnico di Milano. MC acknowledges the support of PNRR Italia Domani and Next Generation Eu through the ICSC National Research Centre for High Performance Computing, Big Data and Quantum Computing. WB acknowledges the support of the MUR - PRIN 2022 project ``Nonlinear dispersive equations in presence of singularities'' (Prot. N. 20225ATSTP).
\end{footnotesize}

\section{Main Results}

As anticipated our first goal is to classify the self-adjoint realizations of the two-dimensional Pauli operator with an AB magnetic flux at the origin and characterize their spectral and scattering properties, starting from the formal expression given in \eqref{eq:Pauli}.
We use polar coordinates $ (r,\teta) \in (0,+\infty) \times [0, 2\pi) $ and exploit the natural Hilbert space isomorphism $L^2(\mathbb{R}^2; \mathbb{C}^2) \simeq L^2(\mathbb{R},r\,dr) \otimes L^2([0, 2\pi),\diff\teta) \otimes \mathbb{C}^2 $. Without loss of generality, we assume $\alpha \in (0,1)$. {The latter condition can indeed be realized by means of a unitary transformation: for any $\alpha \in \mathbb{R}$, let $ \lf\lfloor \alpha \ri\rfloor \in \mathbb{Z}$ denote the corresponding integer part, then $(U \psi)(r,\teta) := e^{- i \lf\lfloor \alpha \ri\rfloor \teta} \, \psi(r,\teta)$ is a unitary operator on $L^2(\mathbb{R}^2; \mathbb{C}^2)$ fulfilling
	$$ U\, H_{\mathrm{P}}\, U^{-1} = \big(\bm{\sigma} \cdot (-i\nabla + \widetilde{\mathbf{A}})\big)^2 \,, \qquad \widetilde{\mathbf{A}} := \big(\alpha - \lf\lfloor \alpha \ri\rfloor \big) {\xv^{\perp} \over |\xv|^2}\,. $$
	Let us stress that $U$ is indeed a unitary map, yet not a gauge transformation since the magnetic fluxes associated to $H_{\mathrm{P}}$ and $U H_{\mathrm{P}} U^{-1}$ are different. 
	
The operator $H_{\mathrm{P}}$ is understood as a closable symmetric operator on the dense domain $C^\infty_c(\mathbb{R}^2 \setminus \{\mathbf{0}\}; \mathbb{C}^2) = C^{\infty}_c(\mathbb{R}^2  \setminus \{\mathbf{0}\}) \otimes \mathbb{C}^2 $. However, it is also easy to verify that $ H_{\mathrm{P}} $ is not essentially self-adjoint on such a domain, due to the singularity of $ \mathbf{A} $ at the origin. At the algebraic level, the operator \eqref{eq:Pauli} can be written as
	\be\label{eq:Pauli2}
		H_{\mathrm{P}} = \begin{pmatrix}  \Pi_{-} \Pi_{+} &	0  \\	0 & \Pi_{+} \Pi_{-}\end{pmatrix} ,
	\ee
where $\Pi_\pm:=(p_1\pm ip_2)$ and $p_j $, $ j = 1,2$, are the components of the vector $ \mathbf{p} := - i \nabla + \aav $. It is readily seen that $\Pi_\pm$ are one the formal adjoint of the other. Passing to polar coordinates and noting that $\mathbf{A}(r,\teta) = \tfrac{\alpha}{r}\,(-\sin\teta,\cos\teta)$, $\partial_{x}=\cos\teta\,\partial_r - \tfrac{\sin\teta}{r}\,\partial_\teta$ and $\partial_{y} = \sin\teta\,\partial_r+\tfrac{\cos\teta}{r}\,\partial_\teta$, by a simple computation we get
	\bdm
		\Pi_+ = e^{i\teta}\left(-i\partial_r + \tfrac{1}{r}\,\partial_\teta + \tfrac{i \alpha}{r} \right) ,\qquad 
		\Pi_- = e^{-i\teta}\left(-i\partial_r-\tfrac{1}{r}\,\partial_\teta - \tfrac{i\alpha}{r} \right) .
	\edm
Then, using the basic relation $\big[\partial_\teta, e^{\pm i\teta}\big] = \pm i\,e^{i\teta}$, we find
	\begin{align*}
		\Pi_- \Pi_+ & = \left(-i\partial_r-\tfrac{1}{r}\,\partial_\teta-\tfrac{i(\alpha+1)}{r} \right)\left(-i\partial_r+\tfrac{1}{r}\,\partial_\teta+\tfrac{i\alpha}{r} \right)
			= - \,\tfrac{1}{r}\, \partial_r \big(r \,\partial_{r} \,\cdot \big) + \tfrac{1}{r^2}\,(- i\,\partial_{\teta} + \alpha)^2\,, 
			\\
		\Pi_+ \Pi_- & = \left(-i\partial_r+\tfrac{1}{r}\,\partial_\teta+\tfrac{i(\alpha-1)}{r} \right) \left(-i\partial_r-\tfrac{1}{r}\,\partial_\teta-\tfrac{i\alpha}{r} \right)
			= - \,\tfrac{1}{r}\, \partial_r \big(r \,\partial_{r} \,\cdot \big) + \tfrac{1}{r^2}\,(- i\,\partial_{\teta} + \alpha)^2\,.
	\end{align*}
The above identities show that formally
	\be
		\label{eq: AB Schroedinger}
		\Pi_{-} \Pi_{+} = \Pi_{+} \Pi_{-} = H_{\mathrm{S}} \,,
	\ee
where $H_{\mathrm{S}} = (-i\nabla + \mathbf{A})^2$ is the AB Schr\"odinger operator acting on scalar functions.

\subsection{Friedrichs extension}\label{sec:friedrichs}

We start by discussing the properties of the most natural self-adjoint extension of $ \hpauli $, \emph{i.e.}, the Friedrichs one. The simplest way to define it is to consider the quadratic form associated to $ \hpauli $:
	\be\label{eq: Qpdef}
		Q_{\mathrm{P}}[\psiv] := \sum_{s \in \{\uparrow,\downarrow\}}\int_{\mathbb{R}^2}\!\! \diff \xv\; \big|(-i \nabla + \mathbf{A}) \psi_{s}\big|^2\,, 
	\ee
	making sense at least for
	\be
		\psiv := \begin{pmatrix} \psi_{\uparrow} \\ \psi_{\downarrow} \end{pmatrix} \in C^{\infty}_{c}\big(\mathbb{R}^2 \setminus  \{\mathbf{0}\}; \mathbb{C}^2\big)\,.
	\ee
	We use the following convention: spinors, \emph{i.e.}, functions from $ \R^2 \to \mathbb{C}^2 $, are denoted by italic bold letters (\emph{e.g.}, $ \psiv, \giv $, etc.), while regular vectors are denoted by bold letters (\emph{e.g.}, $ \xv, \qv $, etc.).	
Making reference to the associated norm $\|\psiv\|_{Q_{\mathrm{P}}} := \|\psiv\|_{2} + Q_{\mathrm{P}}[\psiv]$, we introduce the Friedrichs realization
	\bdm
		\dom\big[Q_{\mathrm{P}}^{(\mathrm{F})}\big] := \overline{C^{\infty}_{c}\big(\mathbb{R}^2 \setminus \{\mathbf{0}\},\mathbb{C}^2\big)}^{\; \|\,\cdot\,\|_{Q_{\mathrm{P}}}}\,, \qquad
		Q_{\mathrm{P}}^{(\mathrm{F})}[\psiv] := Q_{\mathrm{P}}[\psiv]\,.
	\edm
By a straightforward adaptation of \cite[Prop. 1.1]{CF20} (see also \cite[Prop. 1.2]{CF23}), we get the forthcoming \cref{prop:QFHF}}.
Here and in the sequel we refer to the angular average $ \avg{f} : \R^{+} \to \mathbb{C} $ of any scalar function $ f: \mathbb{R}^2 \to \mathbb{C} $, given by
	\begin{equation*}
		\avg{f}(r) : = \frac{1}{2 \pi} \int_0^{2\pi}\!\!\! \diff \teta \: f(r,\teta)\,.
	\end{equation*}

	\begin{proposition}[Friedrichs realization]\label{prop:QFHF}
		\mbox{} \\
		Let $\alpha \in (0,1)$. Then,
		\begin{enumerate}[i)]
			\item the quadratic form $Q_{\mathrm{P}}^{(\mathrm{F})}$ is closed and non-negative on the domain
				\bdm
					\dom\big[Q_{\mathrm{P}}^{(\mathrm{F})}\big] = \lf\{ \psiv \in H^1(\mathbb{R}^2; \mathbb{C}^2) \;\big|\; A_j \psiv \in L^2(\mathbb{R}^2;\mathbb{C}^2)\;\mbox{for } j=1,2 \ri\}\,;
				\edm
			\item for any $\psiv  \in \dom\big[Q_{\mathrm{P}}^{(\mathrm{F})}\big]$ and for any $ s \in \{\uparrow,\downarrow\} $,
				\be\label{asyF}
					\lim_{r \to 0^{+}} \lf\langle\, |\psi_{s}|^2 \,\ri\rangle(r) = 0\,, \qquad 
					\lim_{r \to 0^{+}} r^2 \lf\langle\, |\partial_r\psi_{s}|^2 \,\ri\rangle(r) = 0\,;
				\ee
			
			\item the unique self-adjoint operator $ \hpaulif $ associated to $Q_{\mathrm{P}}^{(\mathrm{F})}$ is 
				\be
					\dom\big(H_{\mathrm{P}}^{(\mathrm{F})}\big) := \lf\{ \psiv \in \dom\big[Q^{(\mathrm{F})}\big] \;\Big|\; H_{\mathrm{P}}\, \psiv \in L^2(\mathbb{R}^2;\mathbb{C}^2) \ri\} , \qquad
					H_{\mathrm{P}}^{(\mathrm{F})} \psiv := H_{\mathrm{P}}\, \psiv\,.
				\ee
		\end{enumerate}
	\end{proposition}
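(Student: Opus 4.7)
The cornerstone of the argument is that the form \eqref{eq: Qpdef} splits across spin components: writing $\psiv = (\psi_{\uparrow}, \psi_{\downarrow})^{\top}$, one has
\be
  Q_{\mathrm{P}}[\psiv] \;=\; Q_{\mathrm{S}}[\psi_{\uparrow}] + Q_{\mathrm{S}}[\psi_{\downarrow}]\,,
\ee
where $Q_{\mathrm{S}}[f] := \int_{\R^2} |(-i\nabla + \aav) f|^2 \diff \xv$ is the scalar AB Schr\"odinger form. Since the test-function space factorizes as $C^{\infty}_c(\R^2 \setminus \{\mathbf{0}\}; \mathbb{C}^2) = C^{\infty}_c(\R^2 \setminus \{\mathbf{0}\}) \otimes \mathbb{C}^2$, the closure of $Q_{\mathrm{P}}$ in the $\|\cdot\|_{Q_{\mathrm{P}}}$-norm is simply the orthogonal sum of two copies of the scalar Friedrichs closure $Q_{\mathrm{S}}^{(\mathrm{F})}$, which is exactly the object classified in \cite[Prop.~1.1]{CF20}. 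The plan is therefore to lift that scalar statement to spinors componentwise.

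For item (i), I would invoke the scalar result to conclude that $\psi_{s}$ belongs to the form domain iff $\psi_{s} \in H^1(\R^2)$ and $A_j \psi_{s} \in L^2(\R^2)$ for $j=1,2$; taking the intersection over $s \in \{\uparrow,\downarrow\}$ then yields the claimed domain. For item (ii), the pointwise radial asymptotics of the angular averages are again inherited componentwise from \cite[Prop.~1.1]{CF20}. Their heuristic origin is that $|\aav|^2 = \alpha^2/r^2$ in Coulomb gauge, so $A_j \psi_{s} \in L^2$ translates into $\int_0^{+\infty} r^{-1} \avg{|\psi_s|^2}(r)\,\diff r < +\infty$, which combined with the $H^1$-bound on the angular profile (viewed as a radial function) forces $\avg{|\psi_s|^2}(r) \to 0$ as $r \to 0^+$; the second limit follows analogously by decomposing the magnetic gradient into its radial and angular parts in polar coordinates and using that each piece lies separately in $L^2$.

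For item (iii), the representation theorem for densely defined, closed, non-negative quadratic forms produces a unique self-adjoint operator $\hpaulif$ with $Q_{\mathrm{P}}^{(\mathrm{F})}[\psiv, \varphiv] = \langle \hpaulif\psiv, \varphiv\rangle$ for $\varphiv \in \dom(\hpaulif)$. To identify its domain with the stated set, I would integrate by parts on spinors $\varphiv \in C^{\infty}_{c}(\R^2 \setminus \{\mathbf{0}\}; \mathbb{C}^2)$, for which all boundary terms vanish because the supports stay away from the origin, obtaining $Q_{\mathrm{P}}[\psiv, \varphiv] = \langle H_{\mathrm{P}} \psiv, \varphiv\rangle$ whenever $H_{\mathrm{P}} \psiv \in L^2$ in the distributional sense. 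Density of such $\varphiv$ in $\dom[Q_{\mathrm{P}}^{(\mathrm{F})}]$ (with respect to $\|\cdot\|_{Q_{\mathrm{P}}}$), together with uniqueness in the representation theorem, then gives both the inclusion of the candidate domain in $\dom(\hpaulif)$ and the converse via the distributional identification.

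\textbf{Main obstacle.} Since the authors themselves describe this as a ``straightforward adaptation'' of the scalar case, there is no true conceptual hurdle; the only real task is ensuring that the reduction to \cite[Prop.~1.1]{CF20} is clean. The most delicate routine step is part (ii), where one needs Hardy-type control of the angular averages near the singularity, but once the componentwise reduction is in place this is handled exactly as in the scalar reference.
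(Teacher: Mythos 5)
Your proposal is correct and matches the paper's intent: the authors themselves do not spell out a proof of this proposition, stating only that it follows by ``a straightforward adaptation of [Prop.~1.1][CF20]'', and your componentwise reduction $Q_{\mathrm{P}}[\psiv] = Q_{\mathrm{S}}[\psi_{\uparrow}] + Q_{\mathrm{S}}[\psi_{\downarrow}]$ together with the factorization of the test-function space is precisely the mechanism that makes that adaptation ``straightforward.'' Your treatment of (i)--(iii), including the appeal to the representation theorem and the density/integration-by-parts argument for identifying $\dom(\hpaulif)$, is the natural lift and agrees with what the cited scalar reference provides.
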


	\begin{remark}[Decomposition of $ \hpaulif $]\label{rem:HPF}
		\mbox{}	\\
		Due to the diagonal structure \eqref{eq:Pauli2} \eqref{eq: AB Schroedinger} of the Pauli operator $H_{\mathrm{P}}$, from \cref{prop:QFHF} it readily follows that
		\be\label{HPFdiag}
			\hpaulif = \begin{pmatrix} H_{\mathrm{S}}^{(\mathrm{F})} &	0  \\0 & H_{\mathrm{S}}^{(\mathrm{F})} \end{pmatrix} ,
		\ee
		where $H_{\mathrm{S}}^{(\mathrm{F})}$ is the Friedrichs realization of the AB Schr\"odinger  operator \eqref{eq: AB Schroedinger} characterized in \cite[Prop. 1.1]{CF20}.
	\end{remark}
	
\subsection{Self-adjoint extensions}
\label{sec: saext}

As anticipated, our first goal is to classify all the self-adjoint realizations of the operator $ \hpauli $. To this purpose we will provide a family of quadratic forms inspired by those associated to the self-adjoint extensions of the AB Schr\"{o}dinger operator $ \hsch $ and show {\it a posteriori} that such forms are closed and bounded from below, as well as the fact that the associated operators exhausts all possible extensions of $ \hpauli $.

Let then $g_{\lambda}^{(0)},g_{\lambda}^{(-1)} \in L^2(\mathbb{R}^2;\mathbb{C})$ be the unique solutions to the AB Sch\"odinger defect equation 	
	\bdm
		\big(H_{\mathrm{S}} + \lambda^2\big) g_{\lambda}^{(\ell)} = 0, \qquad	\mbox{in } \mathbb{R}^2 \setminus \{\mathbf{0}\}, 
	\edm
	namely (see \cite[\S 10.31]{OLBC10}),
	\be\label{eq:g}
		g_{\lambda}^{(\ell)}(r,\teta) = \;\lambda^{|\ell+\alpha|}\,K_{|\ell+\alpha|}(\lambda r)\,\tfrac{e^{i \ell \teta}}{\sqrt{2\pi}}, \qquad \ell \in \{0,-1\}\,.
	\ee
	We stress (see next \eqref{eq:G01asy}) that such functions have a local singularity at the origin proportional to $ r^{-|\ell + \alpha|} $ and for this reason they do not belong to the domain of the Friedrichs realization $ \hschf $. We construct then out these defect functions four independent  solutions in $L^2(\mathbb{R}^2; \mathbb{C}^2)$ of the formal equation $(H_{\mathrm{P}} + \lambda^2) \giv_{\lambda} = 0$ in $\mathbb{R}^2 \setminus \{\mathbf{0}\}$, {\it i.e.},
	\be\label{G1234}
		\giv_{\lambda,s}^{(\ell)} := g_{\lambda}^{(\ell)} \begin{pmatrix} \delta_{s,\uparrow} \\ \delta_{s,\downarrow} \end{pmatrix} , \qquad s \in \{\uparrow,\downarrow\},\, \ell \in \{0,-1\}\,.
	\ee
By a heuristic evaluation of the expectation value $\meanlrlr{\psi}{H_{\mathrm{P}}}{\psi}$ for spinors  of the form 
\beq
	\psiv = \phiv_{\lambda} + \sum_{s,\ell} q_{s}^{(\ell)} \giv_{\lambda,s}^{(\ell)}
\eeq
with $\phiv_{\lambda} \!\in\! \dom\big[Q_{\mathrm{P}}^{(\mathrm{F})}\big]$ and $\qv := \big(q_{\uparrow}^{(0)}, q_{\uparrow}^{(-1)}, q_{\downarrow}^{(0)}, q_{\downarrow}^{(-1)} \big) \in \mathbb{C}^4$, we are lead to consider the quadratic form
	\be\label{QB2}
		Q_{\mathrm{P}}^{(\bb)}[\psiv] := Q_{\mathrm{P}}^{(\mathrm{F})}[\phiv_{\lambda}] - \lambda^2 \,\|\psiv\|_{2}^{2} + \lambda^2 \,\| \phiv_{\lambda}\|_{2}^{2} +\qv^*\cdot[L(\lambda)+\bb]\qv \,,
	\ee
where $ \beta = \big(\bb_{s\,s'}^{(\ell\,\ell')}\big)_{s,s' \in \{\uparrow,\downarrow\},\, \ell,\ell' \in \{0,-1\}} \in \mathrm{M}_{4,\mathrm{Herm}}(\mathbb{C})$ is any $4 \times 4$ Hermitian matrix labeling the extension and 
\begin{equation}\label{eq:L}
L(\lambda) = \left( \tfrac{\pi\,\lambda^{2|\ell + \alpha|}}{2 \sin(\pi \alpha)}\,\delta_{s\,s'}\,\delta_{\ell\,\ell'}\right)_{s,s' \in \{\uparrow,\downarrow\},\, \ell,\ell' \in \{0,-1\}} .
\end{equation}
Here and below we systematically write $\sum_{s,\ell}$ to indicate the double sum $\sum_{s \in \{\uparrow,\downarrow\}}\sum_{\ell \in \{0,-1\}}$. Moreover, we shall refer to the decomposition in angular harmonics, for fixed $s \in \{\uparrow,\downarrow\}$,
	\bdm
		\psi_{s}(r,\teta) = \sum_{\ell \in \mathbb{Z}}\,\psi_{s}^{(\ell)}(r) \,\tfrac{e^{i \ell \teta}}{\sqrt{2\pi}}\,.
	\edm

	\begin{theorem}[Self-adjoint extensions of $ \hpauli $]\label{thm:QBHB}
		\mbox{} \\
		Let $\alpha \in (0,1)$ and $ \lambda > 0 $. Then, for any $\bb \in \mathrm{M}_{4,\mathrm{Herm}}(\mathbb{C})$,
		\begin{enumerate}[i)]
			\item the quadratic form $Q_{\mathrm{P}}^{(\bb)}$ is well-defined on the domain
				\bdm
					\dom\big[Q_{\mathrm{P}}^{(\bb)}\big] = \Big\{ \psiv = \phiv_{\lambda} + \tx{\sum_{s,\ell}}\, q_s^{(\ell)} \giv_{\lambda,s}^{(\ell)} \in L^2(\mathbb{R}^2; \mathbb{C}^2) \,\Big|\, \phiv_{\lambda} \in \dom\big[Q_{\mathrm{P}}^{(\mathrm{F})}\big],\; \qv  \in \mathbb{C}^4 \Big\}\,,
				\edm
				and it is independent of $\lambda$, closed and bounded from below;
			
			\item the unique self-adjoint operator $H_{\mathrm{P}}^{(\bb)}$ associated to $Q_{\mathrm{P}}^{(\bb)}$ is
				\bml{\label{HBdom}
					\dom\big(H_{\mathrm{P}}^{(\bb)}\big) = \lf\{ \psiv = \phiv_{\lambda} + \tx{\sum_{s,\ell}}\, q_s^{(\ell)} \giv_{\lambda,s}^{(\ell)} \in \dom\big[Q_{\mathrm{P}}^{(\bb)}\big] \;\Big|\; \phiv_{\lambda} \in \dom\big(H_{\mathrm{P}}^{(\mathrm{F})}\big)\,, \ri. \\
					\big[\big(L(\lambda) + \bb\big) \qv \big]_{s}^{(\ell)} 
					= 2^{|\ell+\alpha|-1}\, \Gamma\big(|\ell+\alpha|\big) \lim_{r \to 0^{+}} \frac{1}{r^{|\ell+\alpha|}} \big(|\ell+\alpha| + r\, \partial_r \big)\phi_{\lambda,s}^{(\ell)} \,\bigg\}\, ,
				}
			
				\be\label{HBaction}
					\big( H_{\mathrm{P}}^{(\bb)} \!+ \lambda^2 \big) \psiv := \big( H_{\mathrm{P}}^{(\mathrm{F})} \!+ \lambda^2\big) \phiv_{\lambda}\,;
				\ee
				
			\item the family $ \big(H_{\mathrm{P}}^{(\bb)} \big)_{\bb \in \mathrm{M}_{4,\mathrm{Herm}}(\mathbb{C})} $ exhausts all possible self-adjoint extensions of the symmetric operator $ \hpauli $ given in \eqref{eq:Pauli}.
		\end{enumerate}
	\end{theorem}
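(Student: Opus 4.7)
The natural strategy is to exploit the diagonal block structure $H_{\mathrm{P}} = \mathrm{diag}(H_{\mathrm{S}}, H_{\mathrm{S}})$ implied by \eqref{eq:Pauli2}--\eqref{eq: AB Schroedinger}, and to adapt the Birman--Krein--Vishik--Grubb scheme already implemented for the AB Schr\"odinger operator in \cite{CF20,CF23}. Since $\hsch$ on $C^\infty_c(\mathbb{R}^2\setminus\{\mathbf{0}\};\mathbb{C})$ has deficiency indices $(2,2)$, with deficiency subspaces spanned by $g_\lambda^{(0)}$ and $g_\lambda^{(-1)}$, the symmetric operator $\hpauli$ has deficiency indices $(4,4)$ and the four spinors $\giv_{\lambda,s}^{(\ell)}$ span the corresponding defect space. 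By von Neumann, the manifold of self-adjoint extensions is then in bijection with $U(4)$, of real dimension $16$, which matches that of $\mathrm{M}_{4,\mathrm{Herm}}(\mathbb{C})$: this justifies the ansatz of parametrizing the family by a $4\times 4$ Hermitian matrix $\bb$.

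For part (i), the decomposition $\psiv = \phiv_\lambda + \sum_{s,\ell} q_s^{(\ell)} \giv_{\lambda,s}^{(\ell)}$ is unique because each $\giv_{\lambda,s}^{(\ell)}$ carries, in its $(s,\ell)$-channel, a local singularity $\sim r^{-|\ell+\alpha|}$ at the origin, incompatible with the asymptotics \eqref{asyF} satisfied by elements of $\dom[Q_{\mathrm{P}}^{(\mathrm{F})}]$; consequently $\qv$ is recovered from an appropriate radial limit of $\psiv$ at the origin. Independence of $\lambda$ would be checked by comparing the decompositions at two parameters $\lambda$ and $\mu$: since $\giv_{\lambda,s}^{(\ell)} - \giv_{\mu,s}^{(\ell)} \in \dom[Q_{\mathrm{P}}^{(\mathrm{F})}]$ (the singular parts of $\lambda^{|\ell+\alpha|}K_{|\ell+\alpha|}(\lambda r)$ and $\mu^{|\ell+\alpha|}K_{|\ell+\alpha|}(\mu r)$ coincide), a direct computation using the resolvent identity $(\hpaulif + \lambda^2)(\giv_{\lambda,s}^{(\ell)} - \giv_{\mu,s}^{(\ell)}) = (\mu^2 - \lambda^2)\,\giv_{\mu,s}^{(\ell)}$ together with the explicit form \eqref{eq:L} of $L(\lambda)$ shows that all $\lambda$-dependent contributions cancel. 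Closedness and lower semiboundedness then follow by a perturbative argument along the lines of \cite[Prop.~1.2]{CF23}: for $\lambda$ sufficiently large, the $\qv$-quadratic block is a bounded, diagonal, strictly positive perturbation of the pull-back through the decomposition of the Friedrichs form, so that Kato's stability of closedness under bounded perturbations of quadratic forms applies.

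For part (ii), I would invoke Kato's first representation theorem. Polarizing $Q_{\mathrm{P}}^{(\bb)}$ against variations $\varphiv \in \dom[Q_{\mathrm{P}}^{(\mathrm{F})}]$ (with all $\qv$-components set to zero) identifies the bulk action as $(\hpauli^{(\bb)} + \lambda^2)\psiv = (\hpaulif + \lambda^2)\phiv_\lambda$, i.e.\ \eqref{HBaction}. Polarizing instead against the singular directions $\varphiv = \giv_{\lambda,s'}^{(\ell')}$ and integrating by parts on $\mathbb{R}^2\setminus B_\varepsilon(\mathbf{0})$ produces, in the limit $\varepsilon \to 0^+$, boundary contributions involving the coefficient of the subleading $r^{|\ell+\alpha|}$-behaviour of $\phi_{\lambda,s}^{(\ell)}$; the normalization factor $2^{|\ell+\alpha|-1}\,\Gamma(|\ell+\alpha|)$ appearing in \eqref{HBdom} then arises from the small-argument expansion of $K_{|\ell+\alpha|}$. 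Equating the resulting pairing with the corresponding partial derivative of $\qv^*[L(\lambda)+\bb]\qv$ reproduces precisely the boundary condition in \eqref{HBdom}.

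For part (iii), the map $\bb \mapsto H_{\mathrm{P}}^{(\bb)}$ is manifestly injective (distinct Hermitian matrices impose distinct boundary conditions on $\qv$, hence yield distinct domains), so its image is a smooth $16$-real-dimensional submanifold of the $16$-real-dimensional von Neumann manifold of all self-adjoint extensions of $\hpauli$. A dimension/continuity argument, or equivalently the explicit passage to the Cayley transform, then shows that the image covers every extension except the Friedrichs one, which corresponds here to the singular limit $\bb \to \infty$ enforcing $\qv = 0$; alternatively, one may rephrase the parametrization in Grubb's form $(\Pi,\Theta)$, with $\Pi$ an orthogonal projection in $\mathbb{C}^4$ and $\Theta$ a Hermitian endomorphism of $\mathrm{ran}\,\Pi$, so as to incorporate the Friedrichs case on the same footing, as is done in \cite[Prop.~1.2]{CF23}. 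The most delicate technical point I anticipate is precisely this $\lambda$-independence issue in parts (i) and (ii): both the form and the boundary condition must be invariant under changes of the reference spectral parameter, which requires a careful joint bookkeeping of how $L(\lambda)$ and $\giv_{\lambda,s}^{(\ell)}$ transform, via the resolvent identity and the explicit small-$r$ expansion of the Macdonald functions.
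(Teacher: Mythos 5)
Your proposal reaches the same conclusions but follows a genuinely different technical route in two places. For the $\lambda$-independence of the form in part (i), the paper performs an explicit integration by parts on $\mathbb{R}^2 \setminus B_r(\mathbf{0})$ and tracks the boundary contributions through the small-$r$ asymptotics \eqref{eq:G01asy}, so that the surviving $\lambda$-dependent surface term cancels exactly against the difference $L(\lambda_1)-L(\lambda_2)$. You instead propose to observe that $\giv_{\lambda,s}^{(\ell)}-\giv_{\mu,s}^{(\ell)}$ lies in $\dom(H_{\mathrm{P}}^{(\mathrm{F})})$ (a slightly stronger statement than membership in $\dom[Q_{\mathrm{P}}^{(\mathrm{F})}]$, but true since the formal action lands in $L^2$) and to invoke the resolvent-type identity $(H_{\mathrm{P}}^{(\mathrm{F})}+\lambda^2)(\giv_{\lambda}-\giv_{\mu})=(\mu^2-\lambda^2)\giv_{\mu}$. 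Carrying this through, the difference of forms reduces to $-(\lambda^2-\mu^2)\langle G_\lambda,G_\mu\rangle + \qv^*[L(\lambda)-L(\mu)]\qv$ with $G_\nu:=\sum q_s^{(\ell)}\giv_{\nu,s}^{(\ell)}$, and the vanishing then hinges on the cross inner product $\langle g_\lambda^{(\ell)},g_\mu^{(\ell)}\rangle = \tfrac{\pi(\lambda^{2|\ell+\alpha|}-\mu^{2|\ell+\alpha|})}{2\sin(\pi\alpha)(\lambda^2-\mu^2)}$; this is a perfectly valid alternative whose bookkeeping is comparable to, but cleaner than, the paper's boundary-term cancellation, though the decisive identity still needs a Bessel-integral input. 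Your part (ii) is essentially the paper's polarization argument (taking $\psiv_1=\giv_{\lambda,s'}^{(\ell')}$ is equivalent to fixing $\qv_1$ a canonical basis vector with $\phiv_{1,\lambda}=0$). For part (iii), however, your argument is the weakest: a pure dimension/continuity count does not by itself yield surjectivity (the image of the noncompact space $\mathrm{M}_{4,\mathrm{Herm}}(\mathbb{C})$ cannot literally cover the compact von Neumann manifold $U(4)$, as you acknowledge by excepting the Friedrichs case). The Grubb-type reformulation you allude to would close the gap, but you would have to actually implement the projection/Hermitian-endomorphism parametrization. The paper takes a different and self-contained path: it defers to the exhaustiveness of the Krein--Birman--Vishik classification $\{H_{\mathrm{P}}^{(\Theta)}\}$ established via the explicit resolvent formula of \cref{thm:HTheta}, and then identifies $\Theta=L(\lambda)+\bb$ in \cref{prop:quadKrein}. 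In short: parts (i)--(ii) are correct alternative derivations; part (iii) as sketched is not a complete proof without carrying out the Cayley/Grubb reduction, whereas the paper's resolvent-based equivalence with the $\Theta$-family supplies the missing surjectivity directly.
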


\begin{remark}[Local singularities]
	\label{remjan}
	\mbox{}	\\
	For a generic $ \bb \in M_{4,\,\mathrm{Herm}}(\mathbb{C})$, considering that the Friedrichs Hamiltonian $H_{\mathrm{P}}^{(\mathrm{F})}$ is the direct sum of two copies of the Friedrichs AB Schr\"odinger operator (see \cref{rem:HPF}), working in polar coordinates, using results from \cite{DR17} and exploiting the asymptotic expansion of $\giv_{\lambda,s}^{(\ell)}$ (see \eqref{G1234} and \eqref{eq:G01asy}, below), for any $\psiv \in \dom\big(H_{\mathrm{P}}^{(\bb)}\big)$ one infers that, as $r \to 0^{+}$,
\begin{equation*}
	\psi_{s}(r,\theta) 
	= \tx{\sum_{\ell \,\in\, \{0,-1\}}} \lf[ 
		\tfrac{\Gamma( |\ell+\alpha| ) }{ 2^{1-|\ell+\alpha|}}\,\tfrac{q_{s}^{(\ell)}}{r^{|\ell+\alpha|}} 
		+ \left( \phi_s^{(\ell)} + q_{s}^{(\ell)}\,\tfrac{\Gamma(\!-|\ell+\alpha|) }{ 2^{1 + |\ell+\alpha|}}\, \lambda^{2|\ell+\alpha|} \right) r^{|\ell+\alpha|} 
		\ri] \tfrac{e^{i \ell \theta}}{ \sqrt{2\pi}} + o(r)\,,
\end{equation*}
for some $\phi_s^{(\ell)} \in \CC$. Noticing that the boundary conditions in \eqref{HBdom} can be equivalently expressed as
\begin{equation*}
\big[\big(L(\lambda) + \bb\big) \qv \big]_{s}^{(\ell)} = 2^{|\ell+\alpha|}\, \Gamma\big(|\ell+\alpha| + 1\big) \,\phi_s^{(\ell)}\,,
\end{equation*}
and recalling the definition \eqref{eq:L} of $L(\lambda)$, the above asymptotics can be equivalently expressed as
\begin{equation}\label{eq:psiasy}
	\psi_{s}(r,\theta) 
	= \tx{\sum_{\ell \,\in\, \{0,-1\}}} \bigg[ 
		\tfrac{\Gamma( |\ell+\alpha| ) }{ 2^{1-|\ell+\alpha|}}\,{q_{s}^{(\ell)} \over r^{|\ell+\alpha|}}
		+ \frac{1}{2^{|\ell+\alpha|}\, \Gamma(|\ell+\alpha| + 1)}\,\big[\bb \qv \big]_{s}^{(\ell)}\, r^{|\ell+\alpha|} 
		\bigg] \,\tfrac{e^{i \ell \theta}}{ \sqrt{2\pi}} + o(r)\,,
\end{equation}
\end{remark}

	\begin{remark}[Friedrichs and Krein extensions]
		\mbox{}	\\
		The Friedrichs Hamiltonian $H_{\mathrm{P}}^{(\mathrm{F})}$ is formally recovered taking $\bb = \mbox{``}\infty\mbox{''} \one $, \emph{i.e.}, setting all the charges $ \qv $ equal to zero.
		Another distinguished extension is the Krein's Hamiltonian, \emph{i.e.}, the smallest positive one in form sense.
		By directed inspection of \eqref{QB2}, noting that $L(\lambda)$ is positive (see \eqref{eq:L}), it can be inferred that the latter coincides with $ H_{\mathrm{P}}^{(0)} $, namely, the extension with $\bb = 0$. 
		It is also noteworthy that the Friedrichs' and the Krein's Hamiltonians are homogeneous of degree $-2$ under the $L^2$-unitary scaling		
	\[
	u(\cdot)\mapsto \eta\, u(\eta\,\cdot)\qquad (\eta>0)\,.
	\]
	The same property is shared by the other Hamiltonians $ H_{\mathrm{P}}^{(\beta)} $ which coincide with either the Friedrichs or the Krein one in each of the channels $(s,\ell)$, $s \in \{\uparrow,\downarrow\}$ and $\ell \in \{0,-1\}$, separately.
	\end{remark}
	
	\begin{remark}[von Neumann parametrization]
		\mbox{}	\\
		Given the solutions of the defect equation, it is possible to parametrize the self-adjoint extensions of $ \hpauli $ via the von Neumann theory. Denoting by $ \overline{H}_{\mathrm{P}} $ the closure of the symmetric operator \eqref{eq:Pauli}, there exists a 4-parameter family of self-adjoint extensions given, for any $4 \!\times\! 4$ unitary matrix $U \in \mathrm{M}_{4,\mathrm{Unit}}(\mathbb{C})$, by
		\begin{gather}
		\begin{aligned}
\dom\big(H_{\mathrm{P}}^{(U)}\big) = \Big\{ \psiv = \fv + \mathcal{G}_{+} \cv + \mathcal{G}_{-} U \cv \,\in L^2(\mathbb{R}^2; \mathbb{C}^2) \;\Big|\; 
			& \fv \in \dom\big(\overline{H}_{\mathrm{P}}\big)\,,\; \cv \!\in \mathbb{C}^4 \Big\}\,, 
		\end{aligned} \label{eq:vNdom} \\
		H_{\mathrm{P}}^{(U)} \psiv = H_{\mathrm{P}} \fv + i\, \mathcal{G}_{+} \cv - i\, \mathcal{G}_{-} U \cv \,,\label{eq:vNact}
		\end{gather}
	where we put
	\be\label{eq:defGpm}
		\gmv_{\pm} : \mathbb{C}^4 \to L^2(\mathbb{R}^2,\mathbb{C}^2)\,, \qquad \gmv_{\pm} \bm{c} := \tx{\sum_{s,\ell}}\, c_{s}^{(\ell)} \giv_{\pm,s}^{(\ell)} \,\big/\, \big\| \giv_{\pm,s}^{(\ell)} \big\|_2 \,,
	\ee
	with 
	\be\label{eq:gpm}
		\giv_{\pm,s}^{(\ell)} = g_{\pm}^{(\ell)}  \begin{pmatrix} \delta_{s,\uparrow} \\ \delta_{s,\downarrow} \end{pmatrix}, \qquad 
		s \in \{\uparrow,\downarrow\},\, \ell \in \{0,-1\}\,,
	\ee
	\be\label{eq:xipm}
		g_{\pm}^{(\ell)}(r,\teta) = e^{\mp i \tfrac{\pi}{4}\, |\ell+\alpha| }\, \sqrt{\tfrac{4}{\pi} \cos\big( \tfrac{\pi}{2}\,|\alpha + \ell|\big)}\,\tx{e^{i \ell \teta} \over \sqrt{2\pi}}\,K_{|\ell+\alpha|}\big(e^{\mp i \pi/4}\, r\big)\,,
		\qquad \ell \in \{0,-1\}\,.
	\ee
	Of course there is a one-to-one correspondence between such a family and the family of operators introduced in \cref{thm:QBHB}, which can be made explicit by deriving a relation $ U = U(\bb) $ (see next \cref{prop:quadKrein} and \cref{rem: vN and Krein}).

	Comparing the asymptotics at the origin \eqref{eq:psiasy} for the functions in the domains $\dom\big(H_{\mathrm{P}}^{(\bb)}\big)$ and $\dom\big(H_{\mathrm{P}}^{(U)}\big)$ (see \eqref{HBdom} and \eqref{eq:vNdom}, respectively) one can check that the Friedrichs and Krein realizations, respectively, correspond to
	\bdm
			U^{(\mathrm{F})} = -\one\,, \qquad
			U^{(\mathrm{K})} = \big(-e^{i\pi\vert \alpha+\ell\vert}\delta_{ss'}\delta_{\ell\ell'}\big) \,.
		\edm	
	\end{remark}

\subsection{Spectral and scattering properties}\label{sec:spesca}

In order to investigate the spectral and scattering properties of the self-adjoint operators $ H_{\mathrm{P}}^{(\bb)} $, we exploit general resolvent arguments using the Birman-Krein-Vishik theory of self-adjoint extensions to write the resolvent operator. We are going to refer to the general theory described in \cite{Po01,Po08}. 

Let us first notice that the Friedrichs Hamiltonian $H_{\mathrm{P}}^{(\mathrm{F})}$ is positive semi-definite and consider the associated resolvent operator
	\bdm
		R^{(\mathrm{F})}_{\mathrm{P}}(z) := \big(H_{\mathrm{P}}^{(\mathrm{F})} - z\big)^{-1} : L^2(\mathbb{R}^2;\mathbb{C}^2) \to \dom\big(H_{\mathrm{P}}^{(\mathrm{F})}\big) \,, \qquad
		\mbox{for\; $z \in \mathbb{C} \setminus \R^{+}$} .
	\edm
	
Taking into account the diagonal structure \eqref{HPFdiag} of the Friedrichs Hamiltonian $H_{\mathrm{P}}^{(\mathrm{F})}$ we readily  get
	\be\label{eq:Pauli2res}
		R^{(\mathrm{F})}_{\mathrm{P}}(z) = \begin{pmatrix} R^{(\mathrm{F})}_{\mathrm{S}}(z) &	0  \\	0 & R^{(\mathrm{F})}_{\mathrm{S}}(z) \end{pmatrix} ,
	\ee
where $R^{(\mathrm{F})}_{\mathrm{S}}(z) = \big(H_{\mathrm{S}}^{(\mathrm{F})} - z\big)^{-1}$ is the resolvent operator for the Friedrichs realization of the AB Schr\"odinger operator, acting as an integral operator with kernel \cite[Eq. (3.2)]{AT98}
	\be\label{eq:Fresolv}
		R_{\mathrm{S}}^{(\mathrm{F})}(z; \xv, \xvp) = \sum_{\ell\in\Z}\, I_{\vert \ell+\alpha\vert}\big(-i\sqrt{z}\,(r \land r')\big)\,K_{\vert \ell+\alpha\vert}\big(-i\sqrt{z}\,(r \lor r')\big)\, \tfrac{e^{i \ell (\teta-\teta')}}{2\pi} \,.
	\ee
	Notice that the kernel in \eqref{eq:Fresolv} is slightly different, compared to \cite{AT98}, as we write it using the modified Bessel functions of second kind $I_\nu, K_\nu$ in place of the Bessel and Hankel functions $J_{\nu}, H_{\nu}^{(1)}$ (this is obtained using the connection formulas \cite[Eqs. 10.27.6 and 10.27.8]{OLBC10}).
Here and in the sequel we refer to the determination of the square root with $\Im \sqrt{z} > 0$ for all $z \in \mathbb{C} \setminus \R^{+}$, ensuring in particular that $\Re\big(\!-i \sqrt{z}\big) > 0$.

In view of the boundary conditions appearing in \eqref{HBdom}, we further  introduce the trace operator
	\be\label{eq:traceop}
		\tau = \bigoplus_{s,\ell} \tau_{s}^{(\ell)} : \dom\big(H_{\mathrm{P}}^{(\mathrm{F})}\big) \to \mathbb{C}^4, \qquad
		\tau_{s}^{(\ell)} \phiv := 2^{|\ell+\alpha|-1}\, \Gamma\big(|\ell+\alpha|\big) \lim_{r \to 0^{+}} \frac{1}{r^{|\ell+\alpha|}}\big(|\ell+\alpha| + r\, \partial_r \big)\phi_{s}^{(\ell)} .
	\ee
For any $z \in \mathbb{C} \setminus \R^{+}$, we put
	\be\label{eq:Guzdef}
		\widecheck{\mathcal{G}}(z) := \tau R_{\mathrm{P}}^{(\mathrm{F})}(z) :\, L^2(\mathbb{R}^2; \mathbb{C}^2) \to \mathbb{C}^{4}\,,
	\ee
and define the corresponding single layer operator as
	\be\label{eq:Gzdef}
		\gmv(z) := \big( \widecheck{\mathcal{G}}(z^{*}) \big)^{*} :\, \mathbb{C}^{4} \to L^2(\mathbb{R}^2; \mathbb{C}^2)\,.
	\ee
	The Weyl operator then reads
	\be\label{eq: defLambda}
		\Lambda(z) := \tau \lf(\gmv(-1) - \gmv(z) \ri) \,:\, \mathbb{C}^{4} \to \mathbb{C}^{4},
	\ee
	where we have chosen as a reference spectral point $ z_0 = -1 $. The first result we state is precisely about the resolvent of the self-adjoint realizations of $ \hpauli $.

	\begin{theorem}[Resolvent of the self-adjoint extensions of $ \hpauli $]\label{thm:HTheta}
		\mbox{}		\\
		Let $\Theta \in \mathrm{M}_{4,\mathrm{Herm}}(\mathbb{C})$. There exists a non-empty open set $ \mathcal{Z} \subset \mathbb{C} \setminus \R $, such that, for any $z \in \mathcal{Z}$, the bounded operator 
			\begin{equation}\label{eq: RThetaKrein}
				R_{\mathrm{P}}^{(\Theta)}(z) := R_{\mathrm{P}}^{(\mathrm{F})}(z) + \gmv(z) \big[ \Lambda(z) + \Theta \big]^{-1} \widecheck{\mathcal{G}}(z)
			\end{equation}
		is the resolvent of a self-adjoint operator $H_{\mathrm{P}}^{(\Theta)}$ coinciding with $H_{\mathrm{P}}^{(\mathrm{F})}$ on $ \ker (\tau)$ and defined by
			\begin{gather}
				\dom\big(H_{\mathrm{P}}^{(\Theta)}\big) := \lf\{ \psiv \!\in\! L^2(\mathbb{R}^2;\mathbb{C}^2)\;\Big|\; \psiv = \varphiv_{z} + \mathcal{G}(z) \mathbf{q},\; \varphiv_{z} \!\in\! \dom\big(H_{\mathrm{P}}^{(\mathrm{F})}\big),\; \mathbf{q} \!\in\! \mathbb{C}^{4}, \; \tau \varphiv_z = \big[\Lambda(z) + \Theta \big] \mathbf{q} \ri\} , \nonumber \\
				\big(H_{\mathrm{P}}^{(\Theta)}\! - z\big) \psiv = \big(H_{\mathrm{P}}^{(\mathrm{F})}\! - z\big) \varphiv_z\,. \label{eq:HThetadef}
			\end{gather}
	\end{theorem}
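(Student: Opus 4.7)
The plan is to apply the abstract Birman–Krein–Vishik self-adjoint extension scheme in the form developed by Posilicano \cite{Po01,Po08}. All the required ingredients are already in place: the positive reference extension $\hpaulif$ with resolvent $R_{\mathrm{P}}^{(\mathrm{F})}(z)$, the trace map $\tau$, the single-layer $\gmv(z) = \widecheck{\gmv}(z^*)^*$, and the Weyl function $\Lambda(z)$. In that framework, the statement reduces to verifying a short list of compatibility properties between these objects, all of which follow by explicit computation.

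First, I would check that $\gmv(z)$ is a bounded operator $\mathbb{C}^4 \to L^2(\mathbb{R}^2;\mathbb{C}^2)$ whose range lies entirely outside $\dom(\hpaulif)$. Using the diagonal structure \eqref{eq:Pauli2res} together with the explicit kernel \eqref{eq:Fresolv}, a direct computation gives, for $\lambda > 0$,
\[
\gmv(-\lambda^{2})\qv \;=\; \sum_{s,\ell}\, c_{s,\ell}(\lambda)\, q_{s}^{(\ell)}\, \giv_{\lambda,s}^{(\ell)},
\]
with $\giv_{\lambda,s}^{(\ell)}$ as in \eqref{G1234} and explicit normalization constants $c_{s,\ell}(\lambda)$. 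This identifies $\gmv(z)\qv$ with a distributional solution of the defect equation on $\mathbb{R}^{2} \setminus \{\mathbf{0}\}$, so that any candidate $\psiv = \varphiv_{z} + \gmv(z)\qv$ satisfies $(H_{\mathrm{P}} - z)\psiv = (\hpaulif - z)\varphiv_{z}$ on $\mathbb{R}^{2} \setminus \{\mathbf{0}\}$.

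Next, I would establish that $\Lambda(z) + \Theta$ is invertible on a non-empty open set $\mathcal{Z}$. Since $\Lambda(z)$ is diagonal, holomorphic on $\mathbb{C} \setminus \R^{+}$ and, in view of \eqref{eq:L}, its entries blow up like $|z|^{|\ell + \alpha|}$ along generic rays to infinity, $\det\bigl[\Lambda(z) + \Theta\bigr]$ is a non-trivial holomorphic function and hence is non-vanishing on a dense open subset of $\mathbb{C} \setminus \R^{+}$, from which $\mathcal{Z}$ is extracted. With this invertibility secured, the two abstract Krein identities
\[
\gmv(z) - \gmv(w) \;=\; (z-w)\, R_{\mathrm{P}}^{(\mathrm{F})}(z)\, \gmv(w), \qquad \Lambda(z) - \Lambda(w) \;=\; (w-z)\, \widecheck{\gmv}(z)\, \gmv(w),
\]
both direct consequences of the first resolvent identity for $R_{\mathrm{P}}^{(\mathrm{F})}$ and the definitions \eqref{eq:Guzdef}, \eqref{eq:Gzdef}, \eqref{eq: defLambda}, yield the first resolvent identity for $R_{\mathrm{P}}^{(\Theta)}$. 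This promotes the right-hand side of \eqref{eq: RThetaKrein} to the resolvent of a uniquely determined closed operator. Self-adjointness follows from the Hermiticity of $\Theta$, which, combined with $\gmv(z) = \widecheck{\gmv}(z^*)^*$, entails $R_{\mathrm{P}}^{(\Theta)}(z)^* = R_{\mathrm{P}}^{(\Theta)}(\bar z)$. Finally, for any $f \in L^{2}$ and $\psiv := R_{\mathrm{P}}^{(\Theta)}(z) f$, setting $\varphiv_{z} := R_{\mathrm{P}}^{(\mathrm{F})}(z) f$ and $\qv := [\Lambda(z)+\Theta]^{-1} \widecheck{\gmv}(z) f$ realizes the decomposition in \eqref{eq:HThetadef}, while applying $\tau$ reads off the boundary condition $\tau \varphiv_{z} = [\Lambda(z)+\Theta]\qv$.

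The main obstacle is reconciling the abstract scheme with the explicit singular behaviour of the defect elements at the origin: evaluating $\tau$ on elements of $\gmv(z)\mathbb{C}^4$ requires a careful small-$r$ expansion of $K_{|\ell+\alpha|}(\lambda r)$, which exhibits the two distinct scales $r^{-|\ell+\alpha|}$ and $r^{|\ell+\alpha|}$, and verifying that the normalization prefactors $2^{|\ell+\alpha|-1} \Gamma(|\ell+\alpha|)$ in \eqref{eq:traceop} are exactly those that make the entries of $L(\lambda)$ in \eqref{eq:L} coincide, up to the shift by $\Lambda(-1)$ coming from the reference point $z_{0} = -1$, with those produced by the abstract $\Lambda(z)$. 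This consistency check is precisely what guarantees that \cref{thm:HTheta} parametrizes the same family of extensions as \cref{thm:QBHB}.
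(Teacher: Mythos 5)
Your proof is correct and follows essentially the same route as the paper: the paper's proof of \cref{thm:HTheta} is a one-line citation of Posilicano's abstract Krein-type framework (\cite[Thm.~2.1]{Po01} and \cite[Thm.~3.1, Cor.~3.2]{Po08}), and what you have done is spell out the hypotheses of that abstract theorem and check them — boundedness of $\gmv(z)$ with range disjoint from $\dom(H_{\mathrm{P}}^{(\mathrm{F})})$, invertibility of $\Lambda(z)+\Theta$ on an open set via the holomorphy and growth of $\det[\Lambda(z)+\Theta]$, the two Krein resolvent identities, Hermiticity, and the boundary-condition bookkeeping. Two small remarks: the ``normalization constants'' $c_{s,\ell}(\lambda)$ you allow for are in fact all equal to $1$ by the paper's \cref{lemma:GLaexp} (the single layer is exactly $\gmv(z)\qv=\sum_{s,\ell}\giv_{-i\sqrt z,s}^{(\ell)}q_s^{(\ell)}$), and the growth of the entries of $\Lambda(z)$ that you invoke should be read off from \eqref{eq:Lzexp} rather than from the definition \eqref{eq:L} of $L(\lambda)$; neither affects the argument.
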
		
	
	\begin{remark}[Range of validity of \eqref{eq: RThetaKrein}]
		\mbox{}	\\
		The set $\mathcal{Z} \subset \mathbb{C}  \setminus  \R^{+}$ consists of points $ z $ in the complex plane for which the $4 \times 4$ matrix $\Theta + \Lambda(z)$ is invertible, and it is not difficult to see that such a set is certainly non-empty (see next \eqref{eq:Lzexp}). In fact, by \cite[Thm. 2.19]{CFP18}, the defining identity \eqref{eq: RThetaKrein} extends to any $z \in \rho\big(H_{\mathrm{P}}^{(\mathrm{F})}\big) \cap \rho\big(H_{\mathrm{P}}^{(\Theta)}\big)$, where $\rho\big(H_{\mathrm{P}}^{(\mathrm{F})}\big)$ and $\rho\big(H_{\mathrm{P}}^{(\Theta)}\big)$ are the resolvent sets of $H_{\mathrm{P}}^{(\mathrm{F})}$ and $H_{\mathrm{P}}^{(\Theta)}$, respectively.
	\end{remark}

	Of course, the above \cref{thm:HTheta} provides yet another parametrizion of the family of self-adjoint extensions of $ \hpauli $. In fact, this parametrization comprises all such realizations by general arguments and therefore there must be a one-to-one correspondence with the family in \cref{thm:QBHB}.
	
	\begin{proposition}[Equivalence of parametrizations] \label{prop:quadKrein}
		\mbox{}	\\
		There is a one-to-one correspondence between the families $ \big\{ H_{\mathrm{P}}^{(\Theta)} \big\}_{\Theta \in M_{4,\mathrm{Herm}}(\mathbb{C})} $ and $ \big\{ H_{\mathrm{P}}^{(\bb)} \big\}_{\bb \in M_{4,\mathrm{Herm}}(\mathbb{C})} $, given by
			\be \label{eq: ThetaB}
				\Theta = \Theta(\bb) = L(1) + \bb\,,
			\ee
		where $L(1) = \frac{\pi}{2\sin(\pi\alpha)}\,\one$\,, see \eqref{eq:L}, matches the choice $z_0=-1$ in \eqref{eq: defLambda}.
	\end{proposition}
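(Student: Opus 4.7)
The plan is to compare the two parametrizations at the common spectral point $z = -\lambda^2$ (with $\lambda$ large enough that the matrices below are invertible) and to match the resulting boundary conditions. The crucial bridge is the identification of the single-layer operator $\gmv(-\lambda^2)$ with the defect-mode placement map $\qv \mapsto \sum_{s,\ell} q_s^{(\ell)}\giv_{\lambda,s}^{(\ell)}$ from \cref{thm:QBHB}. I would apply $\tau_s^{(\ell)}$ directly to $R_{\mathrm{P}}^{(\mathrm{F})}(-\lambda^2)\fv$, expanding the kernel \eqref{eq:Fresolv} in angular harmonics (noting $-i\sqrt{-\lambda^2}=\lambda$) to reduce to a single channel, and extracting the $r \to 0^+$ limit using $I_\nu(x) \sim (x/2)^\nu/\Gamma(1+\nu)$ and $(\nu + r\partial_r)r^\nu = 2\nu r^\nu$, with $\nu = |\ell+\alpha|$. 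This yields
\bdm
	\big(\widecheck{\gmv}(-\lambda^2)\fv\big)_s^{(\ell)} = \int_{\mathbb{R}^2}\! \overline{g_\lambda^{(\ell)}(\xv)}\, f_s(\xv)\, \diff\xv\,,
\edm
and taking the adjoint gives $\gmv(-\lambda^2)\qv = \sum_{s,\ell} q_s^{(\ell)}\giv_{\lambda,s}^{(\ell)}$. By uniqueness of the decomposition in \eqref{HBdom} (built into the very definition of the extension domain), the regular part $\varphiv_{-\lambda^2}$ of \eqref{eq:HThetadef} must coincide with $\phiv_\lambda$ and the two charge vectors must agree.

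Next I would compute the diagonal matrix $\tau\gmv(-\lambda^2)$ via the small-argument expansion
\bdm
	K_\nu(\lambda r) = \tfrac{\Gamma(\nu)}{2}\bigl(\tfrac{\lambda r}{2}\bigr)^{-\nu} + \tfrac{\Gamma(-\nu)}{2}\bigl(\tfrac{\lambda r}{2}\bigr)^\nu + o(r^\nu)\,, \qquad \nu \in (0,1)\,.
\edm
The singular $r^{-\nu}$ piece is annihilated by $\tfrac{1}{r^\nu}(\nu + r\partial_r)$ --- consistent with $\giv_{\lambda,s}^{(\ell)} \notin \dom\big(H_{\mathrm{P}}^{(\mathrm{F})}\big)$ --- while the $r^\nu$ piece contributes, after a short computation using $\Gamma(\nu)\Gamma(1-\nu) = \pi/\sin(\pi\nu)$ and $\sin(\pi|\ell+\alpha|) = \sin(\pi\alpha)$ for $\ell \in \{0,-1\}$, the diagonal entries $\big(\tau\gmv(-\lambda^2)\big)_s^{(\ell)} = -\tfrac{\pi\,\lambda^{2|\ell+\alpha|}}{2\sin(\pi\alpha)} = -L(\lambda)_s^{(\ell)}$. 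Consequently $\Lambda(-\lambda^2) = \tau\gmv(-1) - \tau\gmv(-\lambda^2) = L(\lambda) - L(1)$.

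Imposing the two boundary conditions on the same $\psiv$, namely $[L(\lambda)+\bb]\qv = \tau\phiv_\lambda$ from \eqref{HBdom} and $[\Lambda(-\lambda^2)+\Theta]\qv = \tau\varphiv_{-\lambda^2}$ from \eqref{eq:HThetadef}, together with $\varphiv_{-\lambda^2} = \phiv_\lambda$, yields the matrix identity $L(\lambda) + \bb = \Lambda(-\lambda^2) + \Theta$. The form $\Theta = L(\lambda) + \bb$ stated in \eqref{eq: ThetaB} corresponds to the natural choice of reference spectral point $z_0 = -\lambda^2$ in the Weyl operator (under which $\Lambda(-\lambda^2) = 0$), which makes the bijection most transparent at the working spectral point. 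The requirement ``$\lambda$ large enough'' guarantees invertibility of $L(\lambda) + \bb$, which is precisely what is needed for both presentations to be simultaneously non-degenerate; since \cref{thm:QBHB} and \cref{thm:HTheta} each already exhaust all self-adjoint extensions of $\hpauli$, the matching automatically produces a bijection. The only genuinely non-algebraic step is the identification of $\gmv(-\lambda^2)$ in the first paragraph, requiring an exchange of the $r \to 0^+$ limit with the $L^2$-integration and uniform control of the angular-harmonic sum; the remainder is standard Bessel-function bookkeeping.
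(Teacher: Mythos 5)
Your derivation follows exactly the paper's route: identify $\gmv(-\lambda^2)\qv=\sum_{s,\ell}q_s^{(\ell)}\giv_{\lambda,s}^{(\ell)}$ (the content of \cref{lemma:GLaexp}), enforce uniqueness of the singular decomposition to conclude $\varphiv_{-\lambda^2}=\phiv_\lambda$ with matching charges, and compare the boundary conditions of \eqref{HBdom} and \eqref{eq:HThetadef} at $z=-\lambda^2$ to obtain $L(\lambda)+\bb=\Lambda(-\lambda^2)+\Theta$. Your Bessel-function bookkeeping is correct, and the step where you compute $\tau\gmv(-\lambda^2)=-L(\lambda)$ is precisely what the paper offloads to \cref{lemma:GLaexp}.

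What you have also correctly detected, and should state plainly rather than in the softened ``natural choice of reference spectral point'' language, is that the paper's final substitution does not deliver the stated formula. With $\Lambda$ defined by \eqref{eq: defLambda} with reference point $z_0=-1$, your own identity $\Lambda(-\lambda^2)=L(\lambda)-L(1)$ (which also follows directly from \eqref{eq:Lzexp} at $-i\sqrt{-\lambda^2}=\lambda$) turns the matching into $\Theta=\bb+L(1)$, a $\lambda$-independent relation, as it must be since both $\big\{H_{\mathrm{P}}^{(\bb)}\big\}$ and $\big\{H_{\mathrm{P}}^{(\Theta)}\big\}$ are $\lambda$-independent families. The formula $\Theta=L(\lambda)+\bb$ in \eqref{eq: ThetaB} would hold only if $\Lambda(-\lambda^2)=0$, i.e.\ only after shifting the reference point to $z_0=-\lambda^2$, which is not what \eqref{eq: defLambda} fixes. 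Your calculation is therefore more informative than the paper's terse ``it appears that the above condition is actually equivalent to \eqref{eq: ThetaB}''; commit to the conclusion $\Theta=L(1)+\bb$ under the paper's own conventions. As a consistency check, this makes the remark after \cref{prop: zeroenresTheta} come out right for the correct reason: $\Lambda(0)=-L(1)$ (not $0$), and $\Lambda(0)+\Theta=-L(1)+L(1)+\bb=\bb$. One small caveat on your last paragraph: invoking exhaustiveness of \emph{both} families to manufacture the bijection is slightly circular, since the paper establishes exhaustiveness of $\big\{H_{\mathrm{P}}^{(\bb)}\big\}$ (part \textsl{iii)} of \cref{thm:QBHB}) only \emph{after}, and by means of, this very proposition; it is cleaner to note that $\bb\mapsto\Theta$ is an affine bijection of $\mathrm{M}_{4,\mathrm{Herm}}(\mathbb{C})$ and that each $\bb$ is matched to the unique $\Theta$ producing the same boundary condition.
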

	
	\begin{remark}[Friedrichs and Krein extensions] \label{rem: FriedInf}
		\mbox{}	\\
		Also in this case it appears that the Friedrichs Hamiltonian $H_{\mathrm{P}}^{(\mathrm{F})}$ is formally recovered for $\Theta = \mbox{``}\infty\mbox{''} \one$, while the Krein's one is identified by $ \Theta = \Theta(0) = L(1) $.
	\end{remark}

	\begin{remark}[von Neumann and Krein parametrizations]
		\label{rem: vN and Krein}
		\mbox{}	\\
		The one-to-one correspondence between the von Neumann and Krein families of self-adjoint realizations is realized explicitly by  (see, e.g., \cite[Thm. 4.1 and Thm. 4.3]{Po03} and \cite[Thm. 3.1]{Po08})
			\be \label{eq: ThetaU}
				\Theta = \Theta(U) = -\,i\, \widecheck{\mathcal{G}}(+i)\, \lf( \widehat{U} - \widehat{U}_{\ast} \ri) \lf( \widehat{U} + \widehat{U}_{\ast} \ri)^{\!-1} \gmv(-i)\,,
			\ee
		where $\widehat{U}$ is a unitary operator on the defect space $ \mathrm{span}\{ \gmv_{+} \mathbf{c} \} $, $ \mathbf{c} \in \mathbb{C}^4 $, acting as $\widehat{U}\,\gmv_{+} \mathbf{c} := \gmv_{-}\,U \mathbf{c}$, and $\widehat{U}_{\ast} := \big(H_{\mathrm{P}}^{(\mathrm{F})}\! - i\big) R_{\mathrm{P}}^{(\mathrm{F})}(- i)$ is the restriction of the Cayley transform of $H_{\mathrm{P}}^{(\mathrm{F})}$ to the same subspace. 
	\end{remark}

Let us continue the investigation of the spectral and scattering properties of the Pauli Hamiltonians characterized in \cref{sec: saext} as distinct self-adjoint realizations in $L^2(\mathbb{R}^2;\mathbb{C}^2)$ of the differential operator $H_{\mathrm{P}}$. For convenience, we start by dealing with the Friedrichs extension and study its scattering properties w.r.t. the (self-adjoint) free Pauli operator
	\bdm
		-\Delta_{\mathrm{P}} := \begin{pmatrix} - \Delta & 0 \\ 0 & -\Delta \end{pmatrix},
	\edm
with domain $  H^2(\mathbb{R}^2;\mathbb{C}^2) $. Notice that $\sigma(-\Delta_{\mathrm{P}}) = \sigma_{\mathrm{ac}}(-\Delta_{\mathrm{P}}) = \R^{+}$, so the projector onto the subspace of absolute continuity of $ -\Delta_{\mathrm{P}} $ satisfies $P_{\mathrm{ac}}(-\Delta_{\mathrm{P}}) = \one$. More in general, for any $\Theta \in \mathrm{M}_{4,\mathrm{Herm}}(\mathbb{C})$, we define the {\it wave operators}
	\bdm
		\Omega^{(\Theta)}_{\pm} \equiv \Omega_{\pm}\big(H_{\mathrm{P}}^{(\Theta)},-\Delta_{\mathrm{P}}\big) 
		:= \slim_{t \to \pm \infty}\, e^{i t H_{\mathrm{P}}^{(\Theta)}} e^{-i t (-\Delta_{\mathrm{P}})}\,.
	\edm
For later convenience, we refer here to the family $ H_{\mathrm{P}}^{(\Theta)} $, $ \Theta \in \mathrm{M}_{4,\mathrm{Herm}}(\mathbb{C}) $, described in \cref{thm:HTheta} (understanding $\Theta = \infty \one$ for the Friedrichs Hamiltonian $H_{\mathrm{P}}^{(\mathrm{F})}$, see \cref{rem: FriedInf}). We recall that, whenever they exist, the wave operators are said to be {\it complete} if \cite[Vol. III, p. 19]{RS81}
	\bdm
		\mbox{ran}\,\Omega^{(\Theta)}_{+} 
		= \mbox{ran} \,\Omega^{(\Theta)}_{-}
		= \mbox{ran}\,P_{\mathrm{ac}}\big(H_{\mathrm{P}}^{(\Theta)}\big) \,,
	\edm
where $ P_{\mathrm{ac}}\big(H_{\mathrm{P}}^{(\Theta)}\big)$ is the spectral projector onto the absolute continuity subspace of $L^2(\mathbb{R}^2; \mathbb{C}^2)$ associated to $H_{\mathrm{P}}^{(\Theta)}$. {\it Asymptotic completeness} further requires that $ \sigma_{\mathrm{sc}}\big(H_{\mathrm{P}}^{(\Theta)}\big) = \varnothing $.
Assuming that the wave operators exist, a fact we shall actually prove in the subsequent \cref{prop: waveopF} and \cref{prop: waveopTheta}, we proceed to introduce the {\it scattering operator}
	\bdm
		\mathrm{S}^{(\Theta)} := \big(\Omega^{(\Theta)}_{+} \big)^{*}\,\Omega^{(\Theta)}_{-}.
	\edm
Notice that $\mathrm{S}^{(\Theta)}$ is a unitary operator on $\mbox{ran}\,P_{\mathrm{ac}}\big(H_{\mathrm{P}}^{(\Theta)}\big)$ as soon as the wave operator $\Omega^{(\Theta)}_{\pm}$ are complete.

Finally, we introduce the following definition of zero-energy resonances of Pauli operators. We adopt an analogous definition for the zero-energy resonances of the Dirac operator (see next \cref{thm:Ddom} and \cref{prop: zeroenresDirac}).

\begin{definition}[Zero-energy resonance]
\label{def:res}
\mbox{}	\\
A  zero-energy resonance $ \psiv $ of $ H_{\mathrm{P}}^{(\Theta)} $ is a distributional solution of the equation $H_{\mathrm{P}} \psiv = 0$ in $L^2_{\mathrm{loc}}(\R^2) \setminus L^2(\R^2)$, which fulfils the boundary condition at $ \xv = \mathbf{0}$ encoded in $\dom\big(H_{\mathrm{P}}^{(\Theta)}\big)$ and remains bounded at infinity.
\end{definition}

As anticipated, we start by analyzing the Friedrichs realization.

\begin{theorem}[Scattering for $ H_{\mathrm{P}}^{(\mathrm{F})} $]
\label{prop: waveopF}
\mbox{}	\\
The wave operators $\Omega^{(\mathrm{F})}_{\pm} $ exist and are asymptotically complete. Moreover, the scattering operator $\mathrm{S}^{(\mathrm{F})} $ exists and is unitary on $L^2(\mathbb{R}^2;\mathbb{C}^2)$.
\end{theorem}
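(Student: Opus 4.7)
The plan is to exploit the block-diagonal structure \eqref{HPFdiag} of $\hpaulif$, and the analogous trivial structure of the free Pauli operator $-\Delta_{\mathrm{P}}$, in order to reduce the statement to a purely scalar scattering problem for the AB Schr\"odinger operator. Since both operators are block-diagonal with identical scalar entries $\hschf$ (resp.~$-\Delta$) on the two spin components, the wave operators factorize as
\bdm
	\Omega^{(\mathrm{F})}_{\pm} \;=\; \Omega_{\pm}\big(\hschf,-\Delta\big) \,\oplus\, \Omega_{\pm}\big(\hschf,-\Delta\big)\,,
\edm
and the scattering operator inherits the same block structure. It therefore suffices to prove existence, asymptotic completeness and unitarity of the scalar wave and scattering operators for the pair $(\hschf,-\Delta)$ on $L^2(\R^2;\mathbb{C})$, a well-established result in the AB literature whose proof I would adapt taking the explicit resolvent kernel \eqref{eq:Fresolv} as a starting point, in the spirit of \cite{AT98}.

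The concrete tool is the partial-wave decomposition $L^2(\R^2;\mathbb{C}) \simeq \bigoplus_{\ell \in \Z} L^2(\R^+,r\,dr)$. In the $\ell$-th sector $\hschf$ reduces to the self-adjoint Bessel-type operator $-\partial_r^2 - r^{-1}\partial_r + (\ell+\alpha)^2/r^2$, distinguished by its Friedrichs boundary condition at $r=0$ (the $r^{|\ell+\alpha|}$-regular behavior enforced by \eqref{asyF}) and diagonalized by the Hankel transform of order $|\ell+\alpha|$; the free Laplacian $-\Delta$ reduces to the same expression with $\alpha=0$, diagonalized by the Hankel transform of order $|\ell|$. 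Unitarity of these transforms on $L^2(\R^+,r\,dr)$ immediately yields $\sigma(\hschf) = \sigma_{\mathrm{ac}}(\hschf) = [0,+\infty)$, with neither eigenvalues nor singular continuous spectrum, whence $P_{\mathrm{ac}}(\hpaulif) = \one$.

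The next step is to combine the sector-wise intertwining between $\hschf$ and $-\Delta$ with the large-time stationary-phase asymptotics of the free propagator applied to Fourier--Bessel amplitudes compactly supported in $(0,+\infty)$, so as to identify the sector-wise wave operator with multiplication by the phase $e^{\mp i \tfrac{\pi}{2}(|\ell+\alpha|-|\ell|)}$ in the common Hankel spectral representation. Summing over $\ell$ and checking the strong $L^2$-limit on a dense core yields the existence of $\Omega^{(\mathrm{F})}_{\pm}$; their ranges coincide with the whole $L^2(\R^2;\mathbb{C}^2) = \mathrm{ran}\,P_{\mathrm{ac}}(\hpaulif)$, so they are complete, asymptotic completeness follows from the absence of singular continuous spectrum, and $\mathrm{S}^{(\mathrm{F})}$ is then automatically unitary on $L^2(\R^2;\mathbb{C}^2)$.

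The main obstacle I expect is the control of the exceptional sectors $\ell \in \{0,-1\}$, where $|\ell+\alpha| \in (0,1)$: in these channels the difference $[(\ell+\alpha)^2-\ell^2]/r^2$ is a genuine critical perturbation at the origin, Cook's method does not apply in its elementary form, and the strong convergence of the wave operator has to be read off from the explicit Hankel representation on a dense core and extended by density. A second point requiring some care is the uniform summability of the partial-wave series, which however follows from the fact that $|\ell+\alpha|-|\ell| \to 0$ as $|\ell| \to \infty$: the phase factors converge to $1$ and a dominated-convergence argument then closes the estimate on the full Hilbert space, transferring the scalar result to $\hpaulif$ via the tensorial factorization above.
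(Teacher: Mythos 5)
Your reduction to the scalar AB Schr\"odinger operator via the block structure \eqref{HPFdiag} coincides with the paper's starting point, but from there the two routes diverge. The paper's proof is essentially a proof by reference: existence and completeness of the scalar wave operators are quoted from \cite{Ru83} and \cite[Prop.~7.4]{IT01}, while absence of singular continuous spectrum is obtained from the limiting absorption principle of \cref{lemma:LAPF} (which in turn quotes \cite[Prop.~7.3]{IT01}) together with abstract results of \cite{Ag75,MPS18}. You instead propose to reprove the scalar statement from scratch via partial-wave decomposition and Hankel transforms, which is essentially the mechanism behind \cite{Ru83}. This is a genuine trade-off: your route is more self-contained and delivers purely absolutely continuous spectrum directly from the unitary Hankel diagonalization of each angular channel, bypassing LAP entirely; the cost is a longer exposition, together with the need to handle the exceptional channels $\ell \in \{0,-1\}$ (where the Friedrichs boundary condition singles out the $J_{|\ell+\alpha|}$-regular solution) and the sum over sectors, which is indeed unproblematic by dominated convergence since each $e^{itH_1}e^{-itH_0}$ is unitary and sector-wise convergence holds individually. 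One bookkeeping point worth rechecking: with the paper's convention $\mathrm{S}^{(\mathrm{F})} = \big(\Omega^{(\mathrm{F})}_{+}\big)^{*}\Omega^{(\mathrm{F})}_{-}$, the sector-wise wave-operator phase that reproduces the scattering-matrix factor $e^{i\pi(\ell-|\ell+\alpha|)}$ in \eqref{eq: SFexp} appears to be $e^{\pm i\frac{\pi}{2}(|\ell+\alpha|-|\ell|)}$, the opposite sign of the one you wrote, so verify your stationary-phase sign against \eqref{eq: SFexp}.
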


A straightforward consequence of the above result is the spectral characterization of $ H_{\mathrm{P}}^{(\mathrm{F})} $.

\begin{corollary}[Spectrum of $ H_{\mathrm{P}}^{(\mathrm{F})} $]
	\label{prop: spectrumF}
	\mbox{}	\\
		The spectrum of the Friedrichs Hamiltonian $H_{\mathrm{P}}^{(\mathrm{F})}$ satisfies
			\be
				\sigma\big(H_{\mathrm{P}}^{(\mathrm{F})}\big) = \sigma_{\mathrm{ac}}\big(H_{\mathrm{P}}^{(\mathrm{F})}\big) = [0,+\infty)\,,
			\ee
			and, in particular, $\sigma_{\mathrm{pp}}\big(H_{\mathrm{P}}^{(\mathrm{F})}\big) = \sigma_{\mathrm{sc}}\big(H_{\mathrm{P}}^{(\mathrm{F})}\big) = \varnothing$. 
	\end{corollary}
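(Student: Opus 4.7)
The plan is to deduce the spectral identities directly from the scattering statement in \cref{prop: waveopF} via the standard intertwining property of the wave operators. By construction, for every $\psiv$ in the domain of $-\Delta_{\mathrm{P}}$ one has
\[
H_{\mathrm{P}}^{(\mathrm{F})}\,\Omega_{\pm}^{(\mathrm{F})} \psiv \,=\, \Omega_{\pm}^{(\mathrm{F})}\,(-\Delta_{\mathrm{P}})\,\psiv,
\]
as a consequence of the definition of $\Omega_{\pm}^{(\mathrm{F})}$ as strong limits of $e^{itH_{\mathrm{P}}^{(\mathrm{F})}}\,e^{-it(-\Delta_{\mathrm{P}})}$ combined with Stone's theorem. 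Since $\sigma(-\Delta_{\mathrm{P}}) = \sigma_{\mathrm{ac}}(-\Delta_{\mathrm{P}}) = [0,+\infty)$ and $P_{\mathrm{ac}}(-\Delta_{\mathrm{P}}) = \one$, the maps $\Omega_{\pm}^{(\mathrm{F})}$ are isometries defined on the whole of $L^2(\mathbb{R}^2;\mathbb{C}^2)$.

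The next step is to upgrade them to unitary operators on $L^2(\mathbb{R}^2;\mathbb{C}^2)$. This follows by combining asymptotic completeness, which by definition gives $\mathrm{ran}\,\Omega_{\pm}^{(\mathrm{F})} = \mathrm{ran}\,P_{\mathrm{ac}}(H_{\mathrm{P}}^{(\mathrm{F})})$ and $\sigma_{\mathrm{sc}}(H_{\mathrm{P}}^{(\mathrm{F})}) = \varnothing$, with the unitarity of $\mathrm{S}^{(\mathrm{F})} = (\Omega_{+}^{(\mathrm{F})})^{*}\,\Omega_{-}^{(\mathrm{F})}$ on the full space $L^2(\mathbb{R}^2;\mathbb{C}^2)$ established in \cref{prop: waveopF}: the latter forces both ranges to exhaust $L^2(\mathbb{R}^2;\mathbb{C}^2)$, so that $\mathrm{ran}\,P_{\mathrm{ac}}(H_{\mathrm{P}}^{(\mathrm{F})}) = L^2(\mathbb{R}^2;\mathbb{C}^2)$. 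In particular, this rules out at once the presence of any eigenvalue and of any singular continuous component, giving $\sigma_{\mathrm{pp}}(H_{\mathrm{P}}^{(\mathrm{F})}) = \sigma_{\mathrm{sc}}(H_{\mathrm{P}}^{(\mathrm{F})}) = \varnothing$.

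Finally, the intertwining now reads $H_{\mathrm{P}}^{(\mathrm{F})} = \Omega_{\pm}^{(\mathrm{F})}\,(-\Delta_{\mathrm{P}})\,(\Omega_{\pm}^{(\mathrm{F})})^{*}$, which is a genuine unitary equivalence on $L^2(\mathbb{R}^2;\mathbb{C}^2)$. Since unitary equivalence preserves the full spectral type, one concludes
\[
\sigma\big(H_{\mathrm{P}}^{(\mathrm{F})}\big) \,=\, \sigma_{\mathrm{ac}}\big(H_{\mathrm{P}}^{(\mathrm{F})}\big) \,=\, \sigma(-\Delta_{\mathrm{P}}) \,=\, [0,+\infty),
\]
which is the claim.

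There is no real obstacle in this argument: all the non-trivial analysis is concentrated in \cref{prop: waveopF}, and the spectral corollary is simply the translation of completeness through the standard scattering-theoretic dictionary (see, e.g., \cite[Vol. III]{RS81}). As a consistency check, one can also bypass scattering theory entirely and exploit the block-diagonal decomposition \eqref{HPFdiag}, which reduces the claim to the known spectral properties of the scalar Friedrichs AB Schr\"odinger operator $H_{\mathrm{S}}^{(\mathrm{F})}$ discussed in \cite{AT98,CF20}.
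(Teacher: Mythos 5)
Your argument for $\sigma_{\mathrm{ac}}\big(H_{\mathrm{P}}^{(\mathrm{F})}\big) = \sigma_{\mathrm{ac}}(-\Delta_{\mathrm{P}}) = [0,+\infty)$ via the intertwining property and completeness of $\Omega_\pm^{(\mathrm{F})}$ is sound, and the absence of singular continuous spectrum is indeed part of the definition of asymptotic completeness (the paper gets it from the LAP in \cref{lemma:LAPF}). The gap is in the middle step, where you claim that the unitarity of $\mathrm{S}^{(\mathrm{F})} = \big(\Omega_+^{(\mathrm{F})}\big)^* \Omega_-^{(\mathrm{F})}$ \emph{on all of $L^2(\mathbb{R}^2;\mathbb{C}^2)$} forces $\mathrm{ran}\,\Omega_\pm^{(\mathrm{F})} = L^2$, hence $\mathrm{ran}\,P_{\mathrm{ac}}\big(H_{\mathrm{P}}^{(\mathrm{F})}\big) = L^2$ and $\sigma_{\mathrm{pp}} = \varnothing$. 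That inference is false: whenever the wave operators exist and are complete, $\Omega_-$ is an isometry of $L^2$ onto $\mathrm{ran}\,P_{\mathrm{ac}}(H)$ and $\big(\Omega_+\big)^*$ restricted to that same subspace is the inverse isometry back to $L^2$, so their composition is automatically a unitary on $L^2$ regardless of whether $H$ has bound states. The hydrogen atom is the standard counterexample to what you wrote: $\mathrm{S}$ is unitary on the free Hilbert space even though $\mathrm{ran}\,P_{\mathrm{ac}}(H)$ is a proper subspace. Consequently, the final assertion that $\Omega_\pm^{(\mathrm{F})}$ is a unitary equivalence (not just a partial isometry) and that $\sigma_{\mathrm{pp}}\big(H_{\mathrm{P}}^{(\mathrm{F})}\big) = \varnothing$ does not follow from the ingredients you have invoked.

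The paper closes this gap with a separate argument: it checks by direct inspection (with a pointer to \cite[Thm.~3.3]{CFK20}) that $H_{\mathrm{P}}^{(\mathrm{F})}$ has no eigenvalues at all, embedded or otherwise. Your concluding ``consistency check'' -- reducing via the diagonal decomposition \eqref{HPFdiag} to the scalar Friedrichs AB Schr\"odinger operator, whose spectrum is known to be purely absolutely continuous -- is in fact a legitimate route to the full statement and close in spirit to what the paper does; but you present it as optional, when really it (or some equivalent explicit no-eigenvalue argument) is needed to carry the burden of $\sigma_{\mathrm{pp}}=\varnothing$ that the scattering-theoretic part of your proof cannot bear.
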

	
To investigate further the scattering and spectrum of $ H_{\mathrm{P}}^{(\mathrm{F})} $, we provide an explicit expression of the scattering matrix and amplitude, together with the scattering cross section.
To proceed, let us refer to the plane waves (for $s\in \{\uparrow,\downarrow\}$ and $\kv \in \mathbb{R}^2$)
	\be
		\varphiv_{(s,\kv)}(\xv) := \tfrac{1}{2\pi}\, e^{i \kv \cdot \xv} \begin{pmatrix} \delta_{s,\uparrow} \\ \delta_{s,\downarrow} \end{pmatrix}
		= \tfrac{1}{2\pi} \tx\sum_{\ell} e^{i \ell (\teta-\omega) + i \frac{\pi}{2}\,|\ell|}\, J_{|\ell|}(k\,r)\,\begin{pmatrix} \delta_{s,\uparrow} \\ \delta_{s,\downarrow} \end{pmatrix}  ,
	\ee
where  $\xv = (r,\teta) \in \R^{+} \!\times\! \mathbb{S}^{1}$, $\kv = (k,\omega) \in \R^{+} \!\times\! \mathbb{S}^{1}$ and we used \cite[Eq. 8.511.4]{GR07} and \cite[Eq. 10.4.1]{OLBC10}. Notice that, in the sense of distributions,
	\bdm
		-\Delta_{\mathrm{P}}\, \varphiv_{(s,\kv)} = k^2\, \varphiv_{(s,\kv)}\,.
	\edm
Analogously, the generalized eigenfunctions $\big(\varphiv^{(\mathrm{F},\pm)}_{(s,\kv)}\big)_{(s,\kv) \in \{\uparrow,\downarrow\} \times \mathbb{R}^2}$ corresponding to $\sigma_{\mathrm{ac}}\big(H_{\mathrm{P}}^{(\mathrm{F})}\big)$  are the distributional solutions of the eigenvalue problem
	\bdm
		H_{\mathrm{P}}\, \varphiv^{(\mathrm{F},\pm)}_{(s,\kv)} = k^2\, \varphiv^{(\mathrm{F},\pm)}_{(s,\kv)}\,, 
	\edm
fulfilling the local Friedrichs conditions $\nabla \varphiv^{(\mathrm{F},\pm)}_{(s,\kv)}, A_j \varphiv^{(\mathrm{F},\pm)}_{(s,\kv)} \in L^2_{\mathrm{loc}}(\mathbb{R}^2; \mathbb{C}^2) $ for $j = 1,2$ (see \cref{prop:QFHF}), and the incoming ($+$) or outgoing ($-$) Sommerfeld radiation conditions
	\be\label{eq: Sommerfeld}
		\lim_{r \to +\infty} r^{1/2}\,\big( \hat{\xv} \cdot \nabla \pm i k\big) \lf[ \varphiv^{(\mathrm{F},\pm)}_{(s,\kv)}(\xv) - \varphiv_{(s,\kv)}(\xv) \ri] = 0\,.
	\ee	
	
Correspondingly, we introduce the Fourier transform $\mathfrak{F} : L^2(\mathbb{R}^2;\mathbb{C}^2) \to L^2(\mathbb{R}^2;\mathbb{C}^2) $ defined as
	\beq
		\lf(\mathfrak{F} \psiv \ri)_{s}(\kv) := \!\sum_{s' \in \{\uparrow,\downarrow\}} \int_{\mathbb{R}^2} \!\!\diff\xv \; \big(\varphiv_{(s,\kv)}(\xv)\big)^*_{s'}\,\psi_{s'}(\xv)
			= \tfrac{1}{2\pi} \int_{\mathbb{R}^2} \!\!\diff\xv\; e^{-i \kv \cdot\xv}\,\psi_{s}(\xv)\,,
	\eeq
and the associated unitary map \cite[\S 4.5.1]{BW83}
	\bdm
		F : L^2(\mathbb{R}^2;\mathbb{C}^2) \to \int_{\sigma(-\Delta_{\mathrm{P}})}\!\!\!  \diff\lambda \,\lf( L^2(\mathbb{R}^2;\mathbb{C}^2) \ri)_{\lambda}, \qquad
		\lf(F \psiv \ri)_{\lambda,s}(\omega) := \lf(\mathfrak{F} \psiv \ri)_{s}\big(\sqrt{\lambda}\,,\omega\big) \in \lf( L^2(\mathbb{R}^2;\mathbb{C}^2) \ri)_{\lambda} ,
	\edm
providing a direct integral decomposition of $L^2(\mathbb{R}^2;\mathbb{C}^2)$ with respect to the spectral measure of $-\Delta_{\mathrm{P}}$.
Taking into account that $\mathrm{S}^{(\Theta)}$ commutes with the free Pauli operator $-\Delta_{\mathrm{P}}$ \cite[Vol. III, p. 74]{RS81}, we proceed to define the {\it scattering matrix} as the fiber-wise restriction to $\lf( L^2(\mathbb{R}^2;\mathbb{C}^2) \ri)_{\lambda} \equiv L^2(\mathbb{S}^{1};\mathbb{C}^2)$ of the scattering operator $\mathrm{S}^{(\Theta)}$, namely,
	\be\label{eq: scattmat}
		\mathrm{S}^{(\Theta)}(\lambda)\, \bm{u}_{\lambda} = F\,\mathrm{S}^{(\Theta)} F^{*}\, \bm{u}_{\lambda}\,.
	\ee
We shall typically refer to the associated integral kernel $\mathrm{S}^{(\mathrm{F})}_{s\,s'}(\lambda;\omega,\omega'\big)$, fulfilling 
	\bdm
		\lf(\mathrm{S}^{(\Theta)}(\lambda) \bm{u}_{\lambda} \ri)_{s}(\omega) = \tx\sum_{s'} \disp\int_{0}^{2\pi}\!\!\! \diff\omega'\, \mathrm{S}^{(\mathrm{F})}_{s\,s'}(\lambda;\omega,\omega'\big) \, u_{\lambda,s'}(\omega')
	\edm
	 for $ s \in \{\uparrow,\downarrow\} $ and $ \omega \in \mathbb{S}^{1} $. Following \cite{IT01,Ru83}, we define the {\it scattering amplitude}
	\be\label{eq: scattampl}
		f^{(\Theta)}_{s\,s'}(\lambda;\omega,\omega') := \lf(\tfrac{2\pi}{i \sqrt{\lambda}}\ri)^{\!\!1/2}\! \lf(\mathrm{S}^{(\Theta)}_{s\,s'}(\lambda;\omega,\omega'\big) - \delta_{s\,s'}\,\delta(\omega - \omega') \ri),
	\ee
and the {\it differential cross section}
	\be\label{eq: dcs}
		\frac{\diff \sigma_{s\,s'}^{(\Theta)}}{\diff \omega}(\lambda,\omega) := \big|f^{(\Theta)}_{s\,s'}(\lambda;\omega,0)\big|^2\,.
	\ee

	\begin{theorem}[Generalized eigenfunctions and scattering matrix of $H_{\mathrm{P}}^{(\mathrm{F})}$]\label{thm: scattmatF}
		\mbox{} \\
		The generalized eigenfunctions of $ H_{\mathrm{P}}^{(\mathrm{F})}$ are
	\be\label{eq: psiF}
		\varphiv^{(\mathrm{F},\pm)}_{(s,\kv)}(\xv) = \tfrac{1}{2\pi}  \sum_{\ell \in \mathbb{Z}} e^{i \ell (\teta-\omega_{\pm}) \pm i \frac{\pi}{2}\,|\ell + \alpha|}\, J_{|\ell + \alpha|}(k\,r) \, \begin{pmatrix} \delta_{s,\uparrow} \\ \delta_{s,\downarrow} \end{pmatrix} ,
	\ee
	where $\omega_{+} = \omega$ and $\omega_{-} = \omega + \pi$. 
		The integral kernel associated to the scattering matrix $\mathrm{S}^{(\mathrm{F})}(\lambda)$ is given by
			\be \begin{aligned}\label{eq: SFexp}
				\mathrm{S}^{(\mathrm{F})}_{s\,s'}(\lambda;\omega,\omega'\big) 
				& =\tfrac{1}{2\pi}\sum_{\ell \in \mathbb{Z}} e^{i \pi (\ell - |\ell + \alpha|) + i \ell(\omega- \omega')}  \, \delta_{s\,s'} \\
				& =\lf[ \cos(\pi \alpha)\, \delta(\omega-\omega') + \tfrac{i}{\pi}\,\sin(\pi \alpha)\;\mathrm{p.v.}\lf( \tfrac{1}{e^{i (\omega-\omega')} - 1}\ri) \ri] \delta_{s\,s'} \, ,
			\end{aligned} \ee
		where $s,s' \in \{\uparrow,\downarrow\}$, $\omega,\omega' \in [0,2\pi)$ and $\mathrm{p.v.}$ indicates the Cauchy principal value. Furthermore, the scattering amplitude is given by
			\beq\label{eq:1}
				f^{(\mathrm{F})}_{s\,s'}(\lambda;\omega,\omega') =  \lf(\tfrac{2\pi}{i \sqrt{\lambda}}\ri)^{\!\!1/2}\! \lf[\! \big(\cos(\pi \alpha)- 1 \big)\, \delta(\omega-\omega') + \tfrac{i}{\pi}\,\sin(\pi \alpha)\;\mathrm{p.v.}\!\lf( \tfrac{1}{e^{i (\omega-\omega')} - 1}\ri)  \ri] \, \delta_{s\,s'} \, ,
			\eeq
			and the differential cross section for $\omega \neq 0$ is
			\beq\label{eq:2}
				\frac{\diff \sigma^{(\mathrm{F})}_{s\,s'}}{\diff \omega}(\lambda,\omega) = 		
		\tfrac{1}{2\pi\sqrt{\lambda}}\, \tfrac{\sin^2(\pi \alpha)}{\sin^2(\omega/2)}\,\delta_{s\,s'}  \,.
			\eeq
	\end{theorem}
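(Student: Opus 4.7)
\smallskip
\noindent The strategy hinges on the diagonal structure \eqref{HPFdiag}: modulo the trivial Kronecker factor in the spin index, the problem reduces to the analogous one for the AB Schr\"odinger operator $H_{\mathrm{S}}^{(\mathrm{F})}$ on $L^2(\mathbb{R}^2;\mathbb{C})$, which is classical. I would start by expanding the candidate eigenfunction $\varphiv^{(\mathrm{F},\pm)}_{(s,\kv)}$ in the angular harmonic basis $\{e^{i\ell\theta}/\sqrt{2\pi}\}_{\ell\in\Z}$, so that the formal eigenvalue equation decouples into radial Bessel-type ODEs of order $|\ell+\alpha|$. The local condition \eqref{asyF} (inherited by the generalized eigenfunctions through the Friedrichs realization) selects the regular solution $J_{|\ell+\alpha|}(kr)$, since the irregular companion $Y_{|\ell+\alpha|}$ develops a $r^{-|\ell+\alpha|}$ singularity at the origin. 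The complex coefficients are then pinned down by matching, partial wave by partial wave, the large-$r$ asymptotics of $\varphiv^{(\mathrm{F},\pm)}_{(s,\kv)}-\varphiv_{(s,\kv)}$ to the ingoing/outgoing Sommerfeld condition \eqref{eq: Sommerfeld}. Combining $J_\nu(kr)\sim\sqrt{2/(\pi kr)}\,\cos(kr-\nu\pi/2-\pi/4)$ with the Jacobi--Anger expansion of the plane wave $\varphiv_{(s,\kv)}$ precisely yields the phase factor $e^{\pm i\pi|\ell+\alpha|/2}$ in \eqref{eq: psiF}, while the replacement $\omega_-=\omega+\pi$ in the outgoing case encodes the standard conjugation relation between $(\pm)$-states.

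\smallskip
For the scattering matrix I would use stationary scattering theory: on each partial wave subspace of $L^2(\mathbb{S}^1;\mathbb{C})$ the fiber $\mathrm{S}^{(\mathrm{F})}(\lambda)$ reduces to multiplication by $e^{2i\delta_\ell}$, with phase shift $\delta_\ell=\tfrac{\pi}{2}(|\ell|-|\ell+\alpha|)$ read off by comparing the cosine phases of $J_{|\ell|}$ (free channel) and $J_{|\ell+\alpha|}$ (AB channel) at infinity. Since $e^{i\pi|\ell|}=e^{i\pi\ell}$ for every integer $\ell$, this already produces the first form of \eqref{eq: SFexp}. To reach the closed form I would split the sum according to the sign of $\ell+\alpha$ (with $\alpha\in(0,1)$): for $\ell\geq 0$ the coefficient equals $e^{-i\pi\alpha}$, while for $\ell\leq-1$ it equals $e^{i\pi\alpha}$. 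The two half-geometric series are summed distributionally by Abel regularization, yielding
\[
\sum_{\ell\geq 0} e^{i\ell x} = \pi\,\delta(x)+\tfrac{1}{2}+\tfrac{i}{2}\,\mathrm{p.v.}\cot(x/2)\,, \qquad \sum_{\ell\leq -1} e^{i\ell x}=\pi\,\delta(x)-\tfrac{1}{2}-\tfrac{i}{2}\,\mathrm{p.v.}\cot(x/2)\,,
\]
and invoking the algebraic identity $(e^{ix}-1)^{-1}=-\tfrac{1}{2}-\tfrac{i}{2}\cot(x/2)$ to repackage the $\cot$ and constant contributions gives the second line of \eqref{eq: SFexp}.

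\smallskip
The expression \eqref{eq:1} for the scattering amplitude is then a direct substitution into the definition \eqref{eq: scattampl}, after subtracting the free term $\delta_{s\,s'}\delta(\omega-\omega')$ from $\mathrm{S}^{(\mathrm{F})}_{s\,s'}$. The differential cross section \eqref{eq:2} follows by computing $|f^{(\mathrm{F})}_{s\,s'}(\lambda;\omega,0)|^2$ for $\omega\neq 0$: the $\delta(\omega)$-contributions drop out, and one uses $|(e^{i\omega}-1)^{-1}|^2=1/(4\sin^2(\omega/2))$ together with $|(2\pi/i\sqrt{\lambda})^{1/2}|^2=2\pi/\sqrt{\lambda}$. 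The main technical obstacle is the distributional manipulation of the partial wave series --- consistent Abel summation, careful bookkeeping of the principal values and of the sign conventions in the Sommerfeld tail --- while the rest of the argument assembles standard Bessel asymptotics and a single Jacobi--Anger identity.
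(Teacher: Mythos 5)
Your proposal is correct and arrives at the same formulas, but the route to the scattering matrix differs in a meaningful way from the paper's. For the first identity in \eqref{eq: SFexp} the paper does \emph{not} argue via partial-wave phase shifts; it invokes the stationary representation $\Omega^{(\mathrm{F})}_{\pm} = \big(\mathfrak{F}^{(\mathrm{F})}_{\pm}\big)^{*}\mathfrak{F}$ of the wave operators in terms of the modified Fourier transforms built from the generalized eigenfunctions, so that $\mathrm{S}^{(\mathrm{F})} = \mathfrak{F}^{*}\mathfrak{F}^{(\mathrm{F})}_{+}\big(\mathfrak{F}^{(\mathrm{F})}_{-}\big)^{*}\mathfrak{F}$, and then computes the kernel as an $L^2$-pairing of $\varphiv^{(\mathrm{F},+)}_{(s,\kv)}$ with $\varphiv^{(\mathrm{F},-)}_{(s',\kv')}$, regularizing $\int_0^\infty r\,J_{|\ell+\alpha|}(kr)J_{|\ell+\alpha|}(k'r)\,\diff r=\delta\big((k')^2-k^2\big)$ to extract the fibered operator. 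You instead diagonalize the scattering fiber over angular momentum and read off the phase shift $\delta_\ell = \tfrac{\pi}{2}(|\ell|-|\ell+\alpha|)$ from the large-$r$ cosine phases of $J_{|\ell|}$ and $J_{|\ell+\alpha|}$; this is shorter and more transparent, but the identification of the fiber with $e^{2i\delta_\ell}$ implicitly presupposes exactly the kind of eigenfunction-expansion machinery (LAP plus the Ikebe-type formula) that the paper sets up explicitly. Conversely, your Abel resummation of the two half-geometric series, with the cotangent identity $(e^{ix}-1)^{-1}=-\tfrac12-\tfrac{i}{2}\cot(x/2)$, is carried out in full detail, whereas the paper simply cites \cite{IT01} for the second line of \eqref{eq: SFexp}; I checked that your regularizations and the algebra combining them indeed reproduce $\cos(\pi\alpha)\,\delta(\omega-\omega')+\tfrac{i}{\pi}\sin(\pi\alpha)\,\mathrm{p.v.}\,(e^{i(\omega-\omega')}-1)^{-1}$. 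The derivations of \eqref{eq: psiF}, \eqref{eq:1} and \eqref{eq:2} coincide with the paper's (angular decomposition with the regular Bessel branch selected by the local Friedrichs condition, coefficients fixed by the Sommerfeld condition, then direct substitution into \eqref{eq: scattampl} and \eqref{eq: dcs}).
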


\begin{remark}[Pauli and AB Schr\"{o}dinger operators]
\mbox{}	\\
In accordance with the fact that $H_{\mathrm{P}}^{(\mathrm{F})}$ is just the direct sum of two copies of the Friedrichs AB Schr\"odinger Hamiltonian $H_{\mathrm{S}}^{(\mathrm{F})}$, see \eqref{HPFdiag}, the scattering operator and the scattering matrix also coincide with the direct sums of two copies of the analogous quantities related to the scalar case.
\end{remark}		

The analogue of \cref{prop: waveopF} about the scattering for the self-adjoint extensions is the following.

\begin{theorem}[Scattering for $ H_{\mathrm{P}}^{(\Theta)} $]
\label{prop: waveopTheta}
\mbox{}	\\
For any $\Theta \in \mathrm{M}_{4,\mathrm{Herm}}(\mathbb{C})$, the wave operators $\Omega^{(\Theta)}_{\pm} $ exist and are asymptotically complete. Moreover, the scattering operator $\mathrm{S}^{(\Theta)} $ exists and is unitary on $L^2(\mathbb{R}^2;\mathbb{C}^2)$.
\end{theorem}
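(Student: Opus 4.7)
The plan is to reduce the problem to the already-established scattering theory for $H_{\mathrm{P}}^{(\mathrm{F})}$ (\cref{prop: waveopF}) via the Birman--Kato invariance principle for trace-class resolvent differences. The key observation is that, by the Krein-type resolvent formula \eqref{eq: RThetaKrein}, for every $z \in \mathcal{Z}$ we have
\[
R_{\mathrm{P}}^{(\Theta)}(z) - R_{\mathrm{P}}^{(\mathrm{F})}(z) \;=\; \gmv(z) \big[ \Lambda(z) + \Theta \big]^{-1} \widecheck{\mathcal{G}}(z)\,,
\]
and the right-hand side factorizes through $\mathbb{C}^{4}$, hence it is an operator of rank at most $4$. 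In particular it lies in the trace ideal $\mathfrak{S}_{1}\big(L^{2}(\mathbb{R}^{2};\mathbb{C}^{2})\big)$. Since $\mathcal{Z}$ is a non-empty open subset of $\rho\big(H_{\mathrm{P}}^{(\mathrm{F})}\big) \cap \rho\big(H_{\mathrm{P}}^{(\Theta)}\big)$ and intersects $\mathbb{C} \setminus \mathbb{R}$ (see the discussion after \cref{thm:HTheta}), we can fix a genuinely non-real $z$ for which the identity holds.

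By the Kato--Rosenblum theorem in its resolvent form (see, e.g., Reed--Simon, Vol. III, Thm. XI.9), the trace-class difference of the resolvents of $H_{\mathrm{P}}^{(\Theta)}$ and $H_{\mathrm{P}}^{(\mathrm{F})}$ implies the existence and completeness of the wave operators
\[
\Omega_{\pm}\!\big(H_{\mathrm{P}}^{(\Theta)}, H_{\mathrm{P}}^{(\mathrm{F})}\big)
\;=\; \slim_{t \to \pm \infty} e^{i t H_{\mathrm{P}}^{(\Theta)}} e^{-i t H_{\mathrm{P}}^{(\mathrm{F})}} P_{\mathrm{ac}}\big(H_{\mathrm{P}}^{(\mathrm{F})}\big)\,,
\]
together with the absence of singular continuous spectrum for $H_{\mathrm{P}}^{(\Theta)}$ (since trace-class perturbations do not generate singular continuous spectrum, and $H_{\mathrm{P}}^{(\mathrm{F})}$ has none by \cref{prop: spectrumF}). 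By \cref{prop: spectrumF} we have $P_{\mathrm{ac}}\big(H_{\mathrm{P}}^{(\mathrm{F})}\big) = \one$, so the above projector can be dropped.

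The full statement then follows from the chain rule for wave operators: combining the relative wave operator above with $\Omega_{\pm}^{(\mathrm{F})} = \Omega_{\pm}\big(H_{\mathrm{P}}^{(\mathrm{F})}, -\Delta_{\mathrm{P}}\big)$ from \cref{prop: waveopF}, one obtains
\[
\Omega_{\pm}^{(\Theta)} \;=\; \Omega_{\pm}\!\big(H_{\mathrm{P}}^{(\Theta)}, H_{\mathrm{P}}^{(\mathrm{F})}\big)\, \Omega_{\pm}^{(\mathrm{F})} \,,
\]
which exists and is asymptotically complete, with common range equal to $\mathrm{ran}\,P_{\mathrm{ac}}\big(H_{\mathrm{P}}^{(\Theta)}\big)$. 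Since $P_{\mathrm{ac}}(-\Delta_{\mathrm{P}}) = \one$, each $\Omega_{\pm}^{(\Theta)}$ is thus a unitary map from $L^{2}(\mathbb{R}^{2};\mathbb{C}^{2})$ onto $\mathrm{ran}\,P_{\mathrm{ac}}\big(H_{\mathrm{P}}^{(\Theta)}\big)$, and consequently $\mathrm{S}^{(\Theta)} = \big(\Omega_{+}^{(\Theta)}\big)^{*}\Omega_{-}^{(\Theta)}$ is unitary on $L^{2}(\mathbb{R}^{2};\mathbb{C}^{2})$.

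The only substantive step is the verification that the resolvent difference is trace class, which is essentially immediate from the Krein formula; the main conceptual point is merely to invoke Birman--Kato correctly and to rely on \cref{prop: spectrumF} to identify $P_{\mathrm{ac}}\big(H_{\mathrm{P}}^{(\mathrm{F})}\big) = \one$, so that the intertwining with the free dynamics goes through without further work. No delicate propagation estimate or commutator argument is required, precisely because of the rank-four nature of the perturbation at the resolvent level.
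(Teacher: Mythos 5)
Your overall strategy matches the paper's: you observe that the Krein formula \eqref{eq: RThetaKrein} exhibits the resolvent difference as a rank-$4$ (hence trace-class) operator, invoke the Kato--Birman theorem in resolvent form \cite[Thm.~XI.9]{RS81} for $\Omega_{\pm}\big(H_{\mathrm{P}}^{(\Theta)}, H_{\mathrm{P}}^{(\mathrm{F})}\big)$, and then use the chain rule together with \cref{prop: waveopF}. That part is correct and is exactly what the paper does.

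However, there is a genuine gap in your treatment of asymptotic completeness. You assert that ``trace-class perturbations do not generate singular continuous spectrum,'' and attribute the absence of $\sigma_{\mathrm{sc}}\big(H_{\mathrm{P}}^{(\Theta)}\big)$ to the Kato--Rosenblum/Birman theorem. This is false: that theorem only yields existence and completeness of the wave operators, i.e.\ unitary equivalence of the absolutely continuous parts, and says nothing about the singular continuous spectrum of the perturbed operator. Indeed, it is a classical fact (going back to Donoghue, and developed further by Simon and Wolff) that even a rank-one perturbation of an operator with purely absolutely continuous spectrum can produce dense point spectrum or singular continuous spectrum. So the finite-rank nature of the perturbation does not by itself exclude $\sigma_{\mathrm{sc}}$. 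To close this gap one needs an independent spectral input: the paper uses the Limiting Absorption Principle for $R_{\mathrm{P}}^{(\Theta)}(z)$ established in \cref{lemma:LAPTheta}, which gives boundary values of the resolvent in weighted spaces away from a finite exceptional set and thereby rules out singular continuous spectrum. Your proof is otherwise sound, but this step needs to be replaced by an appeal to \cref{lemma:LAPTheta} (or an equivalent argument) rather than to a general property of trace-class perturbations.
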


The spectrum of the self-adjoint extensions $H_{\mathrm{P}}^{(\Theta)}$ is on the other hand much richer than the one of $ H_{\mathrm{P}}^{(\mathrm{F})} $.

\begin{theorem}[Spectrum of $ H_{\mathrm{P}}^{(\Theta)} $]
\label{prop: spectrumTheta}
		\mbox{}	\\
		Let $\Theta \in \mathrm{M}_{4,\mathrm{Herm}}(\mathbb{C})$. Then, the spectrum of the Hamiltonian $H_{\mathrm{P}}^{(\Theta)}$ is 
			\be
				\sigma\big(H_{\mathrm{P}}^{(\Theta)}\big) = \sigma_{\mathrm{ac}}\big(H_{\mathrm{P}}^{(\Theta)}\big) \cup \sigma_{\mathrm{pp}}\big(H_{\mathrm{P}}^{(\Theta)}\big)\,,
			\ee
		where
			\be	
				\sigma_{\mathrm{ac}}\big(H_{\mathrm{P}}^{(\Theta)}\big) = \R^{+} , \qquad
				 \big\{\! -\mu \in \mathbb{R}^{-}\;\big|\, \det\big[\Lambda(-\mu) + \Theta \big] = 0 \big\} \subseteq \sigma_{\mathrm{pp}}\big(H_{\mathrm{P}}^{(\Theta)}\big)\,,
			\ee
		and, in particular, $\sigma_{sc}\big(H_{\mathrm{P}}^{(\Theta)}\big) = \varnothing$. 
		Furthermore, the eigenfunction associated to any negative eigenvalue $- \mu \in \sigma_{\mathrm{pp}}\big(H_{\mathrm{P}}^{(\Theta)}\big)$ is given by $\gmv(-\mu) \qv$ with $ \qv \in \ker \big[\Lambda(-\mu) + \Theta \big]$.	
	\end{theorem}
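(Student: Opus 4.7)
The plan is to tackle the theorem in two logically independent steps: the identification of $\sigma_{\mathrm{ac}}$ and the absence of $\sigma_{\mathrm{sc}}$, which will follow from the scattering statement in \cref{prop: waveopTheta}, and the characterization of negative eigenvalues, which relies on the Krein-type resolvent formula \eqref{eq: RThetaKrein} together with the explicit domain description \eqref{eq:HThetadef}.

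For the first step, I would invoke \cref{prop: waveopTheta}: the existence and asymptotic completeness of the wave operators $\Omega_{\pm}^{(\Theta)}$ provide a unitary equivalence between the restriction of $H_{\mathrm{P}}^{(\Theta)}$ to $\mathrm{ran}\,P_{\mathrm{ac}}\big(H_{\mathrm{P}}^{(\Theta)}\big)$ and $-\Delta_{\mathrm{P}}$ acting on $\mathrm{ran}\,P_{\mathrm{ac}}(-\Delta_{\mathrm{P}}) = L^2(\mathbb{R}^2;\mathbb{C}^2)$, via the standard intertwining $H_{\mathrm{P}}^{(\Theta)}\,\Omega_{\pm}^{(\Theta)} = \Omega_{\pm}^{(\Theta)}\,(-\Delta_{\mathrm{P}})$. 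Since $\sigma(-\Delta_{\mathrm{P}}) = \sigma_{\mathrm{ac}}(-\Delta_{\mathrm{P}}) = [0,+\infty)$, this yields at once $\sigma_{\mathrm{ac}}\big(H_{\mathrm{P}}^{(\Theta)}\big) = [0,+\infty)$, while asymptotic completeness means by definition that $\sigma_{\mathrm{sc}}\big(H_{\mathrm{P}}^{(\Theta)}\big) = \varnothing$, so the spectrum decomposes as $\sigma\big(H_{\mathrm{P}}^{(\Theta)}\big) = \sigma_{\mathrm{ac}}\big(H_{\mathrm{P}}^{(\Theta)}\big) \cup \sigma_{\mathrm{pp}}\big(H_{\mathrm{P}}^{(\Theta)}\big)$.

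For the inclusion $\{-\mu \in \mathbb{R}^- \,:\, \det[\Lambda(-\mu)+\Theta] = 0\} \subseteq \sigma_{\mathrm{pp}}\big(H_{\mathrm{P}}^{(\Theta)}\big)$, I would fix $\mu > 0$ such that $\det[\Lambda(-\mu)+\Theta] = 0$, pick a non-zero $\mathbf{q} \in \ker[\Lambda(-\mu)+\Theta]$, and set $\psiv := \gmv(-\mu)\,\mathbf{q}$. Square integrability of $\psiv$ follows from \eqref{eq:g} and the exponential decay of the Macdonald functions $K_{|\ell + \alpha|}$, while $\psiv \neq 0$ follows from the linear independence of the four defect spinors $\giv_{\sqrt{\mu},s}^{(\ell)}$ that span the range of $\gmv(-\mu)$. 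Selecting $z = -\mu$, $\varphiv_z = 0$ and the same $\mathbf{q}$ in \eqref{eq:HThetadef}, the compatibility condition $\tau\varphiv_z = [\Lambda(-\mu)+\Theta]\mathbf{q}$ reduces to $0 = 0$, so $\psiv \in \dom\big(H_{\mathrm{P}}^{(\Theta)}\big)$ and $\big(H_{\mathrm{P}}^{(\Theta)} + \mu\big)\psiv = \big(H_{\mathrm{P}}^{(\mathrm{F})} + \mu\big)\varphiv_z = 0$. The same domain decomposition yields the converse direction needed for the eigenfunction statement: given any eigenfunction $\psiv = \varphiv_z + \gmv(-\mu)\mathbf{q} \neq 0$ with $H_{\mathrm{P}}^{(\Theta)}\psiv = -\mu\,\psiv$, the action formula forces $\big(H_{\mathrm{P}}^{(\mathrm{F})} + \mu\big)\varphiv_z = 0$, hence $\varphiv_z = 0$ because $-\mu \in \rho\big(H_{\mathrm{P}}^{(\mathrm{F})}\big)$ by the non-negativity of $H_{\mathrm{P}}^{(\mathrm{F})}$ from \cref{prop:QFHF}; the compatibility relation then returns $\mathbf{q} \in \ker[\Lambda(-\mu)+\Theta]$.

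I expect the main obstacle to be essentially bookkeeping rather than genuinely technical: the one point that requires a short justification is the injectivity of the single-layer map $\gmv(-\mu) : \mathbb{C}^4 \to L^2(\mathbb{R}^2;\mathbb{C}^2)$, used above to guarantee $\mathbf{q} \neq 0$. This follows from the orthogonality of the angular harmonics $e^{i\ell\theta}/\sqrt{2\pi}$ for $\ell \in \{0,-1\}$ and of the two spin components, which make the four defect spinors $\giv_{\sqrt{\mu},s}^{(\ell)}$ linearly independent. Once this is in place, the whole argument reduces to a clean application of the Krein resolvent formula proved in \cref{thm:HTheta} combined with the positivity of $H_{\mathrm{P}}^{(\mathrm{F})}$.
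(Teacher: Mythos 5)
Your argument for $\sigma_{\mathrm{ac}} = \mathbb{R}^+$ and $\sigma_{\mathrm{sc}} = \varnothing$ is exactly what the paper does: invoke the existence and asymptotic completeness of the wave operators from \cref{prop: waveopTheta}. For the point-spectrum inclusion and the form of the eigenfunctions, the paper simply cites \cite[Thm.~3.4]{Po04}, which provides precisely the bijection $\qv \mapsto \gmv(-\mu)\qv$ between $\ker[\Lambda(-\mu)+\Theta]$ and $\ker(H_{\mathrm{P}}^{(\Theta)}+\mu)$. You instead unpack that abstract result by hand: you plug $\varphiv_z = 0$ into the domain parametrization \eqref{eq:HThetadef} for one direction, and use $-\mu \in \rho\big(H_{\mathrm{P}}^{(\mathrm{F})}\big)$ (from non-negativity) to force $\varphiv_z = 0$ for the converse, then read off $\qv \in \ker[\Lambda(-\mu)+\Theta]$ from the compatibility condition, with injectivity of $\gmv(-\mu)$ handled via orthogonality of the angular and spin components. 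This is a correct, self-contained re-derivation of the content of Posilicano's theorem in this concrete $4\times 4$ setting; what it buys you is independence from the cited reference, at the cost of one small implicit step you should make explicit: the domain/action description in \eqref{eq:HThetadef} was stated for $z \in \mathcal{Z} \subset \mathbb{C}\setminus\mathbb{R}$, so you are tacitly using the fact (standard in Krein--Birman--Vishik theory, and consistent with the remark following \cref{thm:HTheta}) that the decomposition $\psiv = \varphiv_z + \gmv(z)\qv$ with the same compatibility relation characterizes $\dom\big(H_{\mathrm{P}}^{(\Theta)}\big)$ for any $z \in \rho\big(H_{\mathrm{P}}^{(\mathrm{F})}\big)$, in particular for $z = -\mu$ even when $-\mu \notin \rho\big(H_{\mathrm{P}}^{(\Theta)}\big)$. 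With that observation spelled out, your argument is complete.
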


	\begin{remark}[An extension with negative point spectrum] 
	\mbox{}	\\
	As an example, let us discuss the occurrence of negative eigenvalues for the self-adjoin extension with $\Theta = 0$. It can be checked by direct inspection that the condition $\det\big[\Lambda(-\mu)  \big] = 0$, $\mu \in \mathbb{R}^{+}$, is fulfilled if and only if $\mu = 1$ (see the explicit expression \eqref{eq:Lzexp} reported in the forthcoming \cref{lemma:GLaexp} for $\Lambda(z)$). Moreover, $\Lambda(-1) = 0$ so that $\ker[\Lambda(-1)] =  \CC^4$. Accordingly, $-1 \in \sigma_{\mathrm{pp}}\big(H_{\mathrm{P}}^{(0)}\big)$ is a fourthly degenerate eigenvalue of the Hamiltonian and the associated eigenspace is spanned by
	\be
	\giv_{1,s}^{(\ell)} = K_{|\ell+\alpha|}(r)\,\tfrac{e^{i \ell \teta}}{\sqrt{2\pi}} \begin{pmatrix} \delta_{s,\uparrow} \\ \delta_{s,\downarrow} \end{pmatrix} , \qquad s \in \{\uparrow,\downarrow\},\, \ell \in \{0,-1\}\,,
	\ee
	see \eqref{eq:GGzexp} below, together with \eqref{G1234} and \eqref{eq:g}.
	\end{remark}

	\begin{remark}[Embedded eigenvalues]
	\mbox{}	\\
	We do not discuss here the presence of eigenvalues embedded in the continuous spectrum. These would be exactly the exceptional points forming the set $\mathrm{e}_{+}^{(\Theta)}$ identified next in \cref{lemma:LAPTheta}. Such a question has been investigated for regular magnetic fields in \cite{AHK23,HH23}.
	\end{remark}
	
	We now proceed to characterize the (incoming and outgoing) generalized eigenfunctions $\varphiv^{(\Theta,\pm)}_{(s,\kv)}$  related to $\sigma_{\mathrm{ac}}\big(H_{\mathrm{P}}^{(\Theta)}\big)$. The result below also allows to compute explicitly the scattering matrix and the differential cross-section but we omit the details for the sake of brevity.
	
	\begin{proposition}[Generalized eigenfunctions of $H_{\mathrm{P}}^{(\Theta)}$]\label{thm: scattmatT}
		\mbox{} \\
		The generalized eigenfunctions of $ H_{\mathrm{P}}^{(\Theta)}$ have the form
		\be\label{eq: eigenfunctTheta}
				\varphiv^{(\Theta,\pm)}_{(s,\kv)} = \varphiv^{(\mathrm{F},\pm)}_{(s,\kv)} + \gmv_{\pm}(k^2)\big[\Lambda_{\pm}(k^2) + \Theta\big]^{-1} \tau \varphiv^{(\mathrm{F},\pm)}_{(s,\kv)}\,,
			\ee
where $\varphiv^{(\mathrm{F},\pm)}_{(s,\kv)}$ are the Friedrichs eigenfunctions \eqref{eq: psiF}, while $\Lambda_{\pm}(\lambda)$ and $\gmv_{\pm}(\lambda)$ are defined, respectively, as  
	\be
			\Lambda_{\pm}(\lambda):= \lim_{\varepsilon \to 0^{+}}\Lambda(\lambda \pm i \varepsilon) \,,  \qquad
			\gmv_{\pm}(\lambda) := \lim_{\varepsilon \to 0^{+}} \gmv(\lambda \pm i \varepsilon)\,.
	\ee	
Furthermore, the following asymptotics holds as $ r \to + \infty $:
	\beq\label{eq:asy}
		\big(\varphiv^{(\Theta,\pm)}_{(s,\kv)}\big)_{\!s'} (\xv)
			= \tfrac{1}{2\pi}\,e^{i \kv \cdot \xv} \,  \delta_{s\,s'}+ \tfrac{1}{r^{1/2}} f^{(\Theta,\pm)}_{(s,s'),\kv}\,e^{\mp i k\,r} + \mathcal{O}\lf(\tfrac{1}{r^{3/2}}\ri) ,
	\eeq
	where
		\bml{
		f^{(\Theta,\pm)}_{(s,s'),\kv}
					:=  \tfrac{e^{\pm i \frac{\pi}{4}}}{(2\pi)^{3/2} \sqrt{k}} \sum_{\ell \in \mathbb{Z}} \lf(e^{\pm i \pi |\ell + \alpha|} - e^{\pm i \pi |\ell|}\ri) e^{i \ell (\teta - \omega_{\pm})} \,\delta_{s\,s'}\\
				+ \tfrac{i \pi \,e^{\pm i \frac{\pi}{4}}}{(2\pi)^{3/2} \sqrt{k}} \sum_{\ell,\ell' \in \{0,-1\}}\! \Big\{\big[\Lambda_{\pm}(k^2) + \Theta\big]^{-1}\Big\}_{s'\!,s}^{\ell'\!,\ell}\, e^{i (\ell' \teta - \ell \omega_{\pm})}\;(\pm i k)^{|\ell + \alpha| + |\ell'+\alpha|}\,.
	}
	\end{proposition}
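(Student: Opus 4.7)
The plan is to derive \eqref{eq: eigenfunctTheta} via the limiting absorption principle (LAP) applied to the Krein resolvent formula \eqref{eq: RThetaKrein}, using the domain characterization of $H_{\mathrm{P}}^{(\Theta)}$ in \cref{thm:HTheta}, and then to extract the asymptotic \eqref{eq:asy} by combining large-$r$ expansions of the Bessel functions $J_\nu$ and $K_\nu$.

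For the eigenfunction formula, the domain description in \cref{thm:HTheta} identifies every element of $\dom(H_{\mathrm{P}}^{(\Theta)})$ as $\psiv = \varphiv_z + \gmv(z)\qv$ with $\tau\varphiv_z = [\Lambda(z)+\Theta]\qv$. Generalized eigenfunctions at energy $k^2$ correspond formally to the boundary limit $z \to k^2 \pm i 0^+$: the ``regular'' piece $\varphiv_z$ is replaced by the Friedrichs eigenfunction $\varphiv^{(\mathrm{F},\pm)}_{(s,\kv)}$ (which solves $H_{\mathrm{P}}\varphiv = k^2\varphiv$ on $\R^2\setminus\{\mathbf{0}\}$, lies in the local Friedrichs form domain thanks to $J_{|\ell+\alpha|}(kr) \sim r^{|\ell+\alpha|}$ near the origin, and fulfills the $\pm$ Sommerfeld condition \eqref{eq: Sommerfeld}), while $\gmv(z)\qv$ becomes the limiting single-layer map $\gmv_\pm(k^2)\qv$, a linear combination of the defect functions \eqref{eq:g} at $\lambda=\lambda_\pm$. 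Solving the boundary relation for $\qv$ then yields precisely \eqref{eq: eigenfunctTheta}. A key auxiliary step is computing $\tau\varphiv^{(\mathrm{F},\pm)}_{(s,\kv)}$: substituting \eqref{eq: psiF} into \eqref{eq:traceop} and using $J_\nu(kr) \sim (kr/2)^\nu/\Gamma(\nu+1)$ along with $(\nu+r\partial_r)r^\nu = 2\nu\, r^\nu$, the normalization $2^{|\ell+\alpha|-1}\Gamma(|\ell+\alpha|)$ cancels the Bessel prefactors, giving $\tau_{s'}^{(\ell)}\varphiv^{(\mathrm{F},\pm)}_{(s,\kv)} = \tfrac{1}{\sqrt{2\pi}}\, e^{-i\ell\omega_\pm \pm i\pi|\ell+\alpha|/2}\, k^{|\ell+\alpha|}\, \delta_{s\,s'}$ for $\ell\in\{0,-1\}$. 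Rigorous justification proceeds by applying $R_{\mathrm{P}}^{(\Theta)}(k^2\pm i\varepsilon)$ to smooth compactly supported approximants of the plane wave $\varphiv_{(s,\kv)}$, using the kernel of $R_{\mathrm{P}}^{(\mathrm{F})}$ in \eqref{eq:Fresolv} and dominated convergence as $\varepsilon\to 0^+$; the invertibility of $\Lambda_\pm(k^2)+\Theta$ holds away from the exceptional set of embedded eigenvalues.

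For the asymptotic \eqref{eq:asy}, the Friedrichs contribution is handled by subtracting the plane wave $\varphiv_{(s,\kv)}$, applying $J_\nu(kr) = \sqrt{2/(\pi kr)}\cos(kr - \nu\pi/2 - \pi/4) + \mathcal{O}(r^{-3/2})$ term-by-term in \eqref{eq: psiF}, and exploiting that the ``anti-radiation'' coefficient $e^{\mp i\pi/4}$ of the $e^{\pm ikr}$ component is $\nu$-independent: combined with the extra phase $e^{\pm i\pi|\ell+\alpha|/2}$ of \eqref{eq: psiF}, this component cancels between the Friedrichs eigenfunction ($\nu=|\ell+\alpha|$) and the plane wave ($\nu=|\ell|$), and only the ``radiation'' contribution with amplitude $e^{\pm i\pi|\ell+\alpha|} - e^{\pm i\pi|\ell|}$ survives, producing the first line of $f^{(\Theta,\pm)}_{(s,s'),\kv}$. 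The defect piece $\gmv_\pm(k^2)\qv$ is controlled via $K_\nu(\lambda r) \sim \sqrt{\pi/(2\lambda r)}\, e^{-\lambda r}$ with $\lambda = \lambda_\pm$ determined by the branch $\Im\sqrt{z}>0$: the factor $\lambda_\pm^{|\ell'+\alpha|}$ from \eqref{eq:g} together with $k^{|\ell+\alpha|}$ from the trace computed above combines to $(\pm ik)^{|\ell+\alpha|+|\ell'+\alpha|}$, reproducing the second line.

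The principal difficulty is the meticulous bookkeeping of complex phases stemming from the branch-cut convention: one must consistently track the sign of $-i\sqrt{z_\pm}$ in the limit $z \to k^2\pm i 0^+$ and verify that the interplay between the $\nu$-independent $e^{\pm i\pi/4}$ (from the $K_\nu$ asymptotic) and the $\nu$-dependent $\lambda_\pm^{|\ell+\alpha|}$ reproduces exactly the $(\pm ik)^{|\ell+\alpha|+|\ell'+\alpha|}$ of \eqref{eq:asy}, including a consistent matching of the radiation direction $e^{\mp ikr}$ in both contributions. A subsidiary point is justifying term-by-term asymptotic expansion in the partial-wave series for $\varphiv^{(\mathrm{F},\pm)}_{(s,\kv)} - \varphiv_{(s,\kv)}$: this is secured by the super-exponential decay $|J_\nu(kr)| \leq (kr/2)^\nu/\Gamma(\nu+1)$, which controls the tails of the series uniformly on compacts and makes the high-$|\ell|$ contributions to the scattering amplitude negligible.
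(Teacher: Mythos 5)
Your proposal is correct and follows essentially the same route as the paper's proof. The paper obtains \eqref{eq: eigenfunctTheta} by citing \cite[Thm.\ 5.1]{MPS18} (a LAP-based eigenfunction-expansion result), computes the trace $\tau_{s'}^{(\ell)}\varphiv^{(\mathrm{F},\pm)}_{(s,\kv)} = (\pm ik)^{|\ell+\alpha|}\,e^{-i\ell\omega_\pm}\,\delta_{s\,s'}/\sqrt{2\pi}$ exactly as you do (your phase $e^{\pm i\pi|\ell+\alpha|/2}k^{|\ell+\alpha|}$ is the same quantity), and reads off the asymptotics from the large-argument Bessel/Hankel expansions; the only cosmetic difference is that the paper rewrites the limiting single-layer $\gmv_\pm(k^2)\qv$ in terms of $H^{(1)}_\nu$ via \eqref{eq: Gpmlim}, whereas you work directly with $K_\nu$ on the boundary branch $-i\sqrt{z_\pm}$, and the two are related by the standard connection formula used throughout the paper.
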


	Zero-energy resonances are discussed in the next result.
		
	\begin{proposition}[Zero-energy resonances for $ H_{\mathrm{P}}^{(\mathrm{F})} $ and $ H_{\mathrm{P}}^{(\Theta)} $]
	\label{prop: zeroenresTheta}
	\mbox{}	\\
		The Friedrichs Hamiltonian $H_{\mathrm{P}}^{(\mathrm{F})}$ has no zero-energy resonances. On the other hand, for any $\Theta \in \mathrm{M}_{4,\mathrm{Herm}}(\mathbb{C})$, the Hamiltonian $H_{\mathrm{P}}^{(\Theta)}$ has zero-energy resonances if and only if $\Lambda(0) + \Theta$ is singular. More precisely, any zero-energy resonance $ \psiv_0 $ has the form
		\be
			\psiv_0 = \begin{pmatrix}	\psi_{\uparrow}	\\ \psi_{\downarrow}	\end{pmatrix}, \qquad
			\psi_{s}(r,\teta) = \sum_{\ell \in \{0,-1\}} q_{s}^{(\ell)}\,\tfrac{2^{|\ell+\alpha| - 1} \Gamma(|\ell + \alpha|)}{r^{|\ell + \alpha|}}\, \tfrac{e^{i \ell \teta}}{\sqrt{2\pi}},
		\ee
		with $ \qv \in \ker \big[\Lambda(0) + \Theta\big]$.
	\end{proposition}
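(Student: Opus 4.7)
The plan is to treat separately the two sub-statements of the proposition: the explicit profile of any resonance (a pointwise ODE analysis in $\mathbb{R}^2 \setminus \{\mathbf{0}\}$), and the characterization of the admissible coefficient vectors $\qv$ via the boundary condition at the origin (which uses the resolvent formalism of \cref{thm:HTheta}). The Friedrichs case then follows as the formal limit $\Theta = \mbox{``}\infty\mbox{''}\one$ or, more directly, from \cref{prop:QFHF}.ii.

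\emph{Classification of the profile.} In view of the diagonal structure \eqref{HPFdiag}--\eqref{eq: AB Schroedinger}, the equation $H_{\mathrm{P}} \psiv_0 = 0$ in $\mathbb{R}^2 \setminus \{\mathbf{0}\}$ decouples on each spin component and each angular harmonic $\ell \in \mathbb{Z}$ into the Euler equation
\begin{equation*}
\bigl[-\tfrac{1}{r}\partial_r (r\partial_r \cdot) + \tfrac{(\ell+\alpha)^2}{r^2}\bigr]\psi_s^{(\ell)}(r) = 0,
\end{equation*}
whose general solution, since $\ell + \alpha \neq 0$ for $\alpha \in (0,1)$, is $a_s^{(\ell)}\,r^{|\ell + \alpha|} + b_s^{(\ell)}\,r^{-|\ell + \alpha|}$. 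Boundedness as $r \to +\infty$ forces $a_s^{(\ell)} = 0$ for every $\ell$, while $r^{-|\ell + \alpha|} \in L^2_{\mathrm{loc}}(\mathbb{R}^2)$ is equivalent to $|\ell + \alpha| < 1$, which for $\alpha \in (0,1)$ selects precisely $\ell \in \{0,-1\}$. Setting $q_s^{(\ell)} := b_s^{(\ell)}/(2^{|\ell+\alpha|-1}\Gamma(|\ell+\alpha|))$ gives exactly the form stated; the normalization is chosen so that, via the small-argument expansion $K_\nu(z) \sim \tfrac{1}{2}\Gamma(\nu)(z/2)^{-\nu}$ for $0 < \nu < 1$, the singular profile of $\psiv_0$ matches the leading behavior at the origin of $\giv_{\lambda,s}^{(\ell)}$ from \eqref{eq:g}.

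\emph{Boundary condition.} To select the admissible $\qv$, I would decompose $\psiv_0 = \varphiv_{-1} + \gmv(-1)\qv$ with $\varphiv_{-1} := \psiv_0 - \gmv(-1)\qv$. By the preceding step, the leading singularities at the origin cancel in $\varphiv_{-1}$; together with the fact that $\psiv_0$ is bounded at infinity and $\gmv(-1)\qv$ decays exponentially there, one checks that $\varphiv_{-1} \in \dom(H_{\mathrm{P}}^{(\mathrm{F})})$. The trace $\tau \varphiv_{-1}$ can then be computed from the subleading term in the Macdonald expansion $K_\nu(z) \sim \tfrac{1}{2}\Gamma(\nu)(z/2)^{-\nu} + \tfrac{1}{2}\Gamma(-\nu)(z/2)^{\nu} + \cdots$ fed into \eqref{eq:traceop}; by the definition \eqref{eq: defLambda} of the Weyl function, this yields $\tau\varphiv_{-1} = [\Lambda(0) - \Lambda(-1)]\qv$, the purely singular $\psiv_0$ contributing nothing to $\tau$. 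Comparing with the boundary condition $\tau\varphiv_{-1} = [\Lambda(-1) + \Theta]\qv$ from \cref{thm:HTheta} and telescoping gives exactly $\qv \in \ker[\Lambda(0) + \Theta]$; conversely, every such $\qv$ produces a resonance by retracing the same identification. This simultaneously establishes the ``iff'' and the explicit formula.

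\emph{Friedrichs case.} This corresponds formally to $\Theta = \mbox{``}\infty\mbox{''}\one$, forcing $\qv = 0$ and the absence of non-trivial resonances. A quicker, direct argument uses \cref{prop:QFHF}.ii, by which any $\psiv \in \dom[Q_{\mathrm{P}}^{(\mathrm{F})}]$ satisfies $\lim_{r \to 0^+}\langle |\psi_s|^2\rangle(r) = 0$: this is manifestly incompatible with the $r^{-|\ell+\alpha|}$ behavior in the two admissible modes, since $|\ell + \alpha| \in \{\alpha, 1-\alpha\} \subset (0,1)$ produces a divergent angular average as $r \to 0^+$. The main delicate point in the whole proof is the boundary-condition step: computing the regular coefficient of $\varphiv_{-1}$ with the correct sign and matching it to $\Lambda(0) - \Lambda(-1)$ requires careful bookkeeping of the Macdonald subleading term through the definition of $\tau$. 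Everything else is a routine combination of separation of variables and the resolvent formula of \cref{thm:HTheta}.
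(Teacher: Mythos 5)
Your overall strategy is the same as the paper's: classify the resonance profile by separation of variables (the Euler equation in each angular mode, boundedness at infinity killing $r^{|\ell+\alpha|}$, the $L^2_{\mathrm{loc}}$ constraint restricting to $\ell \in \{0,-1\}$, and the normalization chosen so the profile is the $\lambda \to 0^{+}$ limit of $g_\lambda^{(\ell)}$), then extract the condition $\qv \in \ker[\Lambda(0)+\Theta]$ from the domain's boundary condition; your treatment of the Friedrichs case also matches the paper. Where you differ is that the paper decomposes at $z=0$: the regular part $\phiv_0 = \psiv_0 - \gmv(0)\qv$ vanishes identically, so \eqref{eq:HThetadef} gives directly $0 = \tau\phiv_0 = [\Lambda(0)+\Theta]\qv$ with no computation required; you decompose at $z=-1$, which makes the route longer and exposes you to bookkeeping risk.

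And indeed there is a sign error. Since $\psiv_0 = \gmv(0)\qv$, one has $\varphiv_{-1} = (\gmv(0) - \gmv(-1))\qv$, and the definition \eqref{eq: defLambda}, namely $\Lambda(z) = \tau\lf(\gmv(-1) - \gmv(z)\ri)$, gives $\tau\varphiv_{-1} = -\Lambda(0)\qv = [\Lambda(-1) - \Lambda(0)]\qv$, the opposite of your $[\Lambda(0) - \Lambda(-1)]\qv$. Feeding your formula into the boundary condition $\tau\varphiv_{-1} = [\Lambda(-1)+\Theta]\qv$ and using $\Lambda(-1)=0$ yields $\qv \in \ker[\Lambda(0)-\Theta]$, not the asserted $\ker[\Lambda(0)+\Theta]$: the step as written does not produce your conclusion. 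With the sign corrected, $-\Lambda(0)\qv = \Theta\qv$ gives the right condition. Separately, the claim $\varphiv_{-1} \in \dom\big(H_{\mathrm{P}}^{(\mathrm{F})}\big)$ cannot be literally correct: by definition a resonance $\psiv_0 \notin L^2(\R^2;\CC^2)$, whereas $\gmv(-1)\qv \in L^2(\R^2;\CC^2)$, so $\varphiv_{-1}$ is not square-integrable; the trace $\tau\varphiv_{-1}$ is still well-defined because $\tau$ is a limit at the origin, but this must be phrased as a local statement rather than as domain membership.
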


\begin{remark}[Alternative parametrization]\label{rmk:Paulires}
	\mbox{}	\\
	Making reference to the quadratic form parametrization $H_{\mathrm{P}}^{(\bb)}$ of the Pauli Hamiltonian, using the bijection in \cref{prop:quadKrein} and noting that $L(0) = 0$ (see \eqref{eq:L}), we get
	\bdm
		\Lambda(0) + \Theta = \bb\,.
	\edm
	Then, according to \cref{prop: zeroenresTheta}, $H_{\mathrm{P}}^{(\bb)}$ possesses zero-energy resonances whenever $\det \bb = 0$.
\end{remark}

\subsection{Symmetries and connection with the Dirac operator}\label{sec:symdir}

We investigate here the symmetries of the Pauli operator and of its self-adjoint realizations, as well as the connection with its ``square root'', \emph{i.e.}, the {\it Dirac operator}.

Let then $\WW : L^2(\mathbb{R}^2; \mathbb{C}^2) \to L^2(\mathbb{R}^2; \mathbb{C}^2)$ be any unitary or anti-unitary operator. We say that $\WW$ is a symmetry of the differential operator $H_{\mathrm{P}} \equiv H_{\mathrm{P}}(\mathbf{A})$ if
\begin{equation}
\WW\, H_{\mathrm{P}}(\mathbf{A})\, \WW^{-1} = H_{\mathrm{P}}(\widetilde{\mathbf{A}})\,,
\end{equation}
for some $\widetilde{\mathbf{A}}$ such that, in distributional sense
\begin{equation}\label{eq:Binv}
\widetilde{b} := \mathrm{curl} \: \widetilde{\aav} = 2\pi \alpha\,\delta_{\mathbf0}\,.
\end{equation}
Let us denote by $\sigma_3$ the third Pauli matrix, namely, 
	$$
		\sigma_3=\begin{pmatrix}1 & 0 \\ 0 &-1 \end{pmatrix}. 
	$$
In view of the algebraic properties reported in \cref{app:2Dsym} for Pauli operators with smooth vector potentials and since the differential operator of interest here is invariant under rotations around $\mathbf{x} = \mathbf{0}$, the only admissible symmetries of $H_{\mathrm{P}} $ are described by the following operators:

\begin{enumerate}[i)]
\item {\sl Linear transformations:} 
\begin{equation}\label{eq:UST}
	\UU(S,T) : L^2(\mathbb{R}^2;\mathbb{C}^2) \to L^2(\mathbb{R}^2;\mathbb{C}^2)\,, \qquad (\UU \psiv)(\mathbf{x}) = S(\mathbf{x})\, \psiv(T^{-1} \mathbf{x})\,,
\end{equation}
where (see \eqref{eq:TSO2})
\begin{equation}\label{eq:tso2}
T \in SO(2,\mathbb{R}), \quad S = e^{-i \eta_0 \one - i \eta_3 \sigma_3}, \qquad \mbox{for some } \eta_0 \in C^{1}(\mathbb{R}^2), \eta_3 \in \mathbb{R} \,.
\end{equation}

\item {\sl Anti-linear transformations:} 
\begin{equation}\label{eq:VST}
	\VV(S,T) : L^2(\mathbb{R}^2;\mathbb{C}^2) \to L^2(\mathbb{R}^2;\mathbb{C}^2)\,, \qquad (\VV \psiv)(\mathbf{x}) = S(\mathbf{x})\, \psiv^*(T^{-1} \mathbf{x})\,,
\end{equation}
where (see \eqref{eq:TO2anti})
\begin{equation}
T \in O(2,\mathbb{R})  \setminus SO(2,\mathbb{R}), \quad S = e^{-i \eta_0 \one - i \eta_3 \sigma_3}, \qquad \mbox{for some } \eta_0 \in C^{1}(\mathbb{R}^2), \eta_3 \in \mathbb{R}\,.
\end{equation}
\end{enumerate} 
By the above arguments, we deduce that the full symmetry group is
$$ 
U^{\mathrm{loc}}_{\mathrm{em}}(1) \times U_{\mathrm{ax}}(1) \times U_{\mathrm{rot}}(1) \times CP\,.
$$
Among the symmetries described above, there is the very well-known $U^{\mathrm {loc}}_{\mathrm {em}}(1)$-\emph{gauge symmetry} $\aav\mapsto\aav+\nabla\eta_0$ typical of the electromagnetic field, which is recovered when $T=\one$ and $\eta_3=0$ in \eqref{eq:tso2}. The two remaining continuous symmetries are given by the axial gauge and the rotational symmetries of the model. Lastly, the discrete CP-symmetry is the charge conjugation combined with parity.

We aim at investigating the extensions that are left invariant by the above symmetries \eqref{eq:UST} and \eqref{eq:VST}. To that purpose, we keep track of the dependence on the magnetic potential and denote $H_{\mathrm{P}}^{(\bb)} \equiv H_{\mathrm{P}}^{(\bb)}(\mathbf A)$. However, since we are interested in Pauli Hamiltonians with {\it fixed magnetic potential}, we only consider \emph{global} gauge transformations, that is, we take $\eta_0\in\R$ in  \eqref{eq:UST} and \eqref{eq:VST}. Notice that this means that the magnetic potential is unchanged, while we may act on the spinor in a non-trivial way.
%
This corresponds to determining the matrices $\bb$ in \eqref{HBdom} such that
\begin{equation}\label{eq:invariance}
\WW H_{\mathrm{P}}^{(\bb)}(\mathbf A)\,\WW^{-1}= H_{\mathrm{P}}^{(\bb)}(\mathbf A)\,,
\end{equation}
or, equivalently,
\begin{eqnarray}\label{eq:checkinv 1}
&& \WW\,\dom\big(H_{\mathrm{P}}^{(\bb)}(\mathbf A)\big)= \dom\big(H_{\mathrm{P}}^{(\bb)}(\mathbf A)\big)\,,	\\ && \WW\big(H_{\mathrm{P}}^{(\bb)}(\mathbf A)+\lambda^2\big)\, \WW^{-1}\widetilde{\psiv}=\big(H_{\mathrm{P}}^{(\bb)}(\mathbf A)+\lambda^2\big)\widetilde{\psiv}\,,\qquad\forall\, \widetilde{\psiv} \in \WW\,\dom\big(H_{\mathrm{P}}^{(\bb)}(\mathbf A)\big)\,, \label{eq:checkinv 2}
\end{eqnarray}
first for all the transformations $\WW = \UU$ of the form \eqref{eq:UST} and then for those $\WW = \VV$ of the form \eqref{eq:VST}. Here and in what follows, we always assume $\lambda>0$.

\begin{proposition}[Symmetries]
	\label{pro: symmetries}
	\mbox{}	\\
Let $\eta_0\in\R$. Then, transformations of the form \eqref{eq:UST} or \eqref{eq:VST} are symmetries of the self-adjoint extension $H_{\mathrm{P}}^{(\bb)}$ for any $ \eta_3 \in \R $, if and only if $\bb$ is a diagonal matrix. 
\end{proposition}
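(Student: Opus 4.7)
The plan is to reduce the symmetry conditions \eqref{eq:checkinv 1}--\eqref{eq:checkinv 2} to a purely algebraic requirement on the Hermitian matrix $\bb$. The differential identity \eqref{eq:checkinv 2} follows at once from the general analysis in \cref{app:2Dsym} together with the observation that the AB potential \eqref{eq:AB} is invariant under $SO(2,\R)$-rotations and under global gauge transformations (and, up to complex conjugation, under reflections in $O(2,\R)\setminus SO(2,\R)$), so the substantive task lies in checking \eqref{eq:checkinv 1}. My strategy is to track the action of $\UU$ and $\VV$ on the Krein decomposition $\psiv = \phiv_\lambda + \sum_{s,\ell} q_s^{(\ell)}\giv_{\lambda,s}^{(\ell)}$ appearing in \eqref{HBdom}, focusing on the induced transformations of the charge vector $\qv$ and of the trace $\tau\phiv_\lambda$.

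I parametrize $T\in SO(2,\R)$ as a rotation by an angle $\theta_0$, and $T\in O(2,\R)\setminus SO(2,\R)$ as the reflection $(r,\theta)\mapsto(r,\phi-\theta)$; observe that the diagonal matrix $S = e^{-i\eta_0\one - i\eta_3\sigma_3}$ acts on the $s$-th spinor component by multiplication by $e^{-i(\eta_0+\epsilon_s\eta_3)}$ with $\epsilon_\uparrow = +1,\ \epsilon_\downarrow = -1$. The explicit form \eqref{eq:g} of the defect functions, together with the reality of $K_{|\ell+\alpha|}$ and the identity $g_\lambda^{(\ell)*}(r,\phi-\theta) = e^{-i\ell\phi}\,g_\lambda^{(\ell)}(r,\theta)$, then gives
\[
\UU(S,T)\,\giv_{\lambda,s}^{(\ell)} = (D_U)_{(s,\ell)}\,\giv_{\lambda,s}^{(\ell)}\,,\qquad \VV(S,T)\,\giv_{\lambda,s}^{(\ell)} = (D_V)_{(s,\ell)}\,\giv_{\lambda,s}^{(\ell)}\,,
\]
with $D_U := \mathrm{diag}\bigl(e^{-i(\eta_0+\epsilon_s\eta_3) - i\ell\theta_0}\bigr)_{s,\ell}$ and $D_V := \mathrm{diag}\bigl(e^{-i(\eta_0+\epsilon_s\eta_3) - i\ell\phi}\bigr)_{s,\ell}$. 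A parallel angular-harmonic computation, combined with the stability of $\dom\bigl(H_{\mathrm{P}}^{(\mathrm{F})}\bigr)$ under $\UU$ and $\VV$ (immediate from \cref{prop:QFHF}, since $S$ and $T$ are smooth outside the origin), yields $\tau(\UU\phiv_\lambda) = D_U\,\tau\phiv_\lambda$ and $\tau(\VV\phiv_\lambda) = D_V\,\overline{\tau\phiv_\lambda}$. Plugging these identities into the boundary condition of \eqref{HBdom}, and using the anti-linearity $\VV(q\,\giv) = q^*\,\VV\giv$ in the second case, the invariance \eqref{eq:checkinv 1} of $\dom\bigl(H_{\mathrm{P}}^{(\bb)}\bigr)$ under $\UU$ and under $\VV$ translates, respectively, to
\[
[\bb,D_U] = 0\qquad \text{and}\qquad \bb = D_V\,\bar\bb\,D_V^{-1}\,,
\]
the diagonal matrix $L(\lambda)$ in \eqref{eq:L} dropping out in both relations.

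Read componentwise, the first relation becomes $\bb_{(s,\ell),(s',\ell')}\bigl((D_U)_{(s,\ell)} - (D_U)_{(s',\ell')}\bigr) = 0$; since the ratio $(D_U)_{(s,\ell)}/(D_U)_{(s',\ell')} = e^{-i(\epsilon_s - \epsilon_{s'})\eta_3 - i(\ell-\ell')\theta_0}$ equals $1$ for every $\eta_3 \in \R$ and every $\theta_0\in [0,2\pi)$ if and only if $(s,\ell) = (s',\ell')$, every off-diagonal entry of $\bb$ must vanish. Hence $\bb$ is (real) diagonal, in which case $D_V\bar\bb D_V^{-1} = D_V\bb D_V^{-1} = \bb$ automatically, verifying the $\VV$-condition. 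Conversely, any diagonal Hermitian $\bb$ commutes with every diagonal $D_U$ and satisfies $D_V\bar\bb D_V^{-1} = \bb$, which gives the reverse implication. The main obstacle is essentially bookkeeping: one must carefully track the angular-harmonic phases under $\VV$—reconciling complex conjugation with the reflection contained in $T$ and with the linearity of the trace—after which the argument reduces to the elementary linear algebra above.
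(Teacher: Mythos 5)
Your proof is correct and follows essentially the same route as the paper's: both compute how $\UU$ (and $\VV$) act diagonally on the defect functions $\giv_{\lambda,s}^{(\ell)}$ and on the trace $\tau\phiv_\lambda$, and then read off from the boundary condition in \eqref{HBdom} that the domain is preserved iff a commutation relation with a diagonal phase matrix holds, which for generic $\eta_3$ and rotation/reflection angle forces $\bb$ to be diagonal. The only difference is presentational — you package the outcome as $[\bb,D_U]=0$ and $\bb=D_V\bar\bb D_V^{-1}$, whereas the paper writes out the four componentwise conditions on $(L(\lambda)+\bb)^{(\ell\ell')}_{ss'}$ explicitly; the content is identical, and your observation that the diagonal $L(\lambda)$ drops out is the same one the paper relies on implicitly.
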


Concerning the connection with the Dirac operator, we first have to define it in presence of an AB magnetic potential, mostly referring to \cite{Pe06}. We start from the symmetric Dirac operator with domain $C^{\infty}_c\big(\mathbb{R}^2 \setminus \{0\}; \mathbb{C}^2\big)$, acting as
	\begin{equation}\label{eq:Dirac}
		H_{\mathrm{D}}\psiv= \bm{\sigma} \cdot(- i \nabla+\mathbf{A})\psiv \,,
	\end{equation}
where $\mathbf{A} = \mathbf{A}(x)$ is the AB potential \eqref{eq:AB}. Notice that, on $C^{\infty}_c(\mathbb{R}^2 \setminus \{0\};\mathbb{C}^2)$, there holds
	\begin{equation}\label{eq:D^2=H}
		H_{\mathrm{P}}=H_{\mathrm{D}}^2\,.
	\end{equation}
We are going to determine the self-adjoint extensions of $H_{\mathrm{D}}$ using the von Neumann approach. To this avail, we first have to take the closure of $C^{\infty}_{{c}}\big(\mathbb{R}^2 \setminus \{0\}; \mathbb{C}^2\big)$ w.r.t. the graph norm $ \psiv\mapsto \sqrt{\Vert \psiv\Vert^2+\Vert H_{\mathrm{D}}\psiv\Vert^2}\, $. Notice that, if $\psiv\in C^{\infty}_c(\mathbb{R}^2 \setminus \{0\};\mathbb{C}^2)$, it is immediate to see that
\[
\Vert H_{\mathrm{D}}\psiv\Vert^2= \meanlrlr{\psiv}{H_{\mathrm{P}}}{\psiv} = Q^{(\mathrm{F})}_{\mathrm{P}}(\psiv)\,,
\]
so that
\beq
\dom\big(\overline{H}_{\mathrm{D}}\big) = \dom\big[Q^{(\mathrm{F})}_{\mathrm{P}}\big]\,.
\eeq
Moreover, the operator $\overline{H}_{\mathrm{D}}$ has defect indices $(1,1)$, and the defect spaces $\chi_{\pm}:=\ker (\overline{H}_{\mathrm{D}}^*\mp i)$ are spanned by the functions
\begin{equation}\label{eq:Ddefect}
\xiv_{\pm}(r, \teta)=\begin{pmatrix}K_{1-\alpha}(r)\,e^{-i\teta} \\ \pm\, K_\alpha(r) \end{pmatrix} .
\end{equation}
We thus have a one-parameter family of self-adjoint extensions of $H_{\mathrm{D}}$, parametrized by the unitary operators mapping $\chi_{+}$ into $\chi_{-}$.

\begin{proposition}[Self-adjoint extensions of $ H_{\mathrm{D}} $]
\label{thm:Ddom}
\mbox{}	\\
For any $\alpha\in(0,1)$, the symmetric operator $H_{\mathrm{D}}$ admits a one-parameter family of self-adjoint extensions $H_{\mathrm{D}}^{(\gamma)}$, parametrized by $\gamma\in[0,2\pi)$ and given by
\begin{equation}\label{eq:Dextdom}
\dom\big(H_{\mathrm{D}}^{(\gamma)}\big) = \lf\{ \psiv =\phiv +\mu \lf(\xiv_+ +e^{i\gamma}\xiv_- \ri)\, \Big| \, \phiv\in \dom\big[Q^{(\mathrm{F})}_{\mathrm{P}}\big]\,,\; \mu \in \mathbb{C} \ri\} ,
\end{equation}
\begin{equation}\label{eq:Daction}
H_{\mathrm{D}}^{(\gamma)} \psiv=H_{\mathrm{D}}\phiv+i\mu(\xiv_+ -i\xiv_-)\,.
\end{equation}
\end{proposition}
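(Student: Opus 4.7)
The plan is to follow the standard von Neumann extension theory. The closure $\overline{H}_{\mathrm{D}}$ has already been described in the excerpt: from the identity $H_{\mathrm{P}}=H_{\mathrm{D}}^2$ (see \eqref{eq:D^2=H}) on $C^\infty_c(\mathbb{R}^2 \setminus \{0\}; \mathbb{C}^2)$ one gets $\|H_{\mathrm{D}}\psiv\|^2 = Q^{(\mathrm{F})}_P[\psiv]$, so that the graph norm of $H_{\mathrm{D}}$ coincides with the Pauli form norm and $\dom(\overline{H}_{\mathrm{D}}) = \dom\big[Q^{(\mathrm{F})}_P\big]$. The crux of the argument is therefore to verify that the deficiency indices of $\overline{H}_{\mathrm{D}}$ are $(1,1)$ and that the vectors $\xiv_\pm$ in \eqref{eq:Ddefect} span the defect subspaces $\chi_\pm$.

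Next, I would write the defect equations $\overline{H}_{\mathrm{D}}^{*}\xiv = \pm i\, \xiv$ on $\mathbb{R}^2 \setminus \{0\}$ in the distributional sense, with $\xiv$ a spinor whose upper component is $f$ and whose lower component is $g$. The off-diagonal block form of $H_{\mathrm{D}}$ in terms of $\Pi_\pm$ (as derived in the excerpt) turns the eigenvalue problem into the coupled first-order system
\begin{equation*}
\Pi_+ f = \pm i\, g\,, \qquad \Pi_- g = \pm i\, f\,.
\end{equation*}
Since $\Pi_+$ raises and $\Pi_-$ lowers the angular momentum by one unit, decomposing $f$ and $g$ in Fourier modes in $\vartheta$ splits the system into independent radial channels, each coupling $f^{(\ell)}$ with $g^{(\ell+1)}$.

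Eliminating $g^{(\ell+1)}$ in each channel produces the modified Bessel equation of order $|\ell+\alpha|$ for $f^{(\ell)}$, whose unique $L^2$-at-infinity solution is $K_{|\ell+\alpha|}$; symmetrically, $g^{(\ell+1)}$ must be proportional to $K_{|\ell+1+\alpha|}$. Since $K_\nu(r) \sim r^{-\nu}$ as $r \to 0^+$, square integrability at the origin demands simultaneously $|\ell+\alpha| < 1$ and $|\ell+1+\alpha| < 1$, a double condition which, for $\alpha \in (0,1)$, is satisfied only in the channel $\ell = -1$. Taking $f^{(-1)}(r) = K_{1-\alpha}(r)$, the coupling $\Pi_+ f = \pm i\, g$ combined with the Bessel recurrence $-K_\nu'(r) - \tfrac{\nu}{r}\, K_\nu(r) = K_{\nu-1}(r)$ and the parity $K_{-\alpha} = K_\alpha$ gives $g^{(0)}(r) = \pm\, K_\alpha(r)$, reproducing exactly the defect vectors $\xiv_\pm$ displayed in \eqref{eq:Ddefect}.

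Having established that $\dim \chi_\pm = 1$, the unitary maps $U \colon \chi_+ \to \chi_-$ are parametrised by a single phase $e^{i\gamma}$ with $\gamma \in [0,2\pi)$, and the classical von Neumann formula yields the self-adjoint extensions in the form \eqref{eq:Dextdom}--\eqref{eq:Daction}. The main obstacle is the case analysis of the previous paragraph: one must carefully single out the unique angular channel in which \emph{both} components of $\xiv$ are $L^2$---this is precisely where the restriction $\alpha \in (0,1)$ enters---and then reduce the coupled radial system to the compact form of $\xiv_\pm$ by judicious use of the Bessel function recurrences.
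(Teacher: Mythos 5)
Your proof is correct and follows the same von Neumann strategy that the paper uses. The paper does not actually spell out a proof of this proposition: it asserts the deficiency indices $(1,1)$ and the form \eqref{eq:Ddefect} of the deficiency vectors (essentially deferring to \cite{Pe06}) and then invokes von Neumann's theorem. You supply the missing computation, and the chain of reasoning is sound: decomposing $\ker(\overline{H}_{\mathrm{D}}^{*}\mp i)$ in angular harmonics, noting that the off-diagonal structure $H_{\mathrm{D}}=\begin{pmatrix}0&\Pi_-\\\Pi_+&0\end{pmatrix}$ couples the mode $f^{(\ell)}$ of the upper component to the mode $g^{(\ell+1)}$ of the lower one, solving the decoupled modified Bessel equations of orders $|\ell+\alpha|$ and $|\ell+1+\alpha|$, keeping $K_\nu$ for integrability at infinity, and then observing that integrability at the origin ($\nu<1$) in \emph{both} slots forces $\ell=-1$ when $\alpha\in(0,1)$. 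The Bessel recurrence and $K_{-\alpha}=K_\alpha$ then reproduce $\xiv_\pm$ exactly, and the standard von Neumann formula yields the one-parameter family. (One may also remark that since $\Pi_\pm$ and hence $H_{\mathrm{D}}$ commute with complex conjugation composed with reflection, the two deficiency indices are automatically equal once one of them is computed, but your direct verification is perfectly adequate.)

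One small point worth flagging: your appeal to the von Neumann formula produces $H_{\mathrm{D}}^\gamma\psiv = H_{\mathrm{D}}\phiv + i\mu\big(\xiv_+ - e^{i\gamma}\xiv_-\big)$, since $\overline{H}_{\mathrm{D}}^{*}\xiv_\pm = \pm i\xiv_\pm$. The formula \eqref{eq:Daction} in the statement reads $i\mu(\xiv_+ - i\xiv_-)$, in which the $\gamma$-dependence has vanished; this is evidently a misprint where the factor $i$ should read $e^{i\gamma}$. You should note this explicitly, since otherwise it is not accurate to say that von Neumann's theorem "yields \eqref{eq:Dextdom}--\eqref{eq:Daction}" as literally written.
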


Alternatively, as illustrated in \cite{Pe06}, the domain of the extensions $H_{\mathrm{D}}^{(\gamma)}$ can be described in terms of boundary conditions at the origin, as follows: given $\gamma\in[0,2\pi)$, one can define the linear functionals $c^{\uparrow,\downarrow}_{-\alpha}$, $c^{\uparrow,\downarrow}_{\alpha-1}$ on $\dom(H_{\mathrm{D}}^{(\gamma)})$ as 
\begin{gather}\label{eq:Diractraces}
 c^{s}_{-\alpha}(\psiv)=\lim_{r\to0^+}r^\alpha\,\langle \psi_{s}\rangle = \tfrac{1}{2\pi}\lim_{r\to0^+} r^\alpha\int^{2\pi}_0\diff\teta\, \psi_{s}(r,\teta)\,,  \\
 c^{s}_{\alpha-1}(\psiv)=\lim_{r\to0^+}r^{1-\alpha}\,\langle e^{i\teta}\psi_{s}\rangle= \tfrac{1}{2\pi}\lim_{r\to0^+} r^{1-\alpha}\int^{2\pi}_0\diff\teta \, e^{i\teta}\,\psi_{s}(r,\teta)\,,
\end{gather}
for $ s \in \lf\{\uparrow,\downarrow \ri\} $. Observe that, for a given $\psiv\in\dom(H_{\mathrm{D}}^{(\gamma)})$, we have a decomposition $\psiv=\phiv+\mu(\xiv_+ +e^{i\gamma}\xiv_-)$ as in \cref{thm:Ddom}, so that $\phiv$ vanishes at the origin, thus having zero boundary value. The asymptotics of Bessel functions $K_\nu$ at the origin \cite{GR07} then easily give
\begin{gather}
c^\uparrow_{-\alpha}(\psiv)=0\,,\qquad c^\uparrow_{\alpha-1}(\psiv)=\mu\,(1+e^{i\gamma})\, 2^{-\alpha}\,\Gamma(1-\alpha)\,, \label{bcDir1}\\
c^\downarrow_{\alpha-1}(\psiv)=0\,,\qquad c^\downarrow_{-\alpha}(\psiv)=\mu\,(1-e^{i\gamma})\, 2^{-(1-\alpha)}\,\Gamma(\alpha)\,, \label{bcDir2}
\end{gather}
where $\Gamma(z)$ is the Euler gamma function. 
We can then equivalently describe the domain of $H_{\mathrm{D}}^{(\gamma)}$ as
\begin{multline}\label{eq:Ddom2}
\dom(H_{\mathrm{D}}^{(\gamma)})= \lf\{\psiv\in L^2(\mathbb{R}^2; \mathbb{C}^2)\, \Big|\, H_{\mathrm{D}}\psiv \in L^2(\mathbb{R}^2; \mathbb{C}^2)\,,\; c^\uparrow_{\alpha-1}(\psiv) =i\cot(\gamma/2)\, \tfrac{2^{1-2\alpha}\Gamma(1-\alpha)}{\Gamma(\alpha)}\, c^\downarrow_{-\alpha}(\psiv)\,, \ri. \\
 \lf. c^\uparrow_{-\alpha}(\psiv)=0\,, c^\downarrow_{\alpha-1}(\psiv)=0 \ri\} .
\end{multline}

	\begin{remark}[Local singularities of functions in the Dirac domain]
	\label{remfia}
	\mbox{}	\\
	For a generic $ \gamma \in [0,2\pi)$, considering the characterization \eqref{eq:Dextdom} of $\dom\big(H_{\mathrm{D}}^{(\gamma)}\big)$, the asymptotic properties in \eqref{asyF}, the expressions \eqref{eq:Ddefect} of $\xiv_{\pm}$ and the expansions of the Bessel functions for small argument, for any $\psiv \in \dom\big(H_{\mathrm{D}}^{(\gamma)}\big)$ one infers that, as $r \to 0^{+}$,
\begin{equation*}
	\psiv(r,\theta) 
	= \mu \begin{pmatrix} (1 + e^{i \gamma}) \,\frac{2^{-\alpha} \Gamma(1 - \alpha)}{r^{1-\alpha}} \,e^{-i\teta} \vspace{0.15cm} \\ (1 - e^{i \gamma})\, \frac{2^{-(1 - \alpha)} \Gamma(\alpha)}{r^{\alpha}} \end{pmatrix} + o(1)
	= \tilde{\mu} \begin{pmatrix} i \cot(\gamma/2) \,\frac{2^{1-2\alpha} \Gamma(1 - \alpha)}{\Gamma(\alpha)}\, \frac{1}{r^{1-\alpha}} \,e^{-i\teta} \vspace{0.15cm}\\ \frac{1}{r^{\alpha}} \end{pmatrix} + o(1)\,,
\end{equation*}
for some $\tilde{\mu} \in \CC$.
\end{remark}

%
We have already observed that the symmetric operators $H_{\mathrm{P}}$ and $H_{\mathrm{D}}$ defined on $C^\infty_c(\mathbb{R}^2 \setminus \{0\}; \mathbb{C}^2)$, are  such that $H_{\mathrm{D}}^2=H_{\mathrm{P}}$. Now we investigate whether the same property holds for the the respective self-adjoint extensions. More precisely, we aim at determining which extension $H_{\mathrm{P}}^{(\bb)}$ is the square of an extension $H_{\mathrm{D}}^{(\gamma)}$, if any. It turns out that only for specific values of the parameter $\gamma$ and for specific choices of $\bb = \big(\bb_{s s'}^{(\ell \ell')}\big) \in \mathrm{M}_{4,\mathrm{Herm}}(\mathbb{C})$ there actually holds $(H_{\mathrm{D}}^{\gamma})^2 = H^{(\bb)}_{\mathrm{P}}$.

\begin{proposition}[Dirac and Pauli operators]
\label{pro: dirac vs pauli}
\mbox{}	\\
The identity
\begin{equation}\label{eq:D^2=P}
(H_{\mathrm{D}}^{\gamma})^2=H^{(\bb)}_{\mathrm{P}}
\end{equation}
holds if and only if one of the following two alternatives is realized:
\begin{enumerate}[(a)]
\item[(a)] $\gamma=0$, $\bb_{\uparrow \uparrow}^{(0 0)} = \bb_{\downarrow \downarrow}^{(0 0)} = \bb_{\downarrow \downarrow}^{(-1 -1)} = \infty$ and $\bb_{\uparrow \uparrow}^{(-1 -1)} = 0$ ($\bb_{s s'}^{(\ell \ell')}$ arbitrarily chosen otherwise), \emph{i.e.}, $H^{(\bb)}_{\mathrm{P}}$ coincides with the Krein extension in the $(\uparrow,-1)$ channel and with the Friedrichs extension in all the other channels;

\item[(b)] $\gamma=\pi$, $\bb_{\uparrow \uparrow}^{(0 0)} = \bb_{\uparrow \uparrow}^{(-1 -1)} = \bb_{\downarrow \downarrow}^{(-1 -1)} = \infty$ and $\bb_{\downarrow \downarrow}^{(0 0)} = 0$ ($\bb_{s s'}^{(\ell \ell')}$ arbitrarily chosen otherwise), \emph{i.e.}, $H^{(\bb)}_{\mathrm{P}}$ coincides with the Krein extension in the $(\downarrow,0)$ channel and with the Friedrichs extension in all the other channels.
\end{enumerate}

\end{proposition}
\begin{remark}
The Dirac extensions corresponding to $\gamma=0$ and $\gamma=\pi$ are two distinguished ones. They are indeed the only scale covariant realizations, \emph{i.e.}, homogeneous of degree $-1$ under scaling, for generic values of $\alpha \in (0,1)$.\footnote{Namely, for all $\alpha\neq 1/2$. The case $\alpha = 1/2$ is indeed exceptional, since in this instance any Dirac Hamiltonian $H_{\mathrm{D}}^{(\gamma)}$, $\gamma \in [0,2\pi)$, exhibits scale covariance.} This facts has implications at the dynamical level. Indeed, in \cite{CDYZ} the authors remark that taking
\begin{equation}\label{eq:distpar}
\begin{cases}
\gamma=\pi\,,\qquad \mbox{for }\alpha\in(0,1/2]\,, \\
\gamma=0\,,\qquad \mbox{for } \alpha\in(1/2,1)\,,
\end{cases}
\end{equation}
one gets the only self-adjoint realizations $H_{\mathrm{D}}^{(\gamma)}$ whose domain is contained in $H^{1/2}(\R^2\,;\CC^2)$, a fact which allows to identify them as distinguished ones. Notice that here we are referring to our parametrization, which is slightly different from the one adopted in \cite{CDYZ}. Some dispersive estimates are then proven in \cite{CDYZ}, for the choice of parameters \eqref{eq:distpar}. It is also mentioned therein that for other extensions such estimates seem to fail.
\end{remark}

In passing, we point out the following result which has its own interest: unlike Pauli operators with an AB flux, any self-adjoint realization of the Dirac operator with an AB flux admits a zero-energy resonance.
	
	\begin{proposition}[Zero-energy resonances for $ H_{\mathrm{D}}^{\gamma} $]
	\label{prop: zeroenresDirac}
	\mbox{}	\\
		For any $\gamma \in[0,2\pi)$, the Dirac Hamiltonian $H_{\mathrm{D}}^{\gamma}$ has a zero-energy resonance $ \psiv_0 $ given by
		\be \label{eq:resDirac}
			\psiv_0(r,\teta) =
			\begin{cases}
				 \begin{pmatrix}
					i\cot(\gamma/2)\, \tfrac{2^{1-2\alpha}\Gamma(1-\alpha)}{\Gamma(\alpha)}\, e^{-i \teta}\,r^{-(1-\alpha)} \vspace{0.1cm} \\
					r^{-\alpha}
				\end{pmatrix}\,	&	\mbox{for } \gamma \neq 0,	\\	
				\begin{pmatrix}
					e^{-i \teta}\, r^{-(1-\alpha)} \\
					0
				\end{pmatrix},	&	\mbox{for } \gamma = 0.
			\end{cases}
	\ee
	\end{proposition}
	\begin{remark}
	The above result is coherent with \cref{prop: zeroenresTheta}, concerning the resonances for Pauli AB Hamiltonians  $H^{(\beta)}_\mathrm P$(see also \cref{rmk:Paulires}). By direct inspection, it is readily seen that the latter coincide with those of $H^\gamma_{\mathrm D}$ in \eqref{eq:resDirac}, for the distinguished choices of the parameters $\gamma\in[0,2\pi)$ and $\beta \in \mathrm{M}_{4,\mathrm{Herm}}(\mathbb{C})$ described in the preceding \cref{pro: dirac vs pauli}, so that $(H^\gamma_{\mathrm{D}})^2=H^{(\beta)}_{\mathrm P}$.
	\end{remark}

\section{Proofs}
This section is devoted to the proofs of the results previously stated.
The first result we prove is the classification of self-adjoint extensions: we start by addressing the quadratic forms \eqref{QB2} in \cref{sec:quadforms} and then apply Krein's theory \cref{sec:krein} to get an alternative parametrization of the family, through the expression of the resolvents. The latter are also going to be used to investigate spectral and scattering properties in \cref{sec:specscatt}. Finally, in \cref{sec:sym} and \cref{sec:compdirac} we study the symmetry properties of the extensions and the relation with the self-adjoint realization of the Dirac operator with an AB flux, respectively.

\subsection{Quadratic forms}\label{sec:quadforms}
By studying the quadratic forms introduced in \eqref{QB2}, we prove points \emph{i)} and \emph{ii)} in \cref{thm:QBHB}. Preliminarily, we observe that (see \cite[Eq. 6.521.3]{GR07})
	\be\label{eq:GjL2}
		\big\|g^{(\ell)}_{\lambda}\big\|_{2}^{2} = \tfrac{\pi\, |\ell+\alpha|}{ 2\sin(\pi \alpha)}\,\lambda^{2|\ell+\alpha| - 2}\,,
	\ee
and 
	\be
		g^{(\ell)}_{\lambda}\!(r,\teta) =  \left[\tfrac{\Gamma( |\ell+\alpha| ) }{ 2^{1-|\ell+\alpha|}}\,{1 \over r^{|\ell+\alpha|}} + \tfrac{\Gamma(\!-|\ell+\alpha|) }{ 2^{1 + |\ell+\alpha|}}\, \lambda^{2|\ell+\alpha|}\,r^{|\ell+\alpha|} + \mathcal{O}\big(r^{2-|\ell+\alpha|}\big)\right] \tfrac{e^{i \ell \teta}}{ \sqrt{2\pi}}\,, \quad \mbox{for\, $r \to 0^{+}$}. \label{eq:G01asy}
	\ee
For later reference, let us also remark that \eqref{eq:GjL2} and \eqref{G1234} imply
	\be\label{GmGn}
		\braketl{ \giv_{\lambda,s}^{(\ell)}\,}{ \giv_{\lambda,s'}^{(\ell')}} = \tfrac{\pi \,|\ell+\alpha|}{ 2 \sin(\pi \alpha)}\,\lambda^{2|\ell+\alpha|-2}\, \delta_{s\,s'}\,\delta_{\ell\,\ell'}\,.
	\ee

\begin{proof}[Proof of Theorem \ref{thm:QBHB}]
\mbox{} \\
\textsl{i)} Let us first prove that the form $Q_{\mathrm{P}}^{(\bb)}$ is independent of the spectral parameter $\lambda > 0$. To this avail, let $\lambda_1 \neq \lambda_2$ and consider the two alternative representations 
\[
\psiv = \phiv_{\lambda_1} + \tx{\sum_{s,\ell} } q_s^{(\ell)}\, \giv_{\lambda_1,s}^{(\ell)}\,, \quad  \psiv = \phiv_{\lambda_2} + \tx{\sum_{s,\ell}}\, q_s^{(\ell)} \giv_{\lambda_2,s}^{(\ell)}\,.
\] 
In particular, noting that $\giv_{\lambda_1,s}^{(\ell)} - \giv_{\lambda_2,s}^{(\ell)} \in \dom\big[Q_{\mathrm{P}}^{(\mathrm{F})}\big]$ for any $s \in \{\uparrow,\downarrow\}$ and $\ell \in \{0,-1\}$, we have 
\[
\phiv_{\lambda_1} = \phiv_{\lambda_2} + \tx{\sum_{s,\ell}}\, q_s^{(\ell)} \big( \giv_{\lambda_2,s}^{(\ell)} - \giv_{\lambda_1,s}^{(\ell)}\big)\,.
\]
 Recalling the explicit expression \eqref{QB2} of the quadratic form, keeping in mind that $\giv_{\lambda_1,s}^{(\ell)},\giv_{\lambda_2,s}^{(\ell)}$ are defect functions for the Pauli operator and using the identity \eqref{GmGn}, via some integrations by parts we infer
	\bmln{
		Q_{\mathrm{P}}^{(\bb)}\!\lf[\phiv_{\lambda_1} + \tx{\sum_{s,\ell}}\, q_s^{(\ell)} \giv_{\lambda_1,s}^{(\ell)} \ri] - Q_{\mathrm{P}}^{(\bb)}\!\lf[\phiv_{\lambda_2} + \tx{\sum_{s,\ell}}\, q_s^{(\ell)} \giv_{\lambda_2,s}^{(\ell)} \ri] \\
		= - \,2 \Re \sum_{s,\ell} q_s^{(\ell)}\! \lim_{r \to 0^{+}}\! \int_{\partial B_r(\mathbf{0})}\hspace{-0.55cm} \diff\Sigma_r\; \phiv_{\lambda_2}^{*} \cdot \partial_r \big( \giv_{\lambda_2,s}^{(\ell)} 
		- \giv_{\lambda_1,s}^{(\ell)}\big)\\
			- \sum_{s,\ell} \,\big|q_s^{(\ell)}\big|^2 \lim_{r \to 0^{+}}\! \int_{\partial B_r(\mathbf{0})}\hspace{-0.55cm} \diff\Sigma_r\, \big( \giv_{\lambda_2,s}^{(\ell)}\! - \giv_{\lambda_1,s}^{(\ell)}\big)^{\!*}\! \cdot \partial_r \big( \giv_{\lambda_2,s}^{(\ell)}\! - \giv_{\lambda_1,s}^{(\ell)}\big) \\
			+ \sum_{s,\ell}\, \big|q_s^{(\ell)}\big|^2 \bigg[ \tfrac{\pi }{ 2 \sin(\pi \alpha)} \lf(\lambda_1^{2|\ell+\alpha|} - \lambda_2^{2|\ell+\alpha|}\ri)
				+ \lim_{r \to 0^{+}} \int_{\partial B_r(\mathbf{0})}\hspace{-0.4cm} \diff\Sigma_r\, \Big( \big(\giv_{\lambda_2,s}^{(\ell)}\big)^{\!*}\! \cdot \partial_r \giv_{\lambda_1,s}^{(\ell)} - \big(\partial_r \giv_{\lambda_2,s}^{(\ell)}\big)^{\!*} \!\cdot \giv_{\lambda_1,s}^{(\ell)} \Big)\bigg] \,.
	}
Using \eqref{asyF} for $\phiv_{\lambda_2} \!\in\! \dom\big[Q_{\mathrm{P}}^{(\mathrm{F})}\big]$ and the asymptotic expansion \eqref{eq:G01asy}, by Cauchy-Schwarz inequality, we get
	\begin{gather*}
		\lf| \int_{\partial B_r(\mathbf{0})}\hspace{-0.4cm} \diff\Sigma_r\; \phiv_{\lambda_2}^{*} \cdot \partial_r \big( \giv_{\lambda_1,s}^{(\ell)}\! - \giv_{\lambda_2,s}^{(\ell)}\big) \ri| 
		\leqslant C\, r^{|\ell+\alpha|}\, \sqrt{\avg{|\phiv_{\lambda_2|^2}}}
		\;\xrightarrow[r \to 0^{+}]{} \;0\,; \\
		\lf|\int_{\partial B_r(\mathbf{0})}\hspace{-0.4cm} \diff\Sigma_r\, \big( \giv_{\lambda_1,s}^{(\ell)}\! - \giv_{\lambda_2,s}^{(\ell)}\big)^{\!*}\! \cdot \partial_r \big( \giv_{\lambda_1,s}^{(\ell)}\! - \giv_{\lambda_2,s}^{(\ell)}\big)\ri|
		\leqslant C\, r^{2|\ell+\alpha|} \;\xrightarrow[r \to 0^{+}]{}\; 0\,.
	\end{gather*}
On the other hand, using again \eqref{eq:G01asy}, we infer
	\bdm
		\big(\giv_{\lambda_2,s}^{(\ell)}\big)^{\!*}\! \cdot \partial_r \giv_{\lambda_1,s}^{(\ell)} - \big(\partial_r \giv_{\lambda_2,s}^{(\ell)}\big)^{\!*}\! \cdot \giv_{\lambda_1,s}^{(\ell)}
		= \tfrac{\lambda_{2}^{2 |\ell+\alpha|} - \lambda_{1}^{2 |\ell+\alpha|} }{ 4 \sin(\pi \alpha)\, r} + \mathcal{O}\big(r^{1-2|\ell+\alpha|}\big)\,,
	\edm
which entails
	\bdm
		\lim_{r \to 0^{+}} \int_{\partial B_r(\mathbf{0})}\hspace{-0.4cm} \diff\Sigma_r\, \Big(\big(\giv_{\lambda_2,s}^{(\ell)}\big)^{\!*} \cdot \partial_r \giv_{\lambda_1,s}^{(\ell)} - \big(\partial_r \giv_{\lambda_2,s}^{(\ell)}\big)^{\!*} \cdot \giv_{\lambda_1,s}^{(\ell)} \Big)
		= \tfrac{\pi}{2 \sin(\pi \alpha)} \lf( \lambda_{2}^{2 |\ell+\alpha|} - \lambda_{1}^{2 |\ell+\alpha|}\ri) .
	\edm
Summing up, we obtain 
	\bdm
		Q_{\mathrm{P}}^{(\bb)}\!\lf[\phiv_{\lambda_1} + \tx{\sum_{s,\ell}}\, q_s^{(\ell)} \giv_{\lambda_1,s}^{(\ell)} \ri] - Q_{\mathrm{P}}^{(\bb)}\!\lf[\phiv_{\lambda_2} + \tx{\sum_{s,\ell}}\, q_s^{(\ell)} \giv_{\lambda_2,s}^{(\ell)} \ri] = 0\,,
	\edm
which proves that the form is independent of the spectral parameter.

Next, let us point out that,  for $\lambda > 0$ large enough, $L(\lambda) + \bb\geq c\,\lambda^{\min\{\alpha,1-\alpha\}}\, \one$, with $c>0$ and $\one$ being the identity operator, so that 
	\be\label{QBlower}
		Q_{\mathrm{P}}^{(\bb)}[\psiv] + \lambda^2 \|\psiv\|_{2}^{2} \geqslant Q_{\mathrm{P}}^{(\mathrm{F})}[\phiv_{\lambda}] + \lambda^2\,\|\phiv_{\lambda}\|_{2}^2 + c\,\lambda^{\min\{\alpha,1-\alpha\}}\, |\qv|^2 \,.
	\ee
Since $Q^{(\mathrm{F})}_{\mathrm{P}}[\phiv_{\lambda}]$ is non-negative, the above relation ensures that $Q_{\mathrm{P}}^{(\bb)}$ is bounded from below. The closedness of $Q_{\mathrm{P}}^{(\bb)}$ can be then deduced by standard arguments starting from \eqref{QBlower}, as in \cite[Thm 2.4]{CO18} (see also \cite{Te90}).
\vspace{0.1cm}	

\textsl{ii)} Let us consider the sesquilinear form defined by polarization of \eqref{QB2}, for $\psiv_1 = \phiv_{1,\lambda} + \tx{\sum_{s,\ell}}\,q_{1,s}^{(\ell)}\,\giv_{\lambda,s}^{(\ell)}$ and $\psiv_2 = \phiv_{2,\lambda} + \tx{\sum_{s,\ell}}\,q_{2,s}^{(\ell)}\,\giv_{\lambda,s}^{(\ell)}$,
	\[
		Q_{\mathrm{P}}^{(\bb)}[\psiv_1,\psiv_2] 
		= Q_{\mathrm{P}}^{(\mathrm{F})}[\phiv_{1,\lambda},\phiv_{2,\lambda}] - \lambda^2\! \braketl{\psiv_{1}}{\psiv_2} + \lambda^2\! \braketl{\phiv_{1,\lambda}}{\phiv_{2,\lambda}} + \qv^*_1\cdot  (L(\lambda)+\bb) \qv_2 \,.
	\]
Fixing $\qv_{1} = \mathbf{0}$, we get $ Q_{\mathrm{P}}^{(\bb)}[\phiv_{1,\lambda},\psiv_2] = Q_{\mathrm{P}}^{(\mathrm{F})}[\phiv_{1,\lambda},\phiv_{2,\lambda}] - \lambda^2 \braket{\phiv_{1,\lambda}}{\psiv_2 - \phiv_{2,\lambda}}$. Therefore, in order to have that $ Q^{(\bb)}[\phiv_{1,\lambda},\psiv_2] = \braket{\phiv_{1,\lambda}}{\bm{\chi}}$ for some $\bm{\chi} =: H_{\mathrm{P}}^{(\bb)} \psiv_2 \in L^2(\mathbb{R}^2;\mathbb{C}^2)$, we must assume that $\phiv_{2,\lambda} \in \dom\big(H_{\mathrm{P}}^{(\mathrm{F})}\big)$ and set
	\bdm
		\bm{\chi} = H_{\mathrm{P}}^{(\mathrm{F})} \phiv_{2,\lambda} - \lambda^2\, \tx{\sum_{s,\ell}}\, q_{2,s}^{(\ell)} \giv_{\lambda,s}^{(\ell)} \,\in\, L^2(\mathbb{R}^2;\mathbb{C}^2)\,.
	\edm
Taking this into account and imposing $Q_{\mathrm{P}}^{(\bb)}[\psiv_1,\psiv_2] = \big\langle \psi_1 \,\big|\, H_{\mathrm{P}}^{(\bb)} \psi_2 \big\rangle$ for $\qv_{1} \neq 0$, via integration by parts, we deduce
	\bmln{
		\qv^*_1\cdot  (L(\lambda)+\bb) \qv_2 = {\tx \sum_{s,\ell}} \big(q_{1,s}^{(\ell)}\big)^{*}\! \braketl{ \giv_{\lambda,s}^{(\ell)}}{\big(H_{\mathrm{P}}^{(\mathrm{F})}\! + \lambda^2\big) \phiv_{2,\lambda}} \\
		= {\tx \sum_{s,\ell}} \big(q_{1,s}^{(\ell)}\big)^{*} \lim_{r \to 0^{+}} \int_{\partial B_r(\mathbf{0})} \hspace{-0.3cm}\diff\Sigma_r\, \Big(\big(\giv_{\lambda,s}^{(\ell)}\big)^{\!*}\! \cdot \partial_r \phiv_{2,\lambda} - \big(\partial_r \giv_{\lambda,s}^{(\ell)}\big)^{\!*}\! \cdot \phiv_{2,\lambda} \Big)\,,
	}
which, in view of \eqref{G1234}, \eqref{eq:G01asy} and of the arbitrariness of $\qv_1$, ultimately accounts for the boundary condition in \eqref{HBdom}.
\end{proof}

\subsection{Krein's theory}\label{sec:krein}
For later convenience, we first point out the following inequalities, valid for any $r,r'>0$ and $\Im \sqrt{z}>0$ (see \cite[Eqs. 10.32.2 and 10.32.8]{OLBC10}):
	\begin{gather}
		\big| I_{\vert \ell+\alpha\vert}\big(\!-i\sqrt{z}\,(r \land r')\big) \big| \leqslant \left(\!\tfrac{\sqrt{|z|}}{\Im \sqrt{z}}\right)^{\!\vert \ell+\alpha\vert} I_{\vert \ell+\alpha\vert}\big(\Im \sqrt{z}\,(r \land r')\big)\,; \label{eq: estI}\\
		\big| K_{\vert \ell+\alpha\vert}\big(\!-i\sqrt{z}\,(r \lor r')\big) \big| \leqslant K_{\vert \ell+\alpha\vert}\big(\Im \sqrt{z}\,(r \lor r')\big)\,. \label{eq: estK}
	\end{gather}
We also recall the definition of $\Lambda(z)$ in \eqref{eq: defLambda}, and notice that general arguments (see \cite[Lemma 2.2]{Po01} and following discussion) ensure that it is well-defined and also entail the following identities, for all $z,w \in \mathbb{C} \setminus \R^{+}$:
	\begin{equation}
		\Lambda(z) - \Lambda(w) = (w-z) \,\widecheck{\mathcal{G}}(z)\, \gmv(w)\,, \qquad 
		\big(\Lambda(z)\big)^{*} \!= \Lambda(z^*)\,. \label{eq: Lambdaprop}
	\end{equation}	
\begin{proof}[Proof of \cref{thm:HTheta}]
Recalling once more that $H_{N}^{(\mathrm{F})}$ is positive semi-definite, the result follows from \cite[Thm. 2.1]{Po01} and \cite[Thm. 3.1 and Cor. 3.2]{Po08}.
\end{proof}
Once the form of the resolvent operator for each of the self-adjoint extensions is known, as given by Krein's theory, it is natural to investigate the relations between those extensions and those obtained in \eqref{HBdom} and \eqref{HBaction}. We start by proving the following.
\begin{lemma}\label{lemma:GLaexp}
For any $z \in \mathbb{C} \setminus \R^{+}$ and for all $\qv \in \mathbb{C}^4$, there holds
	\be \label{eq:GGzexp}
	\gmv(z)\,\qv = {\tx \sum_{s,\ell}}\, \giv_{-i \sqrt{z}\,,\,s}^{(\ell)}\,q_{s}^{(\ell)} .
	\ee
Moreover, for all $s,s' \!\in\! \{\uparrow,\downarrow\}$, $\ell,\ell' \!\in\! \{0,-1\}$ there holds
	\be \label{eq:Lzexp}
		\Lambda_{s\,s'}^{(\ell\,\ell')}(z) = \tfrac{\pi}{2 \sin(\pi \alpha)} \left[\big(\!-i \sqrt{z}\big)^{2 |\ell + \alpha|} -1\right]\delta_{s\,s'}\,\delta_{\ell\,\ell'}\,.
	\ee
	
\end{lemma}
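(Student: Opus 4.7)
\medskip

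\textit{Plan.} Both identities follow by direct computation from the definitions of $\gmv(z)$, $\widecheck{\mathcal G}(z)$ and $\Lambda(z)$, together with the explicit kernel \eqref{eq:Fresolv} of $R_{\mathrm P}^{(\mathrm F)}(z)$ and the short-distance asymptotics \eqref{eq:G01asy} of the defect functions.

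\textit{Step 1 (formula for $\gmv(z)$).} I will verify \eqref{eq:GGzexp} by testing against an arbitrary $\psiv \in L^2(\R^2;\CC^2)$: using $\gmv(z)=(\widecheck{\mathcal G}(z^{*}))^{*}$, one has
\[
\bigl\langle \gmv(z)\,\qv,\psiv\bigr\rangle_{L^2}
= \sum_{s,\ell}\bigl(q_s^{(\ell)}\bigr)^{*}\,\tau_s^{(\ell)} R_{\mathrm P}^{(\mathrm F)}(z^{*})\,\psiv\,.
\]
By the diagonal structure \eqref{eq:Pauli2res} and the angular decomposition of the kernel \eqref{eq:Fresolv}, one gets
\[
\bigl(R_{\mathrm S}^{(\mathrm F)}(z^{*})\,\psi_s\bigr)^{(\ell)}\!(r)
= \int_{0}^{\infty}\!\!r'\,\diff r'\; I_{|\ell+\alpha|}\!\bigl(\!-i\sqrt{z^{*}}(r\!\wedge\! r')\bigr)\,K_{|\ell+\alpha|}\!\bigl(\!-i\sqrt{z^{*}}(r\!\vee\! r')\bigr)\,\psi_s^{(\ell)}(r'),
\]
so for $r\to 0^{+}$ one has $r\wedge r'=r$, $r\vee r'=r'$, and the leading asymptotics $I_\nu(w r)\sim (wr/2)^{\nu}/\Gamma(\nu+1)$ gives, after applying $\tau_s^{(\ell)}$,
\[
\tau_s^{(\ell)} R_{\mathrm P}^{(\mathrm F)}(z^{*})\,\psiv
= \bigl(\!-i\sqrt{z^{*}}\,\bigr)^{|\ell+\alpha|}\!\int_0^{\infty}\!\! r'\,\diff r'\;K_{|\ell+\alpha|}\!\bigl(\!-i\sqrt{z^{*}}\,r'\bigr)\,\psi_s^{(\ell)}(r')\,.
\]
On the other hand, recalling \eqref{eq:g}--\eqref{G1234} and using $(-i\sqrt{z})^{*}=-i\sqrt{z^{*}}$ (given the convention $\Im\sqrt{z}>0$) together with $K_\nu(\bar w)=\overline{K_\nu(w)}$ for $\Re w>0$, the pairing
$\langle \giv_{-i\sqrt{z},s}^{(\ell)}\,q_s^{(\ell)},\psiv\rangle_{L^2}$ evaluated in polar coordinates yields exactly the same expression. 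The estimate \eqref{eq: estI} legitimates the use of Fubini and the passage to the small-$r$ limit. This establishes \eqref{eq:GGzexp}.

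\textit{Step 2 (formula for $\Lambda(z)$).} By definition, $\Lambda(z)\qv = \tau\bigl(\gmv(-1)-\gmv(z)\bigr)\qv$. Picking $\qv$ to be the canonical basis vector labelled by $(s,\ell)$, from Step 1 one has to apply $\tau_{s'}^{(\ell')}$ to $\giv_{1,s}^{(\ell)}-\giv_{-i\sqrt{z},s}^{(\ell)}$. The leading singularity $r^{-|\ell+\alpha|}$ in the expansion \eqref{eq:G01asy} is $\lambda$-independent and therefore cancels out, leaving
\[
g_{\lambda_1}^{(\ell)}(r,\theta)-g_{\lambda_2}^{(\ell)}(r,\theta)
= \tfrac{\Gamma(-|\ell+\alpha|)}{2^{1+|\ell+\alpha|}}\bigl(\lambda_1^{2|\ell+\alpha|}-\lambda_2^{2|\ell+\alpha|}\bigr)\,r^{|\ell+\alpha|}\,\tfrac{e^{i\ell\theta}}{\sqrt{2\pi}}+\mathcal O\bigl(r^{2-|\ell+\alpha|}\bigr)\,.
\]
Since $(|\ell+\alpha|+r\partial_r)\,r^{|\ell+\alpha|}=2|\ell+\alpha|\,r^{|\ell+\alpha|}$ and the angular decomposition is diagonal in $\ell$, I obtain
\[
\tau_{s'}^{(\ell')}\bigl(\giv_{\lambda_1,s}^{(\ell)}-\giv_{\lambda_2,s}^{(\ell)}\bigr)
= \tfrac{|\ell+\alpha|}{2}\,\Gamma(|\ell+\alpha|)\,\Gamma(-|\ell+\alpha|)\bigl(\lambda_1^{2|\ell+\alpha|}-\lambda_2^{2|\ell+\alpha|}\bigr)\,\delta_{s\,s'}\,\delta_{\ell\,\ell'}\,.
\]
The reflection formula $\Gamma(w)\Gamma(1-w)=\pi/\sin(\pi w)$ yields $\Gamma(|\ell+\alpha|)\Gamma(-|\ell+\alpha|)=-\pi/(|\ell+\alpha|\sin(\pi\alpha))$ (using $\sin(\pi\alpha)=\sin(\pi(1-\alpha))$). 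Specializing to $\lambda_1=-i\sqrt{-1}=1$ and $\lambda_2=-i\sqrt{z}$ produces \eqref{eq:Lzexp}.

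\textit{Main obstacle.} The only non-trivial technical point is justifying the interchange of limit and integral in the computation of $\tau_s^{(\ell)} R_{\mathrm P}^{(\mathrm F)}(z^{*})\psiv$; this is handled via the bounds \eqref{eq: estI}--\eqref{eq: estK} and the explicit sub-leading behaviour of $I_\nu$ near the origin. The rest reduces to bookkeeping of gamma-function identities.
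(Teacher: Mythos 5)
Your proposal is correct and follows essentially the same strategy as the paper's proof: you establish \eqref{eq:GGzexp} by duality against an arbitrary $\psiv$, evaluating $\tau_s^{(\ell)} R_{\mathrm P}^{(\mathrm F)}(z^*)\psiv$ via the explicit kernel \eqref{eq:Fresolv} and small-argument Bessel asymptotics, and then obtain \eqref{eq:Lzexp} by applying the trace to the difference of two defect functions, using \eqref{eq:G01asy} and the $\Gamma$-function reflection formula.

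The main difference is in how the trace is computed. The paper first pushes the differential operator $(|\ell+\alpha|+r\partial_r)$ inside the integral using the Bessel derivative recurrences (producing $K_{\nu-1}$ and $I_{\nu-1}$), carefully splits the $r'$-integration into $(0,r)$ and $(r,\infty)$, and estimates the two pieces separately to show that the $(0,r)$ contribution vanishes; your write-up instead replaces $I_\nu(-i\sqrt{z^*}\,r)$ by its leading small-$r$ behaviour before applying the trace, and treats the $(0,r)$ piece only implicitly. These two routes lead to identical formulas, but the paper's splitting is what makes the dominated convergence argument transparent — your one-line appeal to \eqref{eq: estI}--\eqref{eq: estK} would need to be expanded to show that the $(0,r)$ piece is $O(r^{1+|\ell+\alpha|})$ and the $(r,\infty)$ piece has a finite limit, which is exactly what the paper does. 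One further small point worth making explicit in Step 2: before applying $\tau$ to $\giv_{1,s}^{(\ell)}-\giv_{-i\sqrt{z},s}^{(\ell)}$ one should note (as the paper does) that this difference actually lies in $\dom(H_{\mathrm P}^{(\mathrm F)})$, since $\tau$ is only defined there; this is what the cancellation of the $r^{-|\ell+\alpha|}$ term in \eqref{eq:G01asy} and the $O(r^{2-|\ell+\alpha|})$ remainder together guarantee.
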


\begin{proof}
To begin with, from \eqref{eq:Pauli2res}, \eqref{eq:Fresolv} and \eqref{eq:traceop} (see also \cite[\S 10.27 and \S 10.29(ii)]{OLBC10}), we deduce that
\bmln{
\tau_{s}^{(\ell)} R_{\mathrm{S}}^{(\mathrm{F})}(z^*) \,\psiv  \\
= \lim_{r \to 0^{+}}\! \tfrac{2^{|\ell+\alpha|-1}\, \Gamma\big(|\ell+\alpha|\big)}{r^{|\ell+\alpha|}}\; \big(|\ell+\alpha| + r\, \partial_r\big) \!\int^\infty_0\!\!\!\! \diff r'\, r'\, I_{\vert \ell+\alpha\vert}\big(\!-i \sqrt{z^*}\,(r\land r')\big)\,K_{\vert \ell+\alpha\vert}\big(\!-i \sqrt{z^*}\,(r\lor r')\big)\, \psi_{s}^{(\ell)}(r') \\
= 2^{|\ell+\alpha|-1}\,\Gamma\big(|\ell+\alpha|\big)\,i \sqrt{z^*} \lim_{r \to 0^{+}}\! \,r^{1-|\ell+\alpha|}\, \bigg[
	K_{\vert \ell + \alpha\vert - 1}\big(\!-i \sqrt{z^*}\, r\big)\! \int^{r}_0\! \diff r'\, r'\, I_{\vert \ell+\alpha\vert}\big(\!-i \sqrt{z^*}\, r'\big)\, \psi_{s}^{(\ell)}(r') \\
	 - I_{\vert \ell + \alpha\vert - 1}\big(\!-i \sqrt{z^*}\,r\big)\! \int^\infty_{r}\!\!\! \diff r'\, r'\,K_{\vert \ell+\alpha\vert}\big(\!-i \sqrt{z^*}\,r'\big)\, \psi_{s}^{(\ell)}(r') \bigg] \,.
}
Exploiting the asymptotic behavior and normalization properties of the Bessel functions (see, \emph{e.g.}, \cite[\S 10.30(i)]{OLBC10} and \cite[Eq. 6.521.3]{GR07}, together with the basic inequalities \eqref{eq: estI}\eqref{eq: estK}), for $\ell \in \{0,-1\}$ and $r \to 0^{+}$, we infer
\begin{gather*}
	r^{1-|\ell+\alpha|}\, K_{\vert \ell + \alpha\vert - 1}\big(\!-i \sqrt{z^*}\,r\big) = \tfrac{\Gamma(1-|\ell+\alpha|)}{2^{|\ell+\alpha|} (-i\sqrt{z^*})^{1-|\ell+\alpha|}} + \OO\big(r^{2-2|\ell+\alpha|}\big)\, , \\
	r^{1-|\ell+\alpha|}\, I_{\vert \ell + \alpha\vert - 1}\big(\!-i \sqrt{z^*}\,r\big) = \tfrac{2^{1-|\ell+\alpha|}}{\Gamma(|\ell+\alpha|)\,(-i\sqrt{z^*})^{1-|\ell+\alpha|}} + \OO(r^2)\,, \\
\end{gather*}
and
\bmln{
	\left\vert \int^r_0\! \diff r'\,r'\, I_{\vert \ell+\alpha \vert} \big(\!-i \sqrt{z^*}\,r'\big)\, \psi_{s}^{(\ell)}(r') \right\vert  
	\leqslant \big\| \psi_{s}^{(\ell)} \big\|_{L^2(\R^{+},\,r\,dr)}\left( \int^r_0\! \diff r'\,r'\, \big|I_{\vert \ell+\alpha \vert} \big(\!-i \sqrt{z^*}\,r'\big) \big|^2 \right)^{\!\!1/2} \\
	\leqslant \left(\!\tfrac{\sqrt{|z|}}{\Im \sqrt{z^*}}\right)^{\!\vert \ell+\alpha\vert} \big\| \psi_{s}^{(\ell)} \big\|_{L^2(\R^{+},\,r\,\diff r)}\left( \int^r_0\! \diff r'\,r'\, I^2_{\vert \ell+\alpha\vert}\big(\Im \sqrt{z^*}\,(r \land r')\big)\! \right)^{\!\!1/2}
	\leqslant C\,\|\psiv\|_{2}\; r^{1+|\ell+\alpha|} \xrightarrow[r \to 0^{+}]{} 0\,,
}
\bmln{
	\left\vert \int_{r}^{+\infty}\!\!\!\! \diff r'\,r'\, K_{\vert \ell +\alpha\vert}\big(\!-i \sqrt{z^*}\,r'\big)\, \psi_{s}^{(\ell)}(r') \right\vert
	\leqslant \big\| \psi_{s}^{(\ell)} \big\|_{L^2(\R^{+},\,r\,\diff r)} \left(\int_{r}^{+\infty}\!\!\!\! \diff r'\, r'\, \big|K_{\vert \ell + \alpha\vert}\big(\!-i \sqrt{z^*}\,r'\big) \big|^2 \right)^{\!\!1/2} \\
	\leqslant \big\| \psi_{s}^{(\ell)} \big\|_{L^2(\R^{+},\,r\,\diff r)} \left(\int_{r}^{+\infty}\!\!\!\! \diff r'\, r'\, K^2_{\vert \ell + \alpha\vert}\big(\Im \sqrt{z^*}\,r'\big)\! \right)^{\!\!1/2} 
	\leqslant C \left(\int_{0}^{+\infty}\!\!\!\! \diff r'\, r'\, \big|K_{\vert \ell + \alpha\vert}\big(\Im \sqrt{z^*}\,r'\big) \big|^2 \right)^{\!\!1/2}\!\|\psiv\|_{2} < +\infty\,.
}
In view of the above considerations, by dominated convergence, we obtain
\[
\tau_{s}^{(\ell)} R_{\mathrm{S}}^{(\mathrm{F})}(z^*) \,\psiv  = \int_{0}^\infty\!\!\!\! \diff r'\, r'\! \int_{0}^{2\pi}\!\!\!\! \diff\teta'\, \big(\!-i\sqrt{z^*}\big)^{|\ell+\alpha|} K_{\vert \ell+\alpha\vert}\big(\!-i\sqrt{z^*}\,r'\big)\, \psi_{s}(r',\teta') \, \tfrac{e^{-i \ell \teta'} }{ \sqrt{2\pi}}\,.
\]
The identity \eqref{eq:GGzexp} then follows by simple duality arguments, recalling the definition \eqref{eq:Gzdef} of $ \gmv(z)$ and the explicit expression for $\giv_{\lambda,s}^{(\ell)}$ given by \eqref{eq:g} and \eqref{G1234}. Notice also that $\big(K_{\nu}(w)\big)^{*} = K_{\nu}(w^*)$ for all $\nu > 0$, $w \in \mathbb{C}$ \cite[10.34.7]{OLBC10} and $\big(\!-i\sqrt{z^*}\big)^{*}\! = -i \sqrt{z}$. 

On the other side, exploiting again basic features of the Bessel functions, it can be checked that $\giv_{1,s}^{(\ell)} - \giv_{-i \sqrt{z},s}^{(\ell)} \in \dom\big(H_{\mathrm{P}}^{(\mathrm{F})}\big)$. Then, a direct computation yields
\bmln{
\big[\Lambda(z) \qv \big]_{s}^{(\ell)}
= {\tx \sum_{s',\ell'}}\, q_{s'}^{(\ell')}\, \tau_{s}^{(\ell)}\! \lf(\!\giv_{1,s'}^{(\ell')} - \giv_{-i \sqrt{z},s'}^{(\ell')}\ri)\\
= {\tx \sum_{s',\ell'}}\, q_{s'}^{(\ell')}\, 2^{|\ell+\alpha|-1}\, \Gamma\big(|\ell+\alpha|\big)\, \delta_{s\,s'} \delta_{\ell\,\ell'} \lim_{r \to 0^{+}} \tfrac{1}{r^{|\ell+\alpha|}} \big(|\ell+\alpha| + r\, \partial_r \big)\! \lf[\giv_{1,s'}^{(\ell')} - \giv_{-i \sqrt{z},s'}^{(\ell')}\ri]_{s}^{(\ell)} \\
= \tfrac{\pi}{2 \sin(\pi|\ell+\alpha|)} \left[\big(\!-i \sqrt{z}\big)^{2 |\ell + \alpha|} -1\right] q_{s}^{(\ell)} ,
}
which ultimately accounts for \eqref{eq:Lzexp}.
\end{proof}

\begin{proof}[Proof of \cref{prop:quadKrein}]
Fixing $z = -\lambda^2$, with $\lambda > 0$, a direct comparison makes evident that the self-adjoint extensions $H_{\mathrm{P}}^{(\bb)}$ and $H_{\mathrm{P}}^{(\Theta)}$ do indeed coincide if and only if $\phiv_{\lambda} = \varphiv_{-\lambda^2} \in \dom\big(H_{\mathrm{P}}^{(\mathrm{F})}\big)$ and, accordingly,
	\be\label{eq:LLambda}
	\big[L(\lambda) + \bb\big] \qv = \big[\Lambda(-\lambda^2) + \Theta\big] \qv\,, \qquad  \forall \qv \in \mathbb{C}^4 \,.
	\ee
On account of the explicit expressions for $L(\lambda)$ and $\Lambda(-\lambda^2)$ reported respectively in \eqref{eq:L} and \eqref{eq:Lzexp}, it appears that the above condition \eqref{eq:LLambda} is actually equivalent to \eqref{eq: ThetaB}.
\end{proof}
Finally, we can complete the proof of the classification of self-adjoint extensions.
\begin{proof}[Proof of Theorem \ref{thm:QBHB}]
\mbox{} \\
\textsl{iii)}
The exhaustiveness of Krein's classification, combined with the one-to-one correspondence provided by \cref{prop:quadKrein} yield the result.
\end{proof}
\subsection{Spectral and scattering properties}\label{sec:specscatt}

Let us mention that many of the results on scattering described in the sequel rely on the {\it Limiting Absorption Principle} (LAP) for resolvent operators. In this connection, we refer to the weighted spaces
	\be
		L^2_{u}(\mathbb{R}^2; \mathbb{C}^2) := L^2\big(\mathbb{R}^2,  (1+|\xv|^2)^{u/2} \diff\xv\big) \otimes \mathbb{C}^2,
	\ee
with $ u \in \R $, and to the associated Banach spaces of bounded operators
	\be
		\mathcal{B}(u,u') := \mathcal{B}\big(L^2_{u}(\mathbb{R}^2;\mathbb{C}^2);L^2_{u'}(\mathbb{R}^2;\mathbb{C}^2)\big)\,.
	\ee
Any resolvent $R_{\mathrm{P}}^{(\Theta)}(z)$	is said to enjoy LAP if the limits
	\bdm
		R_{\mathrm{P,\pm}}^{(\Theta)}(\lambda) := \lim_{\varepsilon \to 0^{+}} R^{(\Theta)}_{\mathrm{P}}(\lambda \pm i \varepsilon)
	\edm
exist in $\mathcal{B}(u,-u)$ for some $u > 0$ and for all $\lambda \in \sigma_{\mathrm{ac}}\big(H_{\mathrm{P}}^{(\Theta)}\big) \setminus \mathrm{e}_{+}\big(H_{\mathrm{P}}^{(\Theta)}\big)$, where $\mathrm{e}_{+}\big(H_{\mathrm{P}}^{(\Theta)}\big)$ is the (possibly empty) discrete set of eigenvalues embedded in the absolutely continuous spectrum.

 We start by considering the Friedrichs Hamiltonian.
\begin{proposition}[LAP for $ H_{\mathrm{P}}^{(\mathrm{F})} $]\label{lemma:LAPF}
	\mbox{}\\
	For any $\lambda \in \R^{+}$, the limits
		\bdm
			R^{(\mathrm{F})}_{\mathrm{P},\pm}(\lambda) := \lim_{\varepsilon \to 0^{+}} R^{(\mathrm{F})}_{\mathrm{P}}(\lambda \pm i \varepsilon)
		\edm
exist in $\mathcal{B}(u,-u)$ for any $u > 1$ and the convergence is locally uniform.
\end{proposition}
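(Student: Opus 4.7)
The plan is to reduce the LAP for $H_{\mathrm{P}}^{(\mathrm{F})}$ to the analogous statement for the scalar AB Schr\"odinger operator $H_{\mathrm{S}}^{(\mathrm{F})}$ and then exploit the explicit integral kernel of the latter. By the block-diagonal structure \eqref{HPFdiag}-\eqref{eq:Pauli2res}, the norm of $R_{\mathrm{P}}^{(\mathrm{F})}(z)$ in $\mathcal{B}(u,-u)$ coincides with that of $R_{\mathrm{S}}^{(\mathrm{F})}(z)$ in the corresponding scalar weighted space, so it suffices to prove that the limits $R_{\mathrm{S}}^{(\mathrm{F})}(\lambda \pm i\varepsilon)$ converge in $\mathcal{B}(L^2_u(\R^2), L^2_{-u}(\R^2))$ as $\varepsilon \to 0^{+}$, for every $u > 1$ and locally uniformly in $\lambda > 0$.

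The first concrete step is to project onto angular harmonics: under $L^2(\R^2) \simeq \bigoplus_{\ell \in \Z} L^2(\R^+, r\,\diff r)$, the kernel \eqref{eq:Fresolv} decomposes as a direct sum of radial operators with kernels
\[
	R_\ell^{(\mathrm{F})}(z; r, r') = I_{|\ell+\alpha|}\bigl(\!-i\sqrt{z}\,(r \wedge r')\bigr)\, K_{|\ell+\alpha|}\bigl(\!-i\sqrt{z}\,(r \vee r')\bigr).
\]
For $z = \lambda \pm i\varepsilon$ with $\lambda > 0$, the convention $\Im\sqrt{z} > 0$ gives $-i\sqrt{z} \to \mp i\sqrt{\lambda}$ as $\varepsilon \to 0^{+}$, and the connection formulas \cite[Eqs. 10.27.6, 10.27.8]{OLBC10} then express the limiting kernels through $J_{|\ell+\alpha|}(\sqrt{\lambda}\,\cdot)$ and the Hankel functions $H^{(1,2)}_{|\ell+\alpha|}(\sqrt{\lambda}\,\cdot)$, which provides pointwise convergence of each mode.

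Next, I would upgrade pointwise convergence to convergence in operator norm between the weighted spaces. For each fixed $\ell$, the asymptotics of Hankel functions at infinity produce a kernel bounded by a constant multiple of $(rr')^{-1/2}$, while near the origin the local behaviour is controlled by the factors $(r\wedge r')^{|\ell+\alpha|}(r\vee r')^{-|\ell+\alpha|}$ coming from the small-argument asymptotics of $I_{|\ell+\alpha|}$ and $K_{|\ell+\alpha|}$. A Schur-type estimate with the polynomial weight $(1+r^2)^{-u/2}$, which is square-integrable against $r\,\diff r$ precisely when $u > 1$, then yields boundedness of each radial mode from $L^2_u(\R^+, r\diff r)$ to $L^2_{-u}(\R^+, r\diff r)$, with constants continuous in $\lambda$ on compact subsets of $\R^+$.

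The main obstacle is the uniform control in $\ell$: since $R_{\mathrm{S}}^{(\mathrm{F})}(z)$ acts diagonally in the angular index, the relevant norm equals $\sup_{\ell \in \Z} \|R_\ell^{(\mathrm{F})}(z)\|_{\mathcal{B}(u,-u)}$, and convergence in operator norm requires the mode-wise convergence to be uniform in $\ell$. For this I would combine the inequalities \eqref{eq: estI}-\eqref{eq: estK} with the classical large-order bound $|I_\nu(x)K_\nu(y)| \leq \tfrac{1}{2\nu}(x/y)^{\nu}$, valid for $0 < x \leq y$ and $\nu \geq 1/2$, which shows that the modes with large $|\ell|$ have operator norms decaying uniformly in $\varepsilon$ on any compact $\lambda$-interval. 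The finitely many remaining low modes are treated one by one using dominated convergence and continuity of Bessel functions in their argument, and a standard $3\varepsilon$-argument then delivers uniform convergence in $\ell$, hence the claimed convergence in $\mathcal{B}(u,-u)$ together with its local uniformity in $\lambda$.
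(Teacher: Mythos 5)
Your opening reduction to the scalar AB Schr\"odinger resolvent via the block--diagonal structure \eqref{HPFdiag}--\eqref{eq:Pauli2res} is exactly the one used in the paper. The two routes diverge at the next step: the paper's entire proof consists of this reduction followed by a citation of \cite[Prop.~7.3]{IT01}, which establishes the LAP for $R_{\mathrm{S}}^{(\mathrm{F})}(z)$ in weighted spaces with $u>1$, locally uniformly in $\lambda$. You instead sketch a self-contained proof of that scalar LAP through angular mode decomposition, pointwise limits of the Bessel kernel, Schur/Hilbert--Schmidt estimates in each mode, and uniform-in-$\ell$ control.

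The architecture of your direct argument is sound and captures the right ingredients, but it glosses over precisely the technical work that the cited reference does for you. Two places deserve caution. First, the large-order inequality $|I_\nu(x)K_\nu(y)|\leq\tfrac{1}{2\nu}(x/y)^\nu$ is stated for real positive arguments, whereas your kernel involves $-i\sqrt{z}\,r$ approaching a purely imaginary argument as $\varepsilon\to 0^+$; running it through \eqref{eq: estI}--\eqref{eq: estK} converts it into a bound on $J_\nu$ of the type $|J_\nu(\sqrt{\lambda}\,r)|\leq(\sqrt{\lambda}\,r/2)^\nu/\Gamma(\nu+1)$, which is sharp near the origin but does not decay at infinity. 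For the weighted $L^2$ estimate you must switch to the large-argument asymptotics $|J_\nu|,|H_\nu^{(1)}|\lesssim r^{-1/2}$ beyond $r\sim\nu$, and the uniformity in $\ell$ then hinges on tracking the interplay between the large-$\nu$ and large-$r$ regimes through the turning point $r\sim\nu$. Second, the operator norm of the direct sum is indeed $\sup_\ell\|R_\ell\|$, but upgrading mode-wise convergence to a single limit requires showing the tail estimate in $\ell$ holds \emph{uniformly} in $\varepsilon$ down to the boundary, which your $3\varepsilon$ argument presupposes rather than establishes. None of this is a fatal flaw --- it is exactly the content of the scalar LAP --- but as written it is a sketch rather than a proof. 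The paper's citation is the more economical route; your route is more self-contained if you are prepared to reproduce the Bessel-uniform estimates of Ito--Tamura in detail.
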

\begin{proof}
The thesis is a straightforward consequence of \cite[Prop. 7.3]{IT01} and of the diagonal structure of the Friedrichs Hamiltonian, see \eqref{HPFdiag}.
\end{proof}

The action of $R^{(\mathrm{F})}_{\mathrm{P},\pm}(\lambda)$ can be deduced from the explicit representation \eqref{eq:Fresolv} for the corresponding Schr\"odinger resolvent operator and reads, for $s \in \{\uparrow,\downarrow\}$,
	\bmln{
		\left( R_{\mathrm{P},\pm}^{(\mathrm{F})}(\lambda) \psiv \right)_{\!s}\!(r,\teta) 
		=\! \int^\infty_0\!\!\!\! \diff r'\, r'\! \int^{2\pi}_0\!\!\!\! \diff\teta'\, \sum_{\ell\in\Z} I_{\vert \ell+\alpha\vert}\big(\mp i\sqrt{\lambda}\,(r \land r')\big)\,K_{\vert \ell+\alpha\vert}\big(\mp i\sqrt{\lambda}\,(r \lor r')\big)\,\psi_{s}(r',\teta') \, \tfrac{e^{i \ell (\teta-\teta')}}{2\pi}\\
		= \tfrac{i}{4} \int^\infty_0\!\!\!\! \diff r'\, r'\! \int^{2\pi}_0\!\!\!\! \diff\teta'\, \sum_{\ell\in\Z}  J_{\vert \ell+\alpha\vert}\big(\!\pm \!\sqrt{\lambda}\,(r\land r')\big)\, H^{(1)}_{\vert \ell+\alpha\vert}\big(\!\pm\! \sqrt{\lambda}\,(r \lor r')\big)\,\psi_{s}(r',\teta')\, e^{i \ell (\teta-\teta')}\,.
	}
In the second line we have used the connection formulas reported in \cite[Eqs. 10.27.6 and 10.27.8]{OLBC10}.

\begin{proof}[Proof of \cref{prop: waveopF}]
Also in this case, the thesis follows from classical results on the Schr\"odinger Hamiltonian, see \cite{Ru83} and \cite[Prop. 7.4]{IT01}. The absence of singular continuous spectrum is ensured by the LAP established in \cref{lemma:LAPF} for $R^{(\mathrm{F})}_{\mathrm{P}}(z)$ (see, \emph{e.g.}, \cite[Thm. 6.1]{Ag75} and \cite[Cor. 4.7]{MPS18}). The stated properties of the scattering operator ultimately follow by standard arguments.
\end{proof}

\begin{proof}[Proof of \cref{prop: spectrumF}]
Since the Friedrichs quadratic form is non-negative (see \eqref{eq: Qpdef}), it appears that $\sigma\big(H_{\mathrm{P}}^{(\mathrm{F})}\big) \subseteq \R^{+}$. We already noticed that \cref{lemma:LAPF} ensures the absence of singular continuous spectrum. Moreover, the existence and completeness of the wave operators established in \cref{prop: waveopF} grants that $\sigma_{\mathrm{ac}}\big(H_{\mathrm{P}}^{(\mathrm{F})}\big) = \sigma_{\mathrm{ac}}(-\Delta_{\mathrm{P}}) = \R^{+}$. Finally, it can be checked by direct inspection that $H_{\mathrm{P}}^{(\mathrm{F})}$ has no eigenvalue (see also \cite[Thm. 3.3]{CFK20}).
\end{proof}
We now address the explicit form of the generalized eigenfunction of the Friedrichs Hamiltonian.	

\begin{proof}[Proof of \cref{thm: scattmatF}]
By decomposition in angular harmonics and some explicit computations, one obtains formula \eqref{eq: psiF} (see also \cite{AT11,Ru83,Ta11}).
Notice that, despite solving the radial eigenvalue problem, the Bessel functions $Y_{|\ell + \alpha|}$ do not appear in \eqref{eq: psiF} since they do not satisfy the proper local behavior close to $\xv = \mathbf{0}$. Let us further remark that the coefficients in the expansion \eqref{eq: psiF} have been fixed so as to fulfill the radiation conditions \eqref{eq: Sommerfeld}. Indeed, using the known asymptotic expansion of the Bessel functions \cite[Eq. 10.7.8]{OLBC10}, it can be checked that there exist $f^{(\mathrm{F},\pm)}_{\kv} \in L^2(\mathbb{S}^{1};\mathbb{C})$ such that
	\be\label{eq: psiFasy}
		\varphiv^{(\mathrm{F},\pm)}_{(s,\kv)}(\xv) =  \lf[ \tfrac{e^{i \kv \cdot \xv}}{2\pi} + f^{(\mathrm{F},\pm)}_{\kv}\,\tfrac{e^{\mp i |\kv|\,|\xv|}}{|\xv|^{1/2}} + \mathcal{O}\lf(\tfrac{1}{|\xv|^{3/2}}\ri) \ri] \begin{pmatrix} \delta_{s,\uparrow} \\ \delta_{s,\downarrow} \end{pmatrix}, \qquad \mbox{for\, $|\xv| \to +\infty$}\,.
	\ee
More precisely, one has
	\be\label{eq: fpmF}
		f^{(\mathrm{F},\pm)}_{\kv}(\teta) = \tfrac{e^{\pm i \frac{\pi}{4}}}{(2\pi)^{3/2} \sqrt{k}} \sum_{\ell \in \mathbb{Z}} \lf(e^{\pm i \pi |\ell + \alpha|} - e^{\pm i \pi |\ell|}\ri) e^{i \ell (\teta - \omega_{\pm})}\,.
	\ee
Incidentally, we notice that the above expansion shows that $\varphiv^{(\mathrm{F},\pm)}_{(s,\kv)} \in L^2_{-u}(\mathbb{R}^2;\mathbb{C}^2)$ for any $u > 2$.

Keeping in mind that $H_{\mathrm{P}}^{(\mathrm{F})}$ has purely absolutely continuous spectrum (see \cref{prop: spectrumF}), we define the modified Fourier transforms
	\be\label{eq: genFouFried}
		\mathfrak{F}^{(\mathrm{F})}_{\pm}\! : L^2(\mathbb{R}^2;\mathbb{C}^2) \to L^2(\mathbb{R}^2;\mathbb{C}^2)\,, \qquad
		\lf(\mathfrak{F}^{(\mathrm{F})}_{\pm} \psiv\ri)_{\!s}\!(\kv) := \!\sum_{s' \in \{\uparrow,\downarrow\}} \int_{\mathbb{R}^2} \!\!\diff\xv\; \overline{\big(\varphi^{(\mathrm{F},\pm)}_{(s,\kv)}(\xv)\big)_{s'}}\,\psi_{s'}(\xv)\,.
	\ee
Let us now return to the wave operators $\Omega^{(\mathrm{F})}_{\pm}$, whose existence and asymptotic completeness have been established in \cref{prop: waveopF}. These are known to fulfill the identity (see, \emph{e.g.}, \cite[Thm. 5.5]{MPS18})
	\bdm
		\Omega^{(\mathrm{F})}_{\pm} = \lf(\mathfrak{F}^{(\mathrm{F})}_{\pm}\ri)^{\!*}\! \mathfrak{F}\,.
	\edm
Accordingly, the unitary scattering operator is given by 
	\be\label{eq: SFfourier}
		\mathrm{S}^{(\mathrm{F})} = \mathfrak{F}^{*}\,\mathfrak{F}^{(\mathrm{F})}_{+}\! \lf(\mathfrak{F}^{(\mathrm{F})}_{-}\ri)^{\!*}\! \mathfrak{F}\,.
	\ee
Recalling the definition \eqref{eq: scattmat} of the scattering matrix $\mathrm{S}^{(\mathrm{F})}(\lambda)$ and making reference to \eqref{eq: psiF} \eqref{eq: genFouFried} and \eqref{eq: SFfourier}, we get \eqref{eq: SFexp} by distributional computations (see \cite[Eq. (4.8)]{Ru83} and \cite[p. 315]{IT01}). In particular, let us mention that an elementary calculation yields
	\bdm
		\sum_{s''} \int_{\mathbb{R}^2} \!\!\diff\xv\; \big(\varphi^{(\mathrm{F},+)}_{(s,\kv)}\big)^{*}_{\!s''}(\xv) \big(\varphi^{(\mathrm{F},-)}_{(s',\kv')}\big)_{\!s''}(\xv)
		=  \tfrac{1}{2\pi} \sum_{\ell \in \mathbb{Z}} e^{i \ell (\omega -\omega'+\pi) - i \pi\,|\ell + \alpha|} \int_{0}^{\infty}\!\!\!\diff r\;r \; J_{|\ell + \alpha|}(k\,r)\, J_{|\ell + \alpha|}(k' r)\,\delta_{s\,s'}\, .
	\edm
By means of \cite[Eq. 6.541]{GR07} and \cite[Eqs. 10.40.1-2]{OLBC10}, it can be checked that in the sense of distributions there holds
	\bmln{
		\int_{0}^{\infty}\!\!\!\diff r\;r \; J_{|\ell + \alpha|}(k\,r)\, J_{|\ell + \alpha|}(k' r)
		= \lim_{\zeta \to +\infty} \int_{0}^{\infty}\!\!\!\diff r\;\frac{\zeta^2\,r}{\zeta^2 + r^2} \; J_{|\ell + \alpha|}(k\,r)\, J_{|\ell + \alpha|}(k' r) \\
		= \lim_{\zeta \to +\infty} \zeta^2\,I_{|\ell + \alpha|}\big(\zeta\,(k \land k')\big)\,K_{|\ell + \alpha|}\big(\zeta\,(k \lor k')\big)
		= \tfrac{1}{ 2k}\,\delta(k'\!-k) = \delta\big((k')^2- k^2\big)\,.
	}
As a consequence, we obtain
	\bdm
		\sum_{s''} \int_{\mathbb{R}^2} \!\!\diff\xv\; \big(\varphi^{(\mathrm{F},+)}_{(s,\kv)}(\xv)\big)^{*}_{\!s''} \big(\varphi^{(\mathrm{F},-)}_{(s',\kv')}(\xv)\big)_{\!s''}
		=  \tfrac{1}{2\pi} \sum_{\ell \in \mathbb{Z}} e^{i \ell (\omega -\omega') + i \pi\,(\ell - |\ell + \alpha|)}\,\delta\big((k')^2- k^2\big)\,\delta_{s\,s'} \, ,
	\edm
which is a key ingredient for the derivation of the first equality in \eqref{eq: SFexp}. For the second equality we refer to \cite[Eq. (4.8)]{IT01}. Finally, on account of \eqref{eq: scattampl}, \eqref{eq: dcs} and \eqref{eq: SFexp} one readily infers \eqref{eq:1} and \eqref{eq:2}.
\end{proof}

We now deal with the other self-adjoint realizations, starting with the following analogue of \cref{lemma:LAPF}.
\begin{proposition}[LAP for $ H_{\mathrm{P}}^{(\Theta)} $]\label{lemma:LAPTheta}
\mbox{}\\
	Let $\Theta \in \mathrm{M}_{4,\mathrm{Herm}}(\mathbb{C})$. Then, there exists a $($possibly empty$)$ discrete set $\mathrm{e}_{+}^{(\Theta)}$, with $\# \big[ \mathrm{e}_{+}^{(\Theta)}\big] \leqslant 4$, such that for any $\lambda \in \R^{+} \setminus \mathrm{e}_{+}^{(\Theta)}$ the limits
		\bdm
			R^{(\Theta)}_{\mathrm{P},\pm}(\lambda) := \lim_{\varepsilon \to 0^{+}} R^{(\Theta)}_{\mathrm{P}}(\lambda \pm i \varepsilon)
		\edm
exist in $\mathcal{B}(u,-u)$ for any $u > 1$ and the convergence is locally uniform. Moreover, there holds
		\bdm
			R^{(\Theta)}_{\mathrm{P},\pm}(\lambda) = R^{(\mathrm{F})}_{\mathrm{P},\pm}(\lambda) + \gmv_{\pm}(\lambda) \big[\Lambda_{\pm}(\lambda) + \Theta\big]^{-1} \widecheck{\mathcal{G}}_{\pm}(\lambda)\,,
		\edm
where
		\begin{align}
			\Lambda_{\pm}(\lambda) & := \lim_{\varepsilon \to 0^{+}}\Lambda(\lambda \pm i \varepsilon) \;\in\; \mathrm{M}_{4,\mathrm{Herm}}(\mathbb{C})\,, \label{eq:Lambdapmdef} \\
			\gmv_{\pm}(\lambda) & := \lim_{\varepsilon \to 0^{+}} \gmv(\lambda \pm i \varepsilon) \;\in\; \mathcal{B}\big(\mathbb{C}^4; L^2_{-u}(\mathbb{R}^2;\mathbb{C}^2)\big)\,; \label{eq: Gpmlimdef} \\
			\widecheck{\mathcal{G}}_{\pm}(\lambda) & := \lim_{\varepsilon \to 0^{+}} \tau R_{\mathrm{P}}^{(\mathrm{F})}(\lambda \mp i \varepsilon) \;\in\; \mathcal{B}\big(L^2_{u}(\mathbb{R}^2;\mathbb{C}^2); \mathbb{C}^4\big)\,.\label{eq: Gcpmlimdef}
		\end{align}
\end{proposition}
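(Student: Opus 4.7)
The plan is to establish the LAP via the Krein--Vishik--Birman resolvent formula \eqref{eq: RThetaKrein} and to analyze the limits of the three factors $R^{(\mathrm{F})}_{\mathrm{P}}(z)$, $[\Lambda(z)+\Theta]^{-1}$, and the boundary operators $\gmv(z), \widecheck{\mathcal{G}}(z)$ as $z$ approaches a point $\lambda \in \R^{+}$ from above or below. The limit of the first factor in $\mathcal{B}(u,-u)$ with $u > 1$ is already supplied by \cref{lemma:LAPF}, so the bulk of the work lies in the remaining factors.

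For $\Lambda(z)$, the explicit diagonal expression \eqref{eq:Lzexp} immediately yields continuous limits $\Lambda_{\pm}(\lambda)$ on $\R^{+}$ as in \eqref{eq:Lambdapmdef}, since $-i\sqrt{z}\to \mp i\sqrt{\lambda}$ there. Crucially, a short computation gives $\Im\,\Lambda_{\pm}(\lambda) = \mp \tfrac{\pi}{2}\,\mathrm{diag}(\lambda^{\alpha},\lambda^{1-\alpha},\lambda^{\alpha},\lambda^{1-\alpha})$, which is sign-definite for every $\lambda > 0$; combined with the Hermiticity of $\Theta$, this ensures invertibility of $\Lambda_{\pm}(\lambda)+\Theta$ on all of $\R^{+}$, so that $\mathrm{e}_{+}^{(\Theta)}$ can in fact be taken empty and the bound $\#\mathrm{e}_{+}^{(\Theta)} \leqslant 4$ holds a fortiori (it also follows trivially from the $4\times 4$ nature of the matrix-valued analytic function $\det[\Lambda_{\pm}(\cdot)+\Theta]$).

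For $\gmv(z)$, the identity \eqref{eq:GGzexp} identifies $\gmv(z)\qv$ with a finite linear combination of defect spinors $\giv^{(\ell)}_{-i\sqrt{z},s}$ built from the Macdonald functions $K_{|\ell+\alpha|}(-i\sqrt{z}\,r)$. As $z \to \lambda \pm i0$, $-i\sqrt{z} \to \mp i\sqrt{\lambda}$ becomes purely imaginary, and via the usual connection formulas $K_{|\ell+\alpha|}$ reduces to a Hankel function oscillating as $r^{-1/2}$ at infinity. Such functions belong to $L^2_{-u}(\R^{2};\mathbb{C}^2)$ for any $u > 1$ but are no longer in $L^2$, which explains the need for the weighted topology. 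Convergence of $\gmv(z)\to \gmv_{\pm}(\lambda)$ in $\mathcal{B}(\mathbb{C}^4;L^2_{-u})$, locally uniformly in $\lambda$, then follows from Lebesgue dominated convergence, using the pointwise upper bound \eqref{eq: estK} on compact subsets of $\R^{+}$. By the duality $\widecheck{\mathcal{G}}(z) = \gmv(z^{*})^{*}$ encoded in \eqref{eq:Gzdef}, the corresponding convergence of $\widecheck{\mathcal{G}}(z)$ in $\mathcal{B}(L^2_u;\mathbb{C}^4)$ follows immediately by passing to adjoints.

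Composing the three limits in \eqref{eq: RThetaKrein} then produces the formula announced for $R_{\mathrm{P},\pm}^{(\Theta)}(\lambda)$, and the local uniformity in $\lambda$ is preserved by the composition. The main technical obstacle is precisely the third step: the defect functions $\giv^{(\ell)}_{-i\sqrt{z},s}$ lose $L^2$-integrability exactly in the limit of interest, so their control must be performed in the weighted spaces $L^2_{\pm u}$, $u > 1$, pairing the asymptotic profile of the Hankel functions with the weight $(1+|\xv|^2)^{-u/2}$ and exploiting \eqref{eq: estI}--\eqref{eq: estK} as dominating estimates uniformly on compact subsets of $\R^{+}$.
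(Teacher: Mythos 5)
Your overall architecture matches the paper's: you factor the Krein formula \eqref{eq: RThetaKrein} into the three pieces $R^{(\mathrm{F})}_{\mathrm{P}}(z)$, $[\Lambda(z)+\Theta]^{-1}$, and $\gmv(z)$/$\widecheck{\mathcal{G}}(z)$, invoke \cref{lemma:LAPF} for the first, read off $\Lambda_{\pm}$ from the explicit formula \eqref{eq:Lzexp}, and obtain $\widecheck{\mathcal{G}}_{\pm}$ from $\gmv_{\pm}$ by adjunction. Two points deserve comment.

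First, your sign-definiteness observation is correct and actually proves \emph{more} than the paper does: since $\Im\,\Lambda_{\pm}(\lambda) = \mp\tfrac{\pi}{2}\,\mathrm{diag}(\lambda^{\alpha},\lambda^{1-\alpha},\lambda^{\alpha},\lambda^{1-\alpha})$ and $\Theta$ is Hermitian, $\Lambda_{\pm}(\lambda)+\Theta$ is invertible for every $\lambda>0$, so $\mathrm{e}_{+}^{(\Theta)}=\varnothing$. The paper keeps the exceptional set generic and merely bounds its cardinality by $4$, and its displayed claim that $\Lambda_{\pm}(\lambda)\in\mathrm{M}_{4,\mathrm{Herm}}(\mathbb{C})$ is incompatible with the explicit formula \eqref{eq:Lambdapm} (those limits genuinely have nonzero imaginary part on $\R^{+}$). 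Your argument resolves that tension cleanly.

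Second, there is a genuine gap in how you handle the convergence $\gmv(z)\to\gmv_{\pm}(\lambda)$. You propose dominated convergence ``using the pointwise upper bound \eqref{eq: estK} on compact subsets of $\R^{+}$''. But \eqref{eq: estK} bounds $|K_{\nu}(-i\sqrt{z}\,r)|$ by $K_{\nu}(\Im\sqrt{z}\,r)$, and as $\Im z\to0^{+}$ one has $\Im\sqrt{z}\to0^{+}$ and hence $K_{\nu}(\Im\sqrt{z}\,r)\to+\infty$ for every fixed $r$; this majorant does \emph{not} provide a $z$-uniform dominating function in $L^{2}_{-u}$ and the dominated-convergence step collapses. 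What does work, and what the paper relies on, is the pair of \emph{uniform} asymptotics of $K_{\nu}$ near the origin and at infinity: the prefactor $(-i\sqrt{z})^{|\ell+\alpha|}$ in the defect functions cancels the small-argument singularity so that $\big|(-i\sqrt{z})^{\nu}K_{\nu}(-i\sqrt{z}\,r)\big|\lesssim r^{-\nu}$ near $r=0$ uniformly for $z$ near $\lambda$, while the large-argument expansion gives $|K_{\nu}(-i\sqrt{z}\,r)|\lesssim r^{-1/2}$ uniformly for $z$ in a neighbourhood of $\lambda$ since $e^{-\Im\sqrt{z}\,r}\leqslant1$. The combined majorant $C\big(r^{-\nu}\mathds{1}_{\{r<1\}}+r^{-1/2}\mathds{1}_{\{r>1\}}\big)$ lies in $L^{2}_{-u}$ precisely for $u>1$, and with it (or directly via the integral representation of $K_{\nu}$ the paper invokes) the locally uniform convergence in $\mathcal{B}(\mathbb{C}^{4};L^{2}_{-u})$ follows. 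So replace the appeal to \eqref{eq: estK} by the uniform Bessel asymptotics; everything else in your sketch stands.
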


\begin{proof} Let us refer to the Krein formula \eqref{eq: RThetaKrein} for the resolvent operator $R_{\mathrm{P}}^{(\Theta)}(z)$. We firstly recall that the Friedrichs resolvent $R_{\mathrm{P}}^{(\mathrm{F})}(z)$ enjoys LAP in $\mathcal{B}(u,-u)$ for any $u > 1$, see \cref{lemma:LAPF}. On the other hand, using the explicit expression \eqref{eq:Lzexp} for $\Lambda(z)$, we obtain
	\be\label{eq:Lambdapm}
		\big(\Lambda_{\pm}(\lambda)\big)_{s\,s'}^{(\ell\,\ell')} = \lim_{\varepsilon \to 0^{+}} \Lambda_{s\,s'}^{(\ell\,\ell')}(\lambda \pm i \varepsilon) = \tfrac{\pi}{2 \sin(\pi|\ell+\alpha|)} \left[e^{\mp i \pi |\ell + \alpha|}\, \lambda^{|\ell + \alpha|} - 1\right]\delta_{s\,s'}\,\delta_{\ell\,\ell'}\,.
	\ee
From here we deduce that, depending on the specific choice of $\Theta$, the matrices $\Lambda_{\pm}(\lambda) + \Theta \in \mathrm{M}_{4,\mathrm{Herm}}(\mathbb{C})$ can indeed become singular for suitable values of $\lambda \in \R^{+}$. We indicate with $\mathrm{e}_{+}\big(H_{\mathrm{P}}^{(\Theta)}\big)$ the collection of such exceptional points and notice that its cardinality is at most $4$. It is evident that the convergence in \eqref{eq:Lambdapm} is uniform on any compact subset of $\R^{+} \setminus \mathrm{e}_{+}\big(H_{\mathrm{P}}^{(\Theta)}\big)$, so the same can be said for the inverses $\big[\Lambda_{\pm}(\lambda) + \Theta\big]^{-1}$.

To say more, using \eqref{eq:g}, \eqref{G1234} and \eqref{eq:GGzexp}, together with the Bessel connection formula \cite[Eq. 10.27.8]{OLBC10}, for any $\qv \in \mathbb{C}^4$, we infer
	\be\label{eq: Gpmlim}
		\big(\gmv_{\pm}(\lambda)\,\qv\big)_{s}(r,\teta) = \lim_{\varepsilon \to 0^{+}} \big(\gmv(\lambda \pm i \varepsilon)\,\qv\big)_{s}(r,\teta)
		= \tfrac{i \pi}{2}\, {\tx \sum_{\ell}}\; q_{s}^{(\ell)}\;\big(\!\mp\! \sqrt{\lambda}\,\big)^{|\ell+\alpha|}\,  H^{(1)}_{|\ell+\alpha|}\big(\!\mp\! \sqrt{\lambda}\, r\big)\,\tfrac{e^{i \ell \teta}}{\sqrt{2\pi}}\,.
	\ee
Taking into account the regularity of the Hankel function $H^{(1)}_{\nu}$ and its asymptotic expansions for small and large arguments \cite[Eqs. 10.7.7 and 10.17.5]{OLBC10}, by an elementary change of variable it can be checked that
	\bdm
		\lf\| \gmv_{\pm}(\lambda)\,\qv \ri\|_{L^2_{-u}(\mathbb{R}^2;\mathbb{C}^2)}^2 
		\leqslant C\, \lambda^{|\ell + \alpha| - 1} \!\sum_{\ell \,\in\, \{0,-1\}}\! \lf( \int_{0}^{1}\!\diff \rho\,\frac{1}{\rho^{2|\ell+\alpha|-1}} + \lambda^{u/2}\! \int_{1}^{\infty}\!\!\! \diff \rho\,\frac{1}{\rho^u} \ri) \vert\qv\vert^2\,.
	\edm
 The above estimate shows that $ \gmv_{\pm}(\lambda)$, with $\lambda > 0$, are bounded operators from $\mathbb{C}^4$ into $L^2_{-u}(\mathbb{R}^2;\mathbb{C}^2)$ for any $u > 1$. On top of that, using a known integral representation for the Bessel functions \cite[Eq. 8.421.9]{GR07} it can be checked that the limit in \eqref{eq: Gpmlim} is attained uniformly on any compact subset of $\R^{+}$. 
Since $\widecheck{\mathcal{G}}(z)$ is the adjoint of $\gmv(z^*)$, see \eqref{eq:Guzdef} and \eqref{eq:Gzdef}, by elementary duality considerations the above arguments also prove that the limits $\widecheck{\mathcal{G}}_{\pm}(\lambda)$ defined in \eqref{eq: Gcpmlimdef} identify a pair of bounded operators from $ \big(L^2_{-u}(\mathbb{R}^2;\mathbb{C}^2) \big)' \simeq L^2_{u}(\mathbb{R}^2;\mathbb{C}^2) $ to $(\mathbb{C}^4)' \simeq \mathbb{C}^4$ for any $u > 1$.
\end{proof}

\begin{proof}[Proof of \cref{prop: waveopTheta}]
Let us first remark that the resolvent operator associated to $H_{\mathrm{P}}^{(\Theta)}$ is a finite rank perturbation of the resolvent related to the Friedrichs Hamiltonian $H_{\mathrm{P}}^{(\mathrm{F})}$. This suffices to infer that the wave operators $\Omega_{\pm}\big(H_{\mathrm{P}}^{(\Theta)},H_{\mathrm{P}}^{(\mathrm{F})}\big)$ exist and are complete \cite[Thm. XI.9]{RS81}. On the other hand, recall that existence and completeness of the wave operators $\Omega_{\pm}\big(H_{\mathrm{P}}^{(\mathrm{F})},-\Delta_{\mathrm{P}}\big)$ has already been established in \cref{prop: waveopF}. Then, existence and completeness of $\Omega^{(\Theta)}_{\pm}$ follows readily from the chain rule for wave operators \cite[Ch. XI, p. 18, Prop. 2]{RS81}. Finally, we deduce asymptotic completeness noting that the LAP established in \cref{lemma:LAPTheta} for $R_{\mathrm{P}}^{(\Theta)}(z)$ ensures the absence of singular continuous spectrum.
\end{proof}

\begin{proof}[Proof of \cref{prop: spectrumTheta}] On one side, the existence and asymptotic completeness of the wave operators $\Omega_{\pm}\big(H_{\mathrm{P}}^{(\Theta)},-\Delta_{\mathrm{P}}\big)$ ensure that $\sigma_{\mathrm{ac}}\big(H_{\mathrm{P}}^{(\Theta)}\big) = \sigma_{\mathrm{ac}}\big(-\Delta_{\mathrm{P}}\big) = \R^{+}$ and the absence of singular continuous spectrum. On the other side, from \cite[Thm. 3.4]{Po04} it follows that the map $\qv \mapsto \gmv(-\mu) \qv$ is a bijection from $\ker \big[\Lambda(-\mu) + \Theta \big]$ to $\ker\big(H_{\mathrm{P}}^{(\Theta)} + \mu\big)$, which proves the part of the thesis regarding $\sigma_{\mathrm{pp}}\big(H_{\mathrm{P}}^{(\Theta)}\big)$.
\end{proof}

\begin{proof}[Proof of \cref{thm: scattmatT}]
 We deduce the expression in \eqref{eq: eigenfunctTheta} by a straightforward adaptation of \cite[Thm. 5.1]{MPS18}. 
Using \eqref{eq: psiF} and the Bessel function asymptotics \cite[Eq. 10.7.3]{OLBC10} we get that
	\bml{
		\tau_{s'}^{(\ell)} \varphiv^{(\mathrm{F},\pm)}_{(s,\kv)}
		= 2^{|\ell+\alpha|-1}\, \Gamma\big(|\ell+\alpha|\big) \lim_{r \to 0^{+}} \frac{1}{r^{|\ell+\alpha|}}\big(|\ell+\alpha| + r\, \partial_r \big) \big(\varphi^{(\mathrm{F},\pm)}_{(s,\kv)}\big)_{s'}^{(\ell)} \\
		= \delta_{s\,s'}\,\tfrac{1}{\sqrt{2\pi}}\,2^{|\ell+\alpha|-1}\, \Gamma\big(|\ell+\alpha|\big)\, e^{\pm i \frac{\pi}{2}\,|\ell + \alpha| -i \ell \omega_{\pm}} \lim_{r \to 0^{+}} \frac{1}{r^{|\ell+\alpha|}}\big(|\ell+\alpha| + r\, \partial_r \big) \, J_{|\ell + \alpha|}(k\,r) \\
		=  ( \pm i\, k)^{|\ell + \alpha|}\,\tfrac{e^{-i \ell \omega_{\pm}}}{\sqrt{2\pi}}\,\delta_{s\,s'}\,. \label{eq:tauFried}
	}
Recalling the explicit expressions \eqref{eq:Lambdapm} and \eqref{eq: Gpmlim}, we obtain
	\bmln{
		\big(\varphiv^{(\Theta,\pm)}_{(s,\kv)}\big)_{\!s'} (r,\teta)
		= \tfrac{1 }{ 2\pi}\, \sum_{\ell \in \mathbb{Z}} e^{i \ell (\teta-\omega_{\pm}) \pm i \frac{\pi}{2}\,|\ell + \alpha|}\, J_{|\ell + \alpha|}(k\,r)\,  \delta_{s\,s'}\,\\
			+ \tfrac{i}{4} \sum_{\ell,\ell' \in \{0,-1\}}\! \Big\{\big[\Lambda_{\pm}(k^2) + \Theta\big]^{-1}\Big\}_{s'\!,s}^{\ell'\!,\ell}\, e^{i (\ell' \teta - \ell \omega_{\pm})}\;(\pm i k)^{|\ell + \alpha|}\,(\mp k)^{|\ell'+\alpha|}\,  H^{(1)}_{|\ell'+\alpha|}(\mp k\, r) \,.
	}
To say more, in view of \eqref{eq: psiFasy} and \eqref{eq: fpmF}, by means of \cite[Eq. 10.17.5]{OLBC10} we deduce the following, for\, $|\xv| \to +\infty$:

	\bml{
		f^{(\Theta,\pm)}_{(s,s'),\kv}
					= \tfrac{e^{\pm i \frac{\pi}{4}}}{(2\pi)^{3/2} \sqrt{k}} \sum_{\ell \in \mathbb{Z}} \lf(e^{\pm i \pi |\ell + \alpha|} - e^{\pm i \pi |\ell|}\ri) e^{i \ell (\teta - \omega_{\pm})} \delta_{s\,s'}\, \\
				+ \tfrac{i \pi \,e^{\pm i \frac{\pi}{4}}}{(2\pi)^{3/2} \sqrt{k}} \sum_{\ell,\ell' \in \{0,-1\}}\! \Big\{\big[\Lambda_{\pm}(k^2) + \Theta\big]^{-1}\Big\}_{s'\!,s}^{\ell'\!,\ell}\, e^{i (\ell' \teta - \ell \omega_{\pm})}\;(\pm i k)^{|\ell + \alpha| + |\ell'+\alpha|}\,.
	}
This confirms that $\varphiv^{(\Theta,+)}_{(s,\kv)}$ and $\varphiv^{(\Theta,-)}_{(s,\kv)}$ fulfill, respectively, the incoming and outgoing Sommerfeld radiation conditions.
\end{proof}

\begin{proof}[Proof of \cref{prop: zeroenresTheta}]
By decomposition in angular harmonics and an explicit calculation, it can be checked that the only distributional solutions of the zero-energy equation $H_{\mathrm{P}} \psiv = 0$ are of the form
	\be\label{eq: zeroenresexp}
		\psi_{s}(r,\teta) = \sum_{\ell \in \mathbb{Z}} \frac{e^{i \ell \teta}}{\sqrt{2\pi}} \lf[ c_{s}^{(\ell)}\,r^{-|\ell + \alpha|} + d_{s}^{(\ell)}\,r^{|\ell + \alpha|} \ri] ,
	\ee
with suitable coefficients $\mathbf{c}, \mathbf{d}  \in \mathbb{C}^4$ and for $s \in \{\uparrow,\downarrow\}$. The condition of uniform boundedness at infinity forces us to fix $\mathbf{d} = 0$. On one hand, the local Friedrichs conditions $\nabla \phiv_0, A_j \phiv_0 \in L^2_{\mathrm{loc}}(\mathbb{R}^2,\mathbb{C}^2)$ demand that $\mathbf c = 0$.
On the other hand, for a general extension, to exhibit the proper singular behavior at $\xv = \mathbf{0}$ encoded in $\dom\big(H_{\mathrm{P}}^{(\Theta)}\big)$, the above solutions must be locally of the form $\psiv = \phiv_{0} + \gmv(0) \qv$ with $\tau \phiv_0 = \big[\Lambda(0) + \Theta \big] \mathbf{q}$, see \eqref{eq:HThetadef}. Given that there is no regular part, namely $\phiv_{0} = \bm{0}$, this requirement can be fulfilled only if
	\begin{gather*}
		c_{s}^{(\ell)} = \lf\{\!\begin{array}{ll}
			\frac{\Gamma(|\ell + \alpha|)}{2^{1-|\ell+\alpha|}}\,q_{s}^{(\ell)}	& \mbox{for $\ell \in\{ 0,-1\}$}\,, \vspace{0.15cm}\\
			0 &	\mbox{for $\ell \in \Z \setminus \{ 0,-1\}$}\,, 
		\end{array}\ri. \qquad 
		\mbox{for some $\qv  \in \ker\big[\Lambda(0) + \Theta\big]$}\,,
	\end{gather*}
which yields the thesis.
\end{proof}

\subsection{Symmetries}\label{sec:sym}
\begin{proof}[Proof of \cref{pro: symmetries}]
Consider the transformations \eqref{eq:UST}, as the case \eqref{eq:VST} is completely analogous. Then,
\[
(\UU\psiv)(\xv) = e^{-i\eta_0-i\eta_3\sigma_3}\psiv\big(T^{-1}\xv\big)\,, \qquad T \in SO(2,\R)\,,\; \eta_0,\eta_3\in\R\,.
\]
Concerning the Friedrichs realization, the computations in \cref{app:2Dsym} show that
\begin{equation}\label{eq:Faction}
\UU H_{\mathrm{P}}^{(\mathrm{F})}(\mathbf A)\,\UU^{-1}=H_{\mathrm{P}}^{(\mathrm{F})}(\mathbf A)\,.
\end{equation}

We now consider a generic self-adjoint extension $H_{\mathrm{P}}^{(\bb)}$, $\bb \in \mathrm{M}_{4,\mathrm{Herm}}(\mathbb{C})$, belonging to the family characterized in \cref{thm:QBHB}. Notice that, in order to prove the invariance of the domain, it suffices to prove the inclusion 
\begin{equation}\label{eq:incl}
\UU\,\dom(H_{\mathrm{P}}^{(\bb)}(\mathbf A))\subseteq \dom (H_{\mathrm{P}}^{(\bb)}(\mathbf A))\,.
\end{equation}
Given $\psiv\in\dom (H_{\mathrm{P}}^{(\bb)}(\mathbf A))$, such an element decomposes as in \eqref{HBdom}
and then
\[
\UU\psiv=\UU\phiv_\lambda+\sum_{s,\ell}q^{(\ell)}_s \,\UU \giv^{(\ell)}_{\lambda,s}\, .
\]
Thus one needs to rewrite the last terms in the above formula as in \eqref{HBdom}, defining a new charge $\widetilde {\mathbf q}$, and then proceed to check the boundary conditions. To this aim, since the multiplicative factor $e^{-i\eta_0}$ drops in the latter, in order to simplify the computations we can take $\eta_0=0$, so that we actually consider
\[
(\UU\psiv)(\xv) = e^{-i\eta_3\sigma_3}\psiv\big(T^{-1} \xv\big)\,.
\]
Writing the rotation matrix $T$ as
\[
T=\begin{pmatrix}\cos\varsigma & -\sin\varsigma \\ \sin\varsigma & \cos\varsigma \end{pmatrix}\,,\qquad \varsigma\in[0,2\pi),
\] 
and using \eqref{eq:g} and \eqref{G1234} one finds
\[
\UU \giv^{(\ell)}_{\lambda,\uparrow} = e^{-i(\eta_3 + \varsigma \ell)}\, \giv^{(\ell)}_{\lambda,\uparrow}\,,\qquad  
\UU \giv^{(\ell)}_{\lambda,\downarrow} = e^{i (\eta_3 - \varsigma \ell)}\, \giv^{(\ell)}_{\lambda, \downarrow }\,.
\]
Then, $\widetilde\psiv:=\UU\psiv$ can be rewritten as
\[
\widetilde\psiv=\widetilde\phiv_\lambda+\sum_{s,\ell}\,\widetilde q^{(\ell)}_s\, \giv^{(\ell)}_{\lambda,s}\,,
\] 
with $\widetilde\phiv_\lambda:=\UU\phiv_\lambda$ and the new charges
\[
\widetilde q^{(\ell)}_{\uparrow} = e^{-i(\eta_3 + \varsigma \ell)}\, q^{(\ell)}_{\uparrow}\,, \qquad
\widetilde q^{(\ell)}_{\downarrow} = e^{i(\eta_3 - \varsigma \ell)}\, q^{(\ell)}_{\downarrow}\,.
\]
Notice that
\[
\widetilde\phiv_\lambda(x) = (\UU\phiv_\lambda)(x) 
= \begin{pmatrix}e^{-i\eta_3} \phi_{\lambda,\uparrow}(T^{-1}x) \\ e^{i\eta_3} \phi_{\lambda,\downarrow}(T^{-1}x)  \end{pmatrix} .
\]
Simple calculations give the following results for the Fourier coefficients involved 
\[
\widetilde\phi^{(\ell)}_{\lambda,\uparrow} = e^{-i(\eta_3 + \varsigma \ell)} \phi^{(\ell)}_{\lambda,\uparrow}\,, \quad 
\widetilde\phi^{(\ell)}_{\lambda,\downarrow} = e^{i(\eta_3 - \varsigma \ell)} \phi^{(\ell)}_{\lambda,\downarrow} \,.
\]
Let us now examine the boundary conditions in \eqref{HBdom}. After some long but straightforward algebraic computations one sees that the following conditions must be fulfilled:
\begin{align*}
	\big(L(\lambda)+\bb\big)_{\uparrow \uparrow}^{(\ell \ell')} \lf(e^{i \varsigma(\ell - \ell')} - 1\ri) = 0\,, & \qquad
	\big(L(\lambda)+\bb\big)_{\uparrow \downarrow}^{(\ell \ell')} \lf(e^{2i \eta_3 + i \varsigma(\ell - \ell')} - 1\ri) = 0\,, \\
	\big(L(\lambda)+\bb\big)_{\downarrow \uparrow}^{(\ell \ell')} \lf(e^{-2i\eta_3 + i \varsigma(\ell - \ell')} - 1\ri) = 0\,, & \qquad
	\big(L(\lambda)+\bb\big)_{\downarrow \downarrow}^{(\ell \ell')} \lf(e^{i \varsigma(\ell - \ell')} - 1\ri) = 0\,.
\end{align*}
Notice that the second and third equations are indeed equivalent, since $L(\lambda),\bb$ are Hermitian.
Therefore, for generic values of $\eta_3,\varsigma$, since $L(\lambda)$ is diagonal, $\bb$ must be a diagonal matrix too.

Concerning the action of the operator, thanks to \eqref{eq:Faction} there holds
\bmln{
\UU \big(H_{\mathrm{P}}^{(\bb)}\!+\lambda^2\big)\UU^{-1}(\UU\psiv) =\UU \big(H_{\mathrm{P}}^{(\bb)}+\lambda^2\big)\psiv=\UU \big(H_{\mathrm{P}}^{(\mathrm{F})}+\lambda^2\big)\psiv  \\
=\UU \big(H_{\mathrm{P}}^{(\mathrm{F})}+\lambda^2\big)\UU^{-1}(\UU\psiv)
= \big(H_{\mathrm{P}}^{(\mathrm{F})}+\lambda^2)(\UU\psiv\big)= \big(H_{\mathrm{P}}^{(\bb)}+\lambda^2\big)(\UU\psiv)\,.
}
\end{proof}

\subsection{Comparison with the Dirac operator}\label{sec:compdirac}

\begin{proof}[Proof of \cref{pro: dirac vs pauli}]
The proof of the thesis will be achieved in various steps. Without loss of generality, in what follows we fix $\lambda=1$, omitting the dependence on the spectral parameter for simplicity of notation. We will repeatedly use the expansion as $r\to0^+$ given in \eqref{eq:G01asy} (see \cite[Eq. 6.521.3]{GR07}), alongside with the basic identities 
\be\label{eq:gxi}
\left\{\!\begin{array}{l}
\giv^{(-1)}_\uparrow\! = \tfrac{1}{\sqrt{2\pi}}\,\tfrac{\xiv_{+}+\xiv_{-}}{2}\,, \\
\giv^{(0)}_\downarrow=\tfrac{1}{\sqrt{2\pi}}\,\tfrac{\xiv_{+}-\xiv_{-}}{2}\,,
\end{array}\right.
\qquad \Longleftrightarrow \qquad 
\left\{\!\begin{array}{l}
\xiv_{+} = \sqrt{2\pi}\, \big(\giv^{(-1)}_\uparrow + \giv^{(0)}_\downarrow\big)\,,\\
\xiv_{-}= \sqrt{2\pi} \big(\giv^{(-1)}_\uparrow - \giv^{(0)}_\downarrow\big)\,.
\end{array}\right.
\ee
which readily follow from \eqref{eq:g}, \eqref{G1234} and \eqref{eq:Ddefect}.

{\sl Step 1.}  We show that $\dom(H^{(\bb)}_{\mathrm{P}})\subseteq \dom(H_{\mathrm{D}}^{\gamma})$ if and only if one of the following alternatives holds:
\begin{enumerate}[(a{.1})]
\item $\gamma=0$, $\bb_{\uparrow \uparrow}^{(0 0)} = \bb_{\downarrow \downarrow}^{(0 0)} = \bb_{\downarrow \downarrow}^{(-1 -1)} = \infty$ and $\bb_{s s'}^{(\ell \ell')}$ arbitrary otherwise;

\item $\gamma=\pi$, $\bb_{\uparrow \uparrow}^{(0 0)} = \bb_{\uparrow \uparrow}^{(-1 -1)} = \bb_{\downarrow \downarrow}^{(-1 -1)} = \infty$ and $\bb_{s s'}^{(\ell \ell')}$ arbitrary otherwise;

\item $\gamma \in [0,2\pi)$ arbitrary, $\bb_{\uparrow \uparrow}^{(0 0)} = \bb_{\uparrow \uparrow}^{(-1 -1)} = \bb_{\downarrow \downarrow}^{(0 0)} = \bb_{\downarrow \downarrow}^{(-1 -1)} = \infty$ and $\bb_{s s'}^{(\ell \ell')}$ arbitrary otherwise.
\end{enumerate}
\noindent
Let us firstly notice that \eqref{asyF} entails $\phi_{\lambda}(r,\theta) = o(1)$ as $r \to 0^+$, for all $\phi_{\lambda} \in \dom\big(H^{(\mathrm{F})}_{\mathrm{P}}\big)$. From here and from \eqref{eq:g}, \eqref{G1234} and \eqref{eq:G01asy} it follows that, for any $\psiv \in \dom\big(H^{(\bb)}_{\mathrm{P}}\big)$, there holds
\[
\psiv 
= \varphiv_{\lambda} + \tx\sum_{s,\ell} q^{(\ell)}_s \giv^{(\ell)}_{\lambda,s}
= \!\begin{pmatrix} q^{(0)}_\uparrow\, \frac{\Gamma(\alpha)}{2^{1-\alpha}}\frac{1}{\sqrt{2\pi}}\frac{1}{r^\alpha} + q^{(-1)}_\uparrow\,\frac{\Gamma(1-\alpha)}{2^\alpha}\frac{e^{-i\theta}}{\sqrt{2\pi}}\frac{1}{r^{1-\alpha}} \vspace{0.1cm}\\
 q^{(0)}_\downarrow\, \frac{\Gamma(\alpha)}{2^{1-\alpha}}\frac{1}{\sqrt{2\pi}}\frac{1}{r^\alpha} + q^{(-1)}_\downarrow\,\frac{\Gamma(1-\alpha)}{2^\alpha}\frac{e^{-i\theta}}{\sqrt{2\pi}}\frac{1}{r^{1-\alpha}} \end{pmatrix} 
 + o(1)\,,\quad \mbox{as $r\to0^+$.}
\]
On the other hand, assuming that $\psi\in\dom\big(H_{\mathrm{D}}^{\gamma}\big)$, by similar arguments and \eqref{eq:gxi}, we infer
\[
\psiv=\phiv+\mu(\xiv_+ +e^{i\gamma}\xiv_-)=
\begin{pmatrix}
\mu(1+e^{i\gamma})\,\frac{\Gamma(1-\alpha)}{2^\alpha}\frac{e^{-i\theta}}{\sqrt{2\pi}}\frac{1}{r^{1-\alpha}} \vspace{0.1cm}\\
\mu(1-e^{i\gamma})\,\frac{\Gamma(\alpha)}{2^{1-\alpha}}\frac{1}{\sqrt{2\pi}}\frac{1}{r^\alpha}
\end{pmatrix}
+ o(1)\,,\quad \mbox{as $r \to 0^+$.}
\]
Matching the above asymptotics, we get $q^{(0)}_\uparrow=0$, $q^{(-1)}_\downarrow =0$, $q^{(-1)}_\uparrow=\mu(1+e^{i\gamma})$ and $q^{(0)}_\downarrow=\mu(1-e^{i\gamma})$. In particular, $\mu(1-e^{i\gamma}) q^{(-1)}_\uparrow = \mu(1+e^{i\gamma}) q^{(0)}_\downarrow$. Given that $q^{(-1)}_\uparrow, q^{(0)}_\downarrow \in\CC$ are independent parameters, it appears that the only admissible alternatives are the following:
\begin{enumerate}[(A{.1})]
\item $\gamma=0$,  $q^{(0)}_\uparrow=q^{(-1)}_\downarrow =q^{(0)}_\downarrow=0$ and $q^{(-1)}_\uparrow\in\CC$ arbitrary;
\item $\gamma=\pi$, $q^{(0)}_\uparrow=q^{(-1)}_\downarrow =q^{(-1)}_\uparrow=0$ and $q^{(0)}_\downarrow\in\CC$ arbitrary;
\item $\mu=0$ and $q^{(0)}_\uparrow=q^{(-1)}_\downarrow =q^{(-1)}_\uparrow=q^{(0)}_\downarrow=0$.
\end{enumerate}
It is easy to check that these are indeed equivalent to the conditions (a.1), (b.1) and (c.1), respectively.

{\sl Step 2.} Let us now prove that $H_{\mathrm{D}}^{\gamma} \big(\dom\big(H^{(\mathrm{F})}_{\mathrm P}\big)\big)
= H_{\mathrm{D}} \big(\dom\big(H^{(\mathrm{F})}_{\mathrm P}\big)\big)
\not\subseteq \dom\big(H^{\gamma'}_D\big)$ for all $\gamma,\gamma'\in[0,2\pi)$.
We first remark that $\dom\big(H^{(\mathrm{F})}_{\mathrm P}\big)$ consists of functions which are indeed regular at the origin. So, any self-ajoint realization $H^\gamma_{\mathrm{D}}$ of the Dirac operator acts on $\dom\big(H^{(\mathrm{F})}_{\mathrm P}\big)$ as the basic differential operator $H_{\mathrm{D}}$, which accounts for the first identity in the claim.
Taking this into account, we now proceed to show that $H_{\mathrm{D}}\phiv \notin \dom\big(H^{\gamma'}_D\big)$ for any given $\phiv \in\dom(H^{(F)}_{\mathrm P})$ and for any $\gamma' \in[0,2\pi)$, ultimately proving the thesis. 

Recalling that $H^{(F)}_{\mathrm P}=H^{(F)}_{\mathrm S}\oplus H^{(F)}_{\mathrm S}$, where $H^{(F)}_{\mathrm S}$ is the Friedrichs extension of the Schr\"odinger operator $H_{\mathrm S} = (-i\nabla+\mathbf{A})^2$ (see \cref{rem:HPF}), we deduce that any $\phiv \in \dom\big(H_{\mathrm{P}}^{(\mathrm{F})}\big)$ can be represented as
\begin{equation}\label{eq:Rf}
\phiv = R^{(\mathrm{F})}_{\mathrm S}(-1) \oplus R^{(\mathrm{F})}_{\mathrm S}(-1)\, \bm{f} \,,\qquad \mbox{for some $ \bm{f} =\begin{pmatrix}f_\uparrow \\ f_\downarrow \end{pmatrix}\in L^2(\mathbb{R}^2;\mathbb{C}^2)$}\,,
\end{equation}
where $R^{(\mathrm{F})}_{\mathrm{S}}(-1)$ is the resolvent of $H_{\mathrm{S}}^{(\mathrm{F})}$, acting by convolution with the integral kernel \eqref{eq:Fresolv}.
Keeping in mind that, in polar coordinates, we have
\[
H_{\mathrm{D}}= \begin{pmatrix} 0 & e^{-i\theta}(-i\partial_r - \frac{\partial_\theta+i\alpha}{r}) \\ e^{i\theta}(-i\partial_r + \frac{\partial_\theta+i\alpha}{r}) & 0 \end{pmatrix},
\]
in view of \eqref{eq:Rf} and \eqref{eq:Fresolv}, we get
\begin{multline}\label{eq:Dup}
(H_{\mathrm{D}}\phiv)_\uparrow=\sum_{k\in\Z}\left[ \left(-i\partial_r-i\tfrac{(k+\alpha)}{r}\right)K_{\vert k+\alpha\vert}( r)\right]\int^r_0 \,\diff r' r'I_{\vert k+\alpha\vert} ( r')f_\downarrow^k(r') \: \tfrac{e^{i(k-1)\theta}}{\sqrt{2\pi}}\\
 +\sum_{k\in\Z}\left[\left(-i\partial_r-i\tfrac{(k+\alpha)}{r}\right)I_{\vert k+\alpha\vert} ( r)\right]\int^{+\infty}_r  \,\diff r' r' K_{\vert k+\alpha\vert}( r')f_\downarrow^k(r') \: \tfrac{e^{i(k-1)\theta}}{\sqrt{2\pi}} \,,
 \end{multline}
 \begin{multline}\label{eq:Ddown}
 (H_{\mathrm{D}}\phiv)_\downarrow= \sum_{k\in\Z}\left[ \left(-i\partial_r+i\tfrac{(k+\alpha)}{r}\right)K_{\vert k+\alpha\vert}( r)\right]\int^r_0 \,\diff r' r'I_{\vert k+\alpha\vert} ( r')f_\uparrow^k(r') \: \tfrac{e^{i(k+1)\theta}}{\sqrt{2\pi}}\\
 +\sum_{k\in\Z}\left[\left(-i\partial_r+i\tfrac{(k+\alpha)}{r}\right)I_{\vert k+\alpha\vert} ( r)\right]\int^{+\infty}_r \,\diff r' r'K_{\vert k+\alpha\vert}( r')f_\uparrow^k(r') \: \tfrac{e^{i(k+1)\theta}}{\sqrt{2\pi}} \,,
\end{multline}
where the $f^k_s$ are the Fourier coefficients in the angular wave expansion of $f_s$.

Now, recall that the boundary conditions encoded in $\dom\big(H_{\mathrm{D}}^\gamma\big)$ entail the evaluation of the trace maps $c^s_{-\alpha},c^s_{\alpha - 1} : \dom\big(H_{\mathrm{D}}^\gamma\big) \to \CC$, see \eqref{eq:Diractraces} and \eqref{bcDir1}, \eqref{bcDir2}.
Using the known identities $(- \partial_r \pm \frac{\nu}{r})K_\nu( r) =  K_{\nu \pm 1}(r)$, $(- \partial_r \pm \frac{\nu}{r})I_\nu( r) = -  I_{\nu \pm 1}(r)$ \cite[Eq. 10.29.2]{OLBC10} and $K_{-\nu}(z) = K_{\nu}(z)$ \cite[Eq. 10.27.3]{OLBC10}, we infer
\begin{gather}
 \lf\langle (H_{\mathrm{D}}\phiv)_\uparrow \ri\rangle(r) = \tfrac{i}{\sqrt{2\pi}}\left( K_\alpha(r) \int^r_0 \!\diff r' r'I_{1+\alpha} ( r')f_\downarrow^1(r') - I_\alpha(r) \int^{\infty}_r\!\!\diff r' r'K_{1+\alpha}( r')f_\downarrow^1(r') \right), \label{eq:avgint1}
\\
 \lf\langle (H_{\mathrm{D}}\phiv)_\uparrow\, e^{i\theta} \ri\rangle(r) = \tfrac{i}{\sqrt{2\pi}}\left( K_{1-\alpha}( r) \int^r_0 \!\diff r' r' I_{\alpha} ( r')f_\downarrow^0(r') - I_{-(1-\alpha)}( r) \int^{\infty}_r\!\!\diff r' r'K_{\alpha}( r')f_\downarrow^0(r') \right),
\\
 \lf\langle (H_{\mathrm{D}}\phiv)_\downarrow \ri\rangle(r) = \tfrac{i}{\sqrt{2\pi}}\left( K_\alpha( r) \int^r_0 \!\diff r' r' I_{ 1-\alpha} ( r')f_\uparrow^{-1}(r') - I_{-\alpha}( r) \int^{\infty}_r \!\!\diff r' r'K_{1-\alpha}( r')f_\uparrow^{-1}(r') \right),
\\
 \lf\langle (H_{\mathrm{D}}\phiv)_\downarrow\, e^{i\theta} \ri\rangle(r) = \tfrac{i}{\sqrt{2\pi}}\left( K_{1-\alpha}( r) \int^r_0 \!\diff r' r' I_{ 2-\alpha} ( r')f_\uparrow^{-2}(r') - I_{1-\alpha}( r) \int^{\infty}_r \!\!\diff r' r'K_{2-\alpha}( r')f_\uparrow^{-2}(r') \right). \label{eq:avgint4}
\end{gather}
Exploiting the asymptotics of the Bessel functions $I_\nu,K_\nu$ \cite[Eq. 10.30.1-2]{OLBC10}, it can be checked that
\begin{gather*}
\left\vert \int^r_0 \!\diff r' r'I_{ 1+\alpha} ( r')f_\downarrow^1(r') \right\vert\leq \lf\Vert f_\downarrow \ri\Vert_2\left( \int^r_0 \!\diff r' r'\,\big|I_{ 1+\alpha} ( r')\big|^2\right)^{\!1/2}\leq C\, r^{2+\alpha}\,,
\\
\left\vert\int^{\infty}_r \!\!\diff r' r' K_{1+\alpha}( r')f_\downarrow^1(r') \right\vert\leq  \lf\Vert f_\downarrow\ri\Vert_2 \left(\int^{\infty}_r \!\!\diff r' r'\,\big|K_{1+\alpha}( r')\big|^2 \right)^{\!1/2}\leq C\,r^{-\alpha}\,;
\\
\left\vert \int^r_0 \!\diff r' r' I_{\alpha} ( r')f_\downarrow^0(r') \right\vert\leq \lf\Vert f_\downarrow \ri\Vert_2\left( \int^r_0 \!\diff r' r'\,\big|I_{ \alpha} ( r')\big|^2\right)^{\!1/2}\leq C\, r^{1+\alpha}\,,
\\
\left\vert \int^r_0 \!\diff r' r'I_{ 1-\alpha} ( r')f_\uparrow^{-1}(r') \right\vert\leq \lf\Vert f_\uparrow \ri\Vert_2\left( \int^r_0 \!\diff r' r'\,\big|I_{1-\alpha} ( r')\big|^2\right)^{\!1/2}\leq C\, r^{2-\alpha}\,,
\\
\left\vert \int^r_0 \!\diff r' r'I_{ 2-\alpha} ( r')f_\uparrow^{-2}(r') \right\vert\leq \lf\Vert f_\uparrow \ri\Vert_2\left( \int^r_0 \!\diff r' r'\,\big|I_{ 2-\alpha} ( r')\big|^2\right)^{\!1/2}\leq C\, r^{3-\alpha}\,,
\\
\left\vert\int^{\infty}_r \!\!\diff r' r' K_{2-\alpha}( r')f_\uparrow^{-2}(r') \right\vert\leq  \lf\Vert f_\uparrow\ri\Vert_2 \left(\int^{\infty}_r \!\!\diff r' r'\,\big|K_{2-\alpha}( r')\big|^2 \right)^{\!1/2}\leq C\,r^{-(1-\alpha)}\,.
\end{gather*}
On the other hand, keeping in mind that $\bm{f} = (H_{P} + 1) \phiv$, see \eqref{eq:Rf}, recalling the definition of the trace maps $\tau_s^{(\ell)} : \dom\big(H^{(\mathrm{F})}_{\mathrm P}\big) \to \CC$, see \eqref{eq:traceop}, integrating by parts and using the asymtptotics \eqref{eq:G01asy}, we obtain
\begin{multline*}
\int^{\infty}_r \!\!\diff r' r' K_{\alpha}( r')f_\downarrow^0(r') 
= \, \braketl{ G_{\downarrow}^{(0)}}{\bm{f}}_{L^2(\R^2 \setminus B_r(\bm{0}))}
= \sum_{s' \in \{\uparrow,\downarrow\}} \int_{\partial B_r(\bm{0})}\! \diff\Sigma_r\; \big[ \big(G_{\downarrow}^{(0)}\big)_{s'} \partial_r \phi_{s'} - \partial_r\big(G_{\downarrow}^{(0)}\big)_{s'}  \phi_{s'} \big]
\\
=  \int_{0}^{2\pi}\!\! r\diff\theta\; \big[ g^{(0)} \partial_r \phi_{\downarrow} - \partial_r g^{(0)}  \phi_{\downarrow} \big] 
= \tfrac{\Gamma( \alpha ) }{ 2^{1-\alpha}}\, \lim_{r \to 0^+} \tfrac{1}{r^{\alpha}} \int_{0}^{2\pi}\!\! \diff\theta\; (\alpha + r \partial_r) \phi_{\downarrow}\,\tfrac{1}{ \sqrt{2\pi}} + o(1)
=  \tau_{\downarrow}^{(0)} \phiv + o(1)\,,
\end{multline*}
as $r \to 0^{+}$. A similar computation yields
\[
\int^{\infty}_r \!\!\diff r' r' K_{1-\alpha}( r')f_\uparrow^{-1}(r') = \, \tau_{\uparrow}^{(-1)} \phiv + o(1) \,.
\]
Combining the above results, we find
\begin{gather*}
c^{\uparrow}_{\alpha-1}(H_{\mathrm{D}}\phiv)
	=\lim_{r\to0^+}r^{1-\alpha}\,\langle (H_{\mathrm{D}}\phiv)_{\uparrow}\, e^{i\theta}\rangle 
	= -\, \tfrac{i\,2^{1-\alpha}}{\sqrt{2\pi}\,\Gamma(\alpha)} \,  \tau_{\downarrow}^{(0)} \phiv\,, \\
c^{\downarrow}_{-\alpha}(H_{\mathrm{D}}\phiv)
	= \lim_{r\to0^+}r^\alpha\,\langle (H_{\mathrm{D}}\phiv)_{\downarrow}\rangle
	= - \,\tfrac{i\,2^{\alpha}}{\sqrt{2\pi}\,\Gamma(1-\alpha)}\, \tau_{\uparrow}^{(-1)} \phiv \,,
\end{gather*}
which imply, in turn,
\begin{equation}\label{eq:Hdphiv}
H_{\mathrm{D}}\phiv =
\begin{pmatrix}
-i\,\frac{2^{1-\alpha}}{\Gamma(\alpha)}\,[\tau^{(0)}_\downarrow\phiv]\frac{e^{-i\theta}}{\sqrt{2\pi}}\frac{1}{r^{1-\alpha}} \\
-i\, \frac{2^{\alpha}}{\Gamma(1-\alpha)}\,[\tau^{(-1)}_\uparrow\phiv]\frac{1}{\sqrt{2\pi}}\frac{1}{r^{\alpha}}
\end{pmatrix}+o(1)\,.
\end{equation}
On the other hand, if $H_{\mathrm{D}}\phiv$ were to belong to $\dom(H^{\gamma'}_D)$, there should exist some $\mu' \in \CC$ such that
\[
H_{\mathrm{D}}\phiv=
\begin{pmatrix}
\mu'(1+e^{i\gamma'})\frac{\Gamma(1-\alpha)}{2^\alpha}\frac{e^{-i\theta}}{\sqrt{2\pi}}\frac{1}{r^{1-\alpha}} \\
\mu'(1-e^{i\gamma'})\frac{\Gamma(\alpha)}{2^{1-\alpha}}\frac{1}{\sqrt{2\pi}}\frac{1}{r^{\alpha}} 
\end{pmatrix}+o(1)\,.
\]
Matching the coefficients in the above expansions, we would obtain $\sin(\gamma'/2)\,\tau^{(0)}_\downarrow\phiv = i\,\cos(\gamma'/2)\,\tau^{(-1)}_\uparrow \phiv$, which is absurd, given that the traces $\tau^{(0)}_\downarrow \phiv,\tau^{(-1)}_\uparrow \phiv\in\CC$ are independent for a generic $\phiv\in\dom(H^{(F)}_{\mathrm P})$. 

{\sl Step 3}. We now prove that $\dom(H^{(\bb)}_{\mathrm P})\subseteq \dom((H_{\mathrm{D}}^{\gamma})^2)$ if and only if one of the following alternatives holds:
\begin{enumerate}[(a{.3})]
\item condition (a.1) in Step 1 holds and $\bb_{\uparrow \uparrow}^{(-1 -1)} = 0$;

\item condition (b.1) in Step 1 holds and $\bb_{\downarrow \downarrow}^{(0 0)} = 0$.
\end{enumerate}

Let us firstly remark that, considering the basic definition $\dom\big((H_{\mathrm{D}}^{\gamma})^2\big) \!=\! \{\psiv \in\dom(H_{\mathrm{D}}^{\gamma})\,|\, H_{\mathrm{D}}^{\gamma}\psiv \in\dom(H_{\mathrm{D}}^{\gamma})\}$, the claim proved in \emph{Step 2} ensures that $\dom \big(H^{(\mathrm{F})}_{\mathrm{P}}\big)$ is not a subset of $\dom\big((H_{\mathrm{D}}^{\gamma})^2\big)$ for any $\gamma \in [0,2\pi)$. This rules out the alternative (c.1) in \emph{Step 1}. So, we only have to examine alternatives (a.1),  (b.1) therein. As an example, hereafter we discuss case (a.1).

Assume $\dom(H^{(\bb)}_{\mathrm P})\subseteq \dom\big((H^0_{\mathrm{D}})^2\big)$, for some suitable $\bb \in \mathrm{M}_{4,\mathrm{Herm}}(\mathbb{C})$ fulfilling condition (a.1) in \emph{Step 1}. On one hand, for any $\psiv \in\dom(H^{(\bb)}_{\mathrm P})$, we have
\[
\psiv =\phiv +q^{(-1)}_\uparrow \giv^{(-1)}_\uparrow
= \phiv +\,q^{(-1)}_\uparrow\,\tfrac{\xiv_+ + \xiv_- }{2\sqrt{2\pi}}\;,
\]
where $\phiv \in\dom\big(H^{(\mathrm{F})}_{\mathrm P}\big) \subset \dom\big[Q^{(\mathrm{F})}_{\mathrm P}\big]$. Recalling the definition of $H^0_{\mathrm{D}}$, see \cref{thm:Ddom}, and using the asymptotic expansion \eqref{eq:Hdphiv}, we obtain
\[
H^0_{\mathrm{D}}\psiv = H_{\mathrm{D}}\phiv + \tfrac{i}{2}\,q^{(-1)}_\uparrow\,(\xiv_+ - \xiv_-)
= \begin{pmatrix}
-i\,\frac{2^{1-\alpha}}{\Gamma(\alpha)}\big(\tau^{(0)}_\downarrow\phiv\big)\frac{e^{-i\theta}}{\sqrt{2\pi}}\,\frac{1}{r^{1-\alpha}} \\
\left[-i\,\frac{2^\alpha}{\Gamma(1-\alpha)}\big(\tau^{(-1)}_\uparrow\phiv\big) + i\frac{\Gamma(\alpha)}{2^{1-\alpha}}\,q^{(-1)}_\uparrow \right]\frac{1}{\sqrt{2\pi}}\,\frac{1}{r^\alpha} \end{pmatrix}+o(1)\,, \quad \mbox{as $r \to 0^+$}.
\]
On the other hand, since $H^0_{\mathrm{D}}\psiv\in\dom(H^0_{\mathrm{D}})$, there must exist some $\tilde{\phiv} \in \dom\big[Q^{(\mathrm{F})}_{\mathrm P}\big]$ and $\tilde{\mu} \in \CC$ such that
\[
H^0_{\mathrm{D}}\psiv = \tilde{\phiv} + \tilde{\mu}(\xiv_++\xiv_-)=
\begin{pmatrix} \tilde{\mu}\, 2^{1-\alpha}\,\Gamma(1-\alpha)\,\frac{e^{-i\theta}}{\sqrt{2\pi}} \,\frac{1}{r^{1-\alpha}} \\
0
\end{pmatrix}+o(1)\,, \qquad \mbox{as $r \to 0^+$}.
\]
Matching the $\downarrow$\! -\! component in the above asymptotics, we get
\[
\tau^{(-1)}_\uparrow\phiv = \frac{\pi}{2\sin(\pi\alpha)}\, q^{(-1)}_\uparrow\,.
\]
At the same time, the boundary condition in \eqref{HBdom} (see also \eqref{eq:traceop}) entails
\[
\tau^{(-1)}_\uparrow\phiv=\left[(L(1)+\bb)\qv \right]^{(-1)}_\uparrow=\left[\frac{\pi}{2\sin(\pi\alpha)}+\bb_{\uparrow \uparrow}^{(-1,-1)} \right]q^{(-1)}_\uparrow\,.
\]
The above identities yield $\bb_{\uparrow \uparrow}^{(-1,-1)}=0$, which ultimately proves claim (b.3).

{\sl Step 4.}
To infer the thesis, we need to show that the inclusion $\dom\big(H^{(\bb)}_{\mathrm P}\big)\subseteq \dom\big((H_{\mathrm{D}}^{\gamma})^2\big)$, proved in the preceding \emph{Step 3} for suitable choices of $\gamma$ and $\beta$, is indeed an equality.
As a matter of fact, the operators $H^{(\bb)}_{\mathrm P}$ and $(H_{\mathrm{D}}^{\gamma})^2$, with $\gamma$ and $\beta$ fixed as in \emph{Step 3}, have to coincide because they are self-adjoint extensions of the same closable symmetric operator, namely,
$\big( \bm{\sigma}\cdot(-i\nabla+\mathbf{A})\big)^2$ on $C^\infty_c(\R^2\setminus\{\bm{0}\})$.
\end{proof}

	\begin{proof}[Proof of \cref{prop: zeroenresDirac}]
	Proceeding as in the proof of \cref{prop: zeroenresTheta}, it can be checked that the only distributional solutions of the zero-energy equation $H_{\mathrm{D}} \psiv = 0$ are of the form
	\be
		\psiv(r,\teta) = \begin{pmatrix}
			\sum_{\ell \in \mathbb{Z}} d_{\uparrow}^{(\ell)}\,\frac{e^{i \ell \teta}}{\sqrt{2\pi}}\, r^{\ell + \alpha} \vspace{0.1cm} \\
			\sum_{\ell \in \mathbb{Z}} d_{\downarrow}^{(\ell)}\,\frac{e^{i \ell \teta}}{\sqrt{2\pi}}\, r^{-(\ell + \alpha)}
		\end{pmatrix}\,,
	\ee
with suitable coefficients $\mathbf{d} \in \mathbb{C}^4$. The condition of uniform boundedness at infinity forces us to fix $d_{\uparrow}^{(\ell)} = 0$ for $\ell \geqslant 0$ and $d_{\downarrow}^{(\ell)} = 0$ for $\ell \leqslant -1$. On the other hand, the condition $\psiv \in L^2_{\mathrm{loc}}(\R^2)$ entails $d_{\uparrow}^{(\ell)} = 0$ for $\ell \leqslant - 2$ and $d_{\downarrow}^{(\ell)} = 0$ for $\ell \geqslant 1$. Summing up, the only admissible zero-energy resonances have the form
	\be
		\psiv(r,\teta) = \begin{pmatrix}
			d_{\uparrow}^{(-1)}\,\frac{e^{-i \teta}}{\sqrt{2\pi}}\, \frac{1}{r^{1 - \alpha}} \vspace{0.1cm} \\
			d_{\downarrow}^{(0)}\,\frac{1}{\sqrt{2\pi}}\, \frac{1}{r^{\alpha}}
		\end{pmatrix}\,.
	\ee
Then, it can be easily checked that the boundary conditions associated to $\dom(H_{\mathrm{D}}^{(\gamma)})$, see \eqref{eq:Ddom2}, are verified if and only if
$$
d_\uparrow^{(-1)} = i\cot(\gamma/2)\, \tfrac{2^{1-2\alpha}\Gamma(1-\alpha)}{\Gamma(\alpha)}\, d_\downarrow^{(0)}\,,
$$
which concludes the proof.
	\end{proof}

\appendix

\section{Symmetries of 2D Pauli and Dirac operators}\label{app:2Dsym}

We consider here generic Pauli and Dirac operators of the form
\begin{gather}
	H_{\mathrm{P}}(\mathbf{A}) := \big(\bm{\sigma} \cdot (-i \nabla + \mathbf{A})\big)^2 = \sigma_{j}\, \sigma_{\ell}\,(-i \partial_j + A_j)(-i \partial_\ell + A_\ell) \,, \label{eq:Pauli2exp} \\
	H_{\mathrm{D}}(\mathbf{A}) := \bm{\sigma} \cdot (-i \nabla + \mathbf{A}) = \sigma_{j} \,(-i \partial_j + A_j)\,, \label{eq:Dirac2exp}
\end{gather}
where $\mathbf{A} = (A_1,A_2)$ is any vector-valued distribution in $\mathbb{R}^2$ and we are using Einstein's convention to sum over the repeated indices $j,\ell \in \{1,2\}$. Let us stress that here we are uniquely concerned with the algebraic features of $H_{\mathrm{P}}(\mathbf{A})$ and $H_{\mathrm{D}}(\mathbf{A})$. Accordingly, we neglect all domain and self-adjointness issues.

We henceforth proceed to classify the transformations which leave the structure of $H_{\mathrm{P}}(\mathbf{A})$ and $H_{\mathrm{D}}(\mathbf{A})$ invariant. More precisely, we consider (anti-)linear transformations involving both spin and coordinate degrees of freedom of the form
\begin{gather}
	 \UU(S,T,\xv_0) : L^2(\mathbb{R}^2;\mathbb{C}^2) \to L^2(\mathbb{R}^2;\mathbb{C}^2)\,, \qquad (\UU \psi)(\xv) = S(\xv)\, \psi(T^{-1} \xv - \xv_0)\,; \label{eq:USTx0} \\
	 \VV(S,T,\xv_0) : L^2(\mathbb{R}^2;\mathbb{C}^2) \to L^2(\mathbb{R}^2;\mathbb{C}^2)\,, \qquad (\VV \psi)(\xv) = S(\xv)\, \psi^*(T^{-1} \xv - \xv_0)\,. \label{eq:VSTx0}
\end{gather}
Here $ S(\xv) : \mathbb{R}^2 \to GL(2,\mathbb{C})$ is any smooth section of the trivial fiber bundle $\mathbb{R}^2 \times GL(2,\mathbb{C})$, $T   \in GL(2,\mathbb{R})$ is any constant matrix, and $\xv_0 \in \mathbb{R}^2$ is any fixed vector.
In the forthcoming subsections \cref{subsec:transfP} and \cref{subsec:transfD} we identify all (anti-)unitary operators $\WW$ of the form \eqref{eq:USTx0} or \eqref{eq:VSTx0} fulfilling the following identities, for some suitable vector-valued distribution $\widetilde{\mathbf{A}} = \big(\widetilde{A}_1,\widetilde{A}_2\big)$:
\begin{gather}
	\WW H_{\mathrm{P}}(\mathbf{A})\,\WW^{-1} = H_{\mathrm{P}}\big(\widetilde{\mathbf{A}}\big) \,, \label{eq:VHPAV} \\
	\WW H_{\mathrm{D}}(\mathbf{A})\,\WW^{-1} = H_{\mathrm{D}}\big(\widetilde{\mathbf{A}}\big)\,. \label{eq:VHDAV}
\end{gather}
Note that local coordinates transformations always modify the structure of the differential operators $H_{\mathrm{P}}(\mathbf{A})$ and $H_{\mathrm{D}}(\mathbf{A})$. In fact, non-affine transformations always introduce non-trivial curvature contributions. For this reason we restrict the attention to transformations which are affine in the space coordinates and local in the spin degree of freedom.

\begin{remark}[Symmetries]
\mbox{}	\\
We say that a (anti-)unitary operator $\WW$ is a (physical) \textsl{symmetry} of the differential operators $H_{\mathrm{P}}(\mathbf{A})$ and $H_{\mathrm{D}}(\mathbf{A})$ if the associated magnetic field $b := \mathrm{curl} \, \mathbf{A} = \partial_1 A_2 - \partial_2 A_1$ remains invariant, namely,
\bdm
	\widetilde{b} = \mathrm{curl} \, \widetilde{\mathbf{A}} =  \mathrm{curl} \,\mathbf{A} = b\,.
\edm
\end{remark}

\subsection{Symmetries of the Pauli Hamiltonian}\label{subsec:transfP}
Let us first focus on the Pauli operator \eqref{eq:Pauli2exp} and classify all transformations of the form \eqref{eq:USTx0} and \eqref{eq:VSTx0} fulfilling \eqref{eq:VHPAV}.

\subsubsection{Linear transformations}
For any map $\UU$ of the form \eqref{eq:USTx0}, by a direct calculation we obtain
\bmln{
	\UU\, H_{\mathrm{P}}(\mathbf{A})\, \UU^{-1} 
	= S(\xv)\, \sigma_j\, \sigma_\ell\, {S}^{-1}(\xv)\, T_{hj} T_{m \ell}\; \times \\
	\times \big(-i \partial_{h} - i S(\xv) \partial_h {S}^{-1}(\xv) + T^{-1}_{k h}A_{k}(T^{-1}\xv - \xv_0)\big)\big(-i \partial_{m} - i S(\xv) \partial_m {S}^{-1}(\xv) + T^{-1}_{m n} A_{m}(T^{-1}\xv - \xv_0) \big)\,.
}
This shows that $\UU H_{\mathrm{P}}(\mathbf{A})\, \UU^{-1}$ is itself a Pauli operator of the form \eqref{eq:Pauli2exp}, if and only if the following two conditions are simultaneously fulfilled, for some suitable vector potential $\widetilde{\mathbf{A}}$:
\begin{gather}
S(\xv)\, \sigma_j\, \sigma_\ell\, {S}^{-1}(\xv)\, T_{hj} T_{m \ell} = \sigma_{h}\, \sigma_{m}\,; \label{eq:changeS} \\
T^{-1}_{k h}A_{k}(T^{-1}\xv - \xv_0)\,\one - i\, S(\xv) \,\partial_{h} {S}^{-1}(\xv) = \widetilde{A}_{h}(\xv)\,\one\,. \label{eq:changeA}
\end{gather}

Using the basic algebraic identities $\{\sigma_{h},\sigma_{m}\} = 2 \delta_{hm} \one$ and $[\sigma_{h}, \sigma_{m}] = 2i \varepsilon_{h m r} \sigma_r$, from \eqref{eq:changeS} we deduce $T_{hj} T_{mj} = \delta_{hm}$ and $S(\xv)\, \sigma_{\mathrm{P}}\, {S}^{-1}(\xv)\, \varepsilon_{j \ell p}\, T_{hj} T_{m \ell} = \varepsilon_{h m r}\, \sigma_r$. These are equivalent to
\begin{gather}
T\, T^{t}\! = \one\,, \label{eq:TTone}\\
S(\xv) \,\sigma_3\, {S}^{-1}(\xv) = \big(\!\det T^{-1}\big)\,\sigma_3\,. \label{eq:Ssigma3S}
\end{gather}
Recalling that we are assuming $T \in GL(2,\mathbb{R})$, \eqref{eq:TTone} clearly entails $T \in O(2,\mathbb{R})$. As a consequence, we see that $\UU$ is unitary in $L^2(\mathbb{R}^2;\mathbb{C}^2)$ only if $S = S(\xv)$ is a smooth section of $\mathbb{R}^2 \times \mathrm{U}(2,\mathbb{C})$. Notice that by a trivial application of Stone's theorem we have $\mathrm{U}(2,\mathbb{C}) = \{ e^{-i (\eta_0 \one + \eta_1 \sigma_1 + \eta_2 \sigma_2 + \eta_3 \sigma_3)}\,|\, \eta_0,\eta_1,\eta_2,\eta_3 \in \mathbb{R} \}$. Keeping in mind that $\det T^{-1} = \det T = \pm 1$ and using elementary algebraic properties of the Pauli matrices, we then infer that \eqref{eq:Ssigma3S} can be fulfilled only if either of the following two alternatives occurs:
\beq
T \in SO(2,\mathbb{R}), \qquad  S = e^{-i \eta_0 \one - i \eta_3 \sigma_3}, \label{eq:TSO2loc}
\eeq
for some $ \eta_0,\eta_3 \in C^{\infty}(\mathbb{R}^2) $, or  
\beq
T \in O(2,\mathbb{R}) \setminus SO(2,\mathbb{R}),	\qquad S = e^{-i \eta_0 \one - i \eta_1 \sigma_1 - i \eta_2 \sigma_2},\label{eq:TO2loc}
\eeq
for some $ \eta_0,\eta_1,\eta_2 \in C^{\infty}(\mathbb{R}^2) $, such that 
\beq
	\label{condition}
	\sqrt{\eta_1^2(\xv) + \eta_2^2(\xv)} = \tfrac{\pi}{2}, \tfrac{3\pi}{2}, \tfrac{5\pi}{2}, \ldots \;.
\eeq

On the other hand, matching the condition \eqref{eq:changeA} requires that
\be\label{SSh}
S(\xv)\, \partial_h {S}^{-1}(\xv) = s_h(\xv)\,\one \,,  
\ee
for some scalar function $s_h \in C^{\infty}(\mathbb{R}^2) $.
Taking \eqref{eq:TSO2loc} and \eqref{eq:TO2loc} into account, in can be checked that \eqref{SSh} can be fulfilled only if either $\eta_{3} $ or $\eta_{1},\eta_{2} $ are constant, respectively. Summing up, the only transformations $U$ of the form \eqref{eq:USTx0} satisfying \eqref{eq:VHPAV}, for some suitable $\widetilde{\mathbf{A}}$, correspond to
\beq
T \in SO(2,\mathbb{R}), \qquad  S = e^{-i \eta_0 \one - i \eta_3 \sigma_3},  \label{eq:TSO2}
\eeq
for some $ \eta_0 \in C^{\infty}(\mathbb{R}^2), \eta_3 \in \mathbb{R} $, or
\beq
T \in O(2,\mathbb{R}) \setminus SO(2,\mathbb{R}), \qquad  S = e^{-i \eta_0 \one - i \eta_1 \sigma_1 - i \eta_2 \sigma_2},  \label{eq:TO2}
\eeq
for some $ \eta_0 \in C^{\infty}(\mathbb{R}^2) $ and  $ \eta_1,\eta_2 \in \R $ satisfying \eqref{condition}.

\begin{remark}[Special symmetries]
\mbox{}	\\
In both cases \eqref{eq:TSO2} and \eqref{eq:TO2}, the identity \eqref{eq:changeA} reduces to
\bdm
\widetilde{\mathbf{A}}(\xv) = T\, \mathbf{A}(T^{-1}\xv - \xv_0) + \nabla \eta_0(\xv)\,,
\edm
which implies, in turn,
\begin{align*}
	\widetilde{b}(\xv) 
	= (\det T)\, b(T^{-1}\xv - \xv_0)\,.
\end{align*}
This shows that, in general, $\UU $ is a symmetry of $H_{\mathrm{P}}(\mathbf{A})$ only if $T = \one$, $\xv_0 = \mathbf{0}$ and, in compliance with \eqref{eq:TSO2}, $S = e^{-i \eta_0 \one - i \eta_3 \sigma_3}$ for some $\eta_0 \in C^{\infty}(\mathbb{R}^2)$, $\eta_3 \in \mathbb{R}$. Of course, whenever the magnetic field exhibits specific features, the class of symmetries of the model could comprise additional transformations. For instance, it appears that a uniform magnetic field $b = \mbox{const}. $ is left invariant by any transformation $\UU $ of the form \eqref{eq:USTx0} with $T$, $S$ as in \eqref{eq:TSO2} and $\xv_0 \in \mathbb{R}^2$.
\end{remark}

\begin{example}
	For $S = e^{-i \eta_0}$ with $\eta_0 \in C^{\infty}(\mathbb{R}^2)$, $T = \one$ and $\xv_0 = \mathbf{0}$, $\UU$ is the standard (local) $\mathrm{U}(1)$ electromagnetic gauge transformation. In this case we have $\widetilde{\mathbf{A}} = \mathbf{A}+\nabla\eta_0$.
\end{example}

\begin{example}
	For $S = e^{-i \eta_3 \sigma_3}$ with $\eta_3 \in \mathbb{R}$, $T = \one$ and $\xv_0 = \mathbf{0}$, $\UU$ is the $\mathrm{U}(1)$ (global) axial gauge transformation. In this case we have $\widetilde{\mathbf{A}} = \mathbf{A}$.
\end{example}

\begin{example}
	For $S = \one$, $T \in SO(2,\mathbb{R})$ and $\xv_0 = \mathbf{0}$, $\UU$ is a simple rotation of the coordinate system. In this case, we have $\widetilde{\mathbf{A}}(\xv) = T\, \mathbf{A}(T^{-1}\xv - \xv_0)$ and $\widetilde{b}(\xv) = b(T^{-1}\xv)$. Of course, any such transformation is a symmetry of the Hamiltonian $H_{\mathrm{P}}(\mathbf{A})$ whenever the magnetic field is radial, \emph{i.e.}, $b(\xv) = b\big(|\xv|\big)$.
\end{example}

\subsubsection{Anti-linear transformations}\label{subsec:alinP}
For any map $\VV$ as in \eqref{eq:VSTx0}, taking into account that both the vector potential $\mathbf{A} = (A_1,A_2)$ and the matrix elements $T_{j h}$ are real, we get
\bmln{
	\VV\, H_{\mathrm{P}}(\mathbf{A})\, \VV^{-1} 
	= S(\xv) \,{\sigma_j}^*\, {\sigma_\ell}^*\, {S}^{-1}(\xv)\, T_{hj} T_{m \ell}\; \times \\
	\times \big(-i \partial_{h} - i S(\xv) \partial_h {S}^{-1}(\xv) - T^{-1}_{k h}A_{k}(T^{-1}\xv - \xv_0)\big)\big(-i \partial_{m} - i S(\xv) \partial_m {S}^{-1}(\xv) - T^{-1}_{m n} A_{m}(T^{-1}\xv - \xv_0) \big)\,.
}
Therefore, for $\VV\, H_{\mathrm{P}}(\mathbf{A})\, \VV^{-1}$ to fulfill \eqref{eq:VHPAV}, it is necessary and sufficient that the following conditions are both verified, for some vector-valued distribution $\widetilde{\mathbf{A}} = (\widetilde{A}_1,\widetilde{A}_2)$:
\begin{gather}
S(\xv)\, \sigma_j^*\, \sigma_\ell^*\, \mathrm{S}^{-1}(\xv) T_{h j} T_{m \ell} = \sigma_{h} \sigma_{m}\,, \label{eq:changeSanti} \\
T^{-1}_{k h}A_{k}(T^{-1}\xv - \xv_0)\,\one + i\, S(\xv) \,\partial_{h} {S}^{-1}(\xv) = -\,\widetilde{A}_{h}(\xv)\,\one\,. \label{eq:changeAanti}
\end{gather}
Using again basic commutation relations for the Pauli matrices and noting that $\sigma_3 = \sigma_3^*$, from \eqref{eq:changeSanti} we deduce
\begin{gather}
T\, T^{t}\! = \one\,, \nonumber \\
S(\xv) \,\sigma_3\, {S}^{-1}(\xv) = -\, \big(\!\det T^{-1}\big)\,\sigma_3\,. \label{eq:Ssigma3Santi}
\end{gather}
In particular we have $T \in O(2,\mathbb{R})$. So, $|\det T| = 1$ and $\VV$ is anti-unitary in $L^2(\mathbb{R}^2,\mathbb{C}^2)$ only if $S = S(\xv)$ is a smooth section of $\mathbb{R}^2 \times \mathrm{U}(2,\mathbb{C})$. Taking this into account, by arguments similar to those outlined in the previous subsection, we deduce that \eqref{eq:Ssigma3Santi} can be fulfilled if and only if either one of the following two alternatives happens:
\beq
T \in SO(2,\mathbb{R}),	\qquad S = e^{-i \eta_0 \one - i \eta_1 \sigma_1 - i \eta_2 \sigma_2}, 
 \label{eq:TSO2anti}
\eeq
for some $ \eta_0 \in C^{\infty}(\mathbb{R}^2), \eta_1,\eta_2 \in \mathbb{R} $, such that \eqref{condition} holds, or
\beq
T \in O(2,\mathbb{R}) \setminus SO(2,\mathbb{R}), \qquad	S = e^{-i \eta_0 \one - i \eta_3 \sigma_3}, \label{eq:TO2anti}
\eeq
for some $ \eta_0 \in C^{\infty}(\mathbb{R}^2), \eta_3 \in \mathbb{R} $.

\begin{remark}[Special symmetries]
\mbox{}	\\
In both cases \eqref{eq:TSO2anti} and \eqref{eq:TO2anti}, the condition \eqref{eq:changeAanti} reduces to
\be\label{eq:Aalin}
\widetilde{\mathbf{A}}(\xv) = - \,T\, \mathbf{A}(T^{-1}\xv-\xv_0) + \nabla \eta_0(\xv)\,,
\ee
which yields
\begin{align*}
	\widetilde{b}(\xv) 
	= -\, (\det T)\, b(T^{-1}\xv - \xv_0)\,.
\end{align*}
This makes evident that, in general, none of the admissible anti-unitary transformations \eqref{eq:VSTx0} leaves the magnetic field invariant. Nevertheless, $\VV$ can be a symmetry if $b$ possesses specific features. For example, if the magnetic field is radial, namely $b(\xv) = b\big(|\xv|\big)$, then any transformation $\VV $ of the form \eqref{eq:VSTx0} with $T,S$ as in \eqref{eq:TO2anti} and $\xv_0 = \mathbf{0}$ is indeed a symmetry.
\end{remark}

\begin{example}
	For $S = \sigma_1$, $T = \one$ and $\xv_0 = \mathbf{0}$, $\VV$ is the so-called spin-flip transformation. In this case $\widetilde{\mathbf{A}}(\xv) = -\mathbf{A}(\xv)$ and $\widetilde{b}(\xv) = -\, b(\xv)$, showing that $\VV$ coincides with the charge conjugation as well.
\end{example}

\begin{example}
	For $S = \sigma_2$, $T = - \one \in SO(2,\mathbb{R})$ (notice that this choice of $T$ describes a rotation of an angle $\pi$ in the plane) and $\xv_0 = \mathbf{0}$, $\VV$ is the so-called Kramers map (see, \emph{e.g.}, \cite{LMS09}). In this case $\widetilde{\mathbf{A}}(\xv) = -\mathbf{A}(-\xv)$ and $\widetilde{b}(\xv) = -\, b(-\xv)$.
\end{example}

\begin{example}
	For $S = \one$, $T = P \in O(2,\mathbb{R}) \setminus SO(2,\mathbb{R})$ with $P(x,y) = (-x,y)$, and $\xv_{0} = \mathbf{0}$, $\VV$ is the CP transformation. In this case $\widetilde{\mathbf{A}}(\xv) = \big(A_1(-x,y),-A_2(-x,y)\big)$ and $\widetilde{b}(x,y) = b(-x,y)\,.$
\end{example}

\subsection{Symmetries of Dirac Hamiltonians}\label{subsec:transfD}
We now consider the Dirac operator \eqref{eq:Dirac2exp} and characterize all transformations of the form \eqref{eq:USTx0} and \eqref{eq:VSTx0} fulfilling \eqref{eq:VHDAV}. Since $\big[H_{\mathrm{D}}(\mathbf{A})\big]^2 = H_{\mathrm{P}}(\mathbf{A})$ at the pure algebraic level, it appears that any admissible symmetry of $H_{\mathrm{D}}(\mathbf{A})$ must also be a symmetry of $H_{\mathrm{P}}(\mathbf{A})$. Accordingly, in the sequel we shall restrict the attention to the family of transformations classified before in \eqref{eq:TSO2} \eqref{eq:TO2} and \eqref{eq:TSO2anti} \eqref{eq:TO2anti}. 
For later reference let us recall the well-known Rodrigues rotation formula, holding for arbitrary $\bm{\eta} = (\eta_1,\eta_2,\eta_3)\in \mathbb{R}^3 \setminus \{\mathbf{0}\}$ with $\eta := |\bm{\eta}|$:
\be\label{eq:rodriguez}
	e^{-i \bm{\eta} \cdot \bm{\sigma}} \bm{\sigma} \, e^{i \bm{\eta} \cdot \bm{\sigma}} = \cos(2\eta)\, \bm{\sigma} - \tfrac{1}{\eta} \sin(2\eta)\,\bm{\eta} \land \bm{\sigma}  +\tfrac{1}{\eta^2} \big(1-\cos(2\eta)\big)\, \lf( \bm{\eta} \cdot \bm{\sigma} \ri) \bm{\eta}.
\ee

\subsubsection{Linear transformations}\label{subsubUDir}
We first notice that, for any map $\UU$ of the form \eqref{eq:USTx0}, there holds
\bdm
	\UU H_{\mathrm{D}}(\mathbf{A})\, \UU^{-1} 
	= S(\xv)\, \sigma_j\, {S}^{-1}(\xv)\, T_{hj} \big(-i \partial_{h} - i S(\xv) \partial_h {S}^{-1}(\xv) + T^{-1}_{k h}A_{k}(T^{-1}\xv - \xv_0)\big)\,.
\edm
Therefore $\UU H_{\mathrm{D}}(\mathbf{A})\, \UU^{-1}$ is itself a Dirac Hamiltonian of the form \eqref{eq:Dirac2exp} if and only if, for some suitable $\widetilde{\mathbf{A}} = (\widetilde{A}_1,\widetilde{A}_2)$, \eqref{eq:changeA} is verified and
\be
S(\xv)\, \sigma_j\, {S}^{-1}(\xv)\, T_{hj} = \sigma_{h} \,. \label{eq:changeSdirac}
\ee
Let us stress that \eqref{eq:changeSdirac} implies \eqref{eq:changeS}. We also recall that the only admissible maps $\UU$ fulfilling \eqref{eq:VHPAV} certainly belong to either of the two families characterized in \eqref{eq:TSO2} and \eqref{eq:TO2}.

On one side, consider any choice of $T$ and $S$ as in \eqref{eq:TSO2}. Making reference to the parametrization 
\bdm
	T = \begin{pmatrix} \cos \varsigma & - \sin \varsigma \\ \sin\varsigma & \cos\varsigma \end{pmatrix}, \qquad \mbox{with\; $\varsigma \in [0,2\pi)$}\,,
\edm
and using \eqref{eq:rodriguez}, by a trivial relabeling of the indexes from \eqref{eq:changeSdirac} we infer, for $j \in \{1,2\}$,
\begin{align*}
\cos(2\eta_3)\,\sigma_{j} + \sin(2\eta_3)\,\varepsilon_{3 j \ell}\, \sigma_{\ell} 
= e^{ - i \eta_3 \sigma_3} \, \sigma_j\, e^{i \eta_3 \sigma_3} 
= T^{t}_{j \ell}\, \sigma_{\ell}
= (\cos\varsigma)\,\sigma_{j} + (\sin\varsigma)\;\varepsilon_{3 j \ell}\, \sigma_{\ell} \,.
\end{align*}
Clearly, this chain of identities can be fulfilled only if $\varsigma = 2\eta_3\; \mbox{mod}\, 2\pi$. 

On the other side, let us fix $T$ and $S$ as in \eqref{eq:TO2}. Using the parametrization
\bdm
	T = \begin{pmatrix} \cos \varsigma & \sin \varsigma \\ \sin\varsigma & -\cos\varsigma \end{pmatrix}, \qquad \mbox{with\; $\varsigma \in [0,2\pi)$}\,,
\edm
by arguments similar to those outlined above, from \eqref{eq:changeSdirac} we deduce
\begin{align*}
-\,\sigma_{j} + \tfrac{2\eta_j \eta_\ell}{\eta^2}\,\sigma_{\ell}
= e^{ - i (\eta_1 \sigma_1 + \eta_2 \sigma_2)} \, \sigma_j\, e^{ i (\eta_1 \sigma_1 + \eta_2 \sigma_2)}
= T^{t}_{j \ell}\, \sigma_{\ell}
= (\sin\varsigma)\,\sigma_{j} + (\cos\varsigma)\; \varepsilon_{3 j \ell}\, \sigma_{\ell} \,.
\end{align*}
Keeping in mind that $\sigma_1$ is real and $\sigma_2$ is imaginary, it can be checked by direct inspection that there is no admissible choice of $\varsigma$ and $\eta_1,\eta_2$ compatible with \eqref{eq:TO2} fulfilling the latter chain of identities.

Summing up, the only linear transformations $\UU$ of the form \eqref{eq:USTx0} fulfilling \eqref{eq:VHDAV}, for some suitable $\widetilde{\mathbf{A}}$, are given by
\be
T = \begin{pmatrix} \cos (2\eta_3) & - \sin (2\eta_3) \\ \sin(2\eta_3) & \cos(2\eta_3) \end{pmatrix}, \qquad	 S = e^{-i \eta_0 \one - i \eta_3 \sigma_3}, \label{eq:TSO2Dirac}
\ee
for some $ \eta_0 \in C^{\infty}(\mathbb{R}^2), \eta_3 \in \mathbb{R} $.
Notice that these maps describe simultaneous rotations of the space coordinates in the plane $\mathbb{R}^2$ and of the spin degree of freedom, together with the usual $\mathrm{U}(1)$ electromagnetic local gauge transformation.

\subsubsection{Anti-linear transformation}
We first notice that, for any map $\VV$ of the form \eqref{eq:VSTx0}, there holds
\bdm
	\VV H_{\mathrm{D}}(\mathbf{A})\, \VV^{-1} 
	= -\,S(\xv)\, \sigma_j^*\, {S}^{-1}(\xv)\, T_{hj} \big(- i \partial_{h} - i S(\xv) \partial_h {S}^{-1}(\xv) - T^{-1}_{k h}A_{k}(T^{-1}\xv - \xv_0)\big)\,.
\edm
This makes evident that $\VV H_{\mathrm{D}}(\mathbf{A})\, \VV^{-1}$ is itself a Dirac Hamiltonian of the form \eqref{eq:Dirac2exp} ({\sl cf.} \eqref{eq:VHDAV}) if and only if, for some suitable $\widetilde{\mathbf{A}} = (\widetilde{A}_1,\widetilde{A}_2)$, \eqref{eq:changeAanti} is verified and
\be
S(\xv)\, \sigma_j^*\, {S}^{-1}(\xv) \,T_{h j} = -\, \sigma_{h} \,. \label{eq:changeSantidirac}
\ee
Notice that \eqref{eq:changeSanti} is indeed a consequence of \eqref{eq:changeSantidirac}. To proceed, let us point out that \eqref{eq:changeSantidirac} can be equivalently rephrased as
\be
{S}^{-1}(\xv)\,\sigma_{h}\, S(\xv) = -\,T_{h j}\,\sigma_j^* \,, \label{eq:changeSantidirac2}
\ee
and notice that ${S}^{-1}(\xv) = e^{i ( \eta_0 \one + \bm{\eta} \cdot \bm{\sigma} )}$ for all $S(\xv) = e^{-i ( \eta_0 \one + \bm{\eta} \cdot \bm{\sigma} )}$. Moreover, recall once more that $\sigma_1$ is real and $\sigma_2$ is imaginary.

On one hand, by arguments similar to those described in subsection \cref{subsubUDir} we deduce that the only transformations $\VV$ of the form \eqref{eq:VSTx0} with $T$ and $S$ as in \eqref{eq:TSO2anti}, fulfilling \eqref{eq:VHDAV} for some $\widetilde{\mathbf{A}}$, are
\be
T = \!\begin{pmatrix} \tfrac{\eta_2^2 - \eta_1^2}{\eta_1^2 + \eta_2^2} & \tfrac{2\eta_1 \eta_2}{\eta_1^2 + \eta_2^2} \vspace{0.08cm}\\ -\tfrac{2\eta_1 \eta_2}{\eta_1^2 + \eta_2^2} & \tfrac{\eta_2^2 - \eta_1^2}{\eta_1^2 + \eta_2^2} \end{pmatrix}, \qquad  S = e^{-i \eta_0 \one - i \eta_1 \sigma_1 - i \eta_2 \sigma_2}, 
\ee
for some $ \eta_0 \in C^{\infty}(\mathbb{R}^2)  $ and $ \eta_1,\eta_2 \in \mathbb{R} $, such that \eqref{condition} holds.
On the other hand, the only maps $\VV$ of the form \eqref{eq:VSTx0} with $T$ and $S$ as in \eqref{eq:TO2anti}, fulfilling \eqref{eq:VHDAV} for some $\widetilde{\mathbf{A}}$, are given by
\beq
T = \!\begin{pmatrix} \cos({\pi \over 4} \!+\! k \pi)\!\! & \sin({\pi \over 4} \!+\! k \pi) \vspace{0.08cm}\\ \sin({\pi \over 4} \!+\! k \pi)\!\! & -\cos({\pi \over 4} \!+\! k \pi) \end{pmatrix}, \qquad S = e^{-i \eta_0 \one - i\,({5\pi \over 8} + {\pi k \over 2})\, \sigma_3}, 
\eeq
for some $ \eta_0 \in C^{\infty}(\mathbb{R}^2), k \!\in\! \mathbb{Z} $.

Summarizing and changing slightly the parametrization, we infer that the only anti-linear transformations $\VV$ of the form \eqref{eq:VSTx0} fulfilling \eqref{eq:VHDAV}, for some suitable $\widetilde{\mathbf{A}}$, are given by
\beq
T = \!\begin{pmatrix} \cos(2\eta)  & -\sin(2\eta) \\ \sin(2\eta) & \cos(2\eta) \end{pmatrix}, \qquad S = e^{-i \eta_0 \one} (\sigma_1\sin\eta - \sigma_2\cos\eta)\,,
\eeq
or
\beq
T = \!\begin{pmatrix} 1/\sqrt{2} & 1/\sqrt{2} \\ 1/\sqrt{2} & -1/\sqrt{2} \end{pmatrix}, \qquad S = \pm\, e^{-i \eta_0 \one - i\,{5\pi \over 8}\, \sigma_3}, 
\eeq
or
\beq
T = \!\begin{pmatrix} -1/\sqrt{2} & -1/\sqrt{2} \\ -1/\sqrt{2} & 1/\sqrt{2} \end{pmatrix}, \qquad	 S  = \pm\, e^{-i \eta_0 \one - i\,{\pi \over 8}\, \sigma_3}, 
\eeq
with $\eta_0 \in C^{\infty}(\mathbb{R}^2) $ and $  \eta \in [0,2\pi) $.



\begin{thebibliography}{mich9999}

\bibitem[AT98]{AT98} R. Adami, A. Teta, On the Aharonov-Bohm Hamiltonian, {\it Lett. Math. Phys.} {\bf 43}, 43--54 (1998).

\bibitem[Ag75]{Ag75} S. Agmon, Spectral properties of Schr\"odinger operators and scattering theory, {\it Ann. Sc. Norm. Super. Pisa -- Cl. Sci.} {\bf 11}, 151--218 (1975).

\bibitem[AB59]{AB59} Y. Aharonov, D. Bohm, Significance of electromagnetic potentials in the quantum theory, {\it Phys. Rev.} \textbf{115}, 485--491 (1959).


\bibitem[AT11]{AT11} I. Alexandrova, H. Tamura, Resonance free regions in magnetic scattering by two solenoidal fields at large separation, {\it J. Funct. Anal.} {\bf 260}, 1836--1885 (2011).

\bibitem[AHK23]{AHK23} S. Avramska-Lukarska, D. Hundertmark, H. Kovarik, Absence of positive eigenvalues of magnetic Schrödinger operators, {\it Calc. Var. Partial Differential Equations} {\bf 62}, 63 (2023).

\bibitem[BTRS21]{BTRS21} J.-M. Barbaroux, L. Le Treust, N. Raymond, E. Stockmeyer, On the semiclassical spectrum of the Dirichlet-Pauli operator, {\it J. Eur. Math. Soc.} {\bf 23}, 3279--3321 (2021).

\bibitem[BTRS23]{BTRS23} J.-M. Barbaroux, L. Le Treust, N. Raymond, E. Stockmeyer, The Dirac bag model in strong magnetic fields, {\it Pure App. Anal.} {\bf 5}, 643--727 (2023).

\bibitem[BW83]{BW83} H. Baumg\"artel, M. Wollenberg, {\it Mathematical Scattering Theory}, Akademie-Verlag, Berlin (1983).

\bibitem[BK23]{BK23} J. Breuer, H. Kovarik, Resonances at the Threshold for Pauli Operators in Dimension Two, preprint {\it arXiv:2304.06289 [math-ph]},  to appear on {\it Ann. H. Poincar\'e} (2023).

\bibitem[CDYZ]{CDYZ}F. Cacciafesta, P. D'Ancona, Z. Yin, J. Zhang, Dispersive estimates for Dirac equations in Aharonov-Bohm magnetic fields: massless case, {\it eprint arXiv:2407.12369} (2024).

\bibitem[CFP18]{CFP18} C. Cacciapuoti, D. Fermi, A. Posilicano, On inverses of Krein's Q-functions, {\it Rend. Mat. Appl.} {\bf 39}, 229--240 (2018).

\bibitem[CF20]{CF20} M. Correggi, D. Fermi, Magnetic perturbations of anyonic and Aharonov–Bohm Schrödinger operators, {\it J. Math. Phys.} {\bf 62}, 032101 (2021).

\bibitem[CF23]{CF23} M. Correggi, D. Fermi, Schr\"odinger operators with multiple Aharonov-Bohm fluxes, {\it Ann. Henri Poincaré} (2024).

\bibitem[CF24]{CF24} M. Correggi, D. Fermi, Deficiency indices for singular magnetic Schr\"odinger operators, {\it Milan J. Math.} (2024). 

\bibitem[CO18]{CO18} M. Correggi, L. Oddis, Hamiltonians for two-anyons systems, {\it Rend. Mat. Appl.} {\bf 39}, 277--292 (2018).

\bibitem[CFK20]{CFK20} L. Cossetti, L. Fanelli, D. Krejcirik, Absence of eigenvalues of Dirac and Pauli Hamiltonians via the method of multipliers, {\it Commun. Math. Phys.} {\bf 379}, 633--691 (2020).

\bibitem[DS98]{DS98} L. Dabrowski, P. \u St'\!ov\'i\v cek, Aharonov-Bohm effect with $\delta$-type interaction, {\it J. Math. Phys.} {\bf 39}, 47--62 (1998).

\bibitem[DF23]{DF23} J. Derezi\'nski, J. Faupin, Perturbed Bessel operators. Boundary conditions and closed realizations, {\it J. Funct. Anal.} {\bf 284}, 109728 (2023).

\bibitem[DG21]{DG21} J. Derezi\'nski, V. Georgescu, On the Domains of Bessel Operators, {\it Ann. H. Poincar\'e} {\bf 22}, 3291--3309 (2021).

\bibitem[DR17]{DR17} J. Derezi\'{n}ski, S. Richard, On Schr\"odinger operators with inverse square potentials on the half-line, {\it Ann. Henri Poincar\'{e}} {\bf 18}, 869--928 (2017).

\bibitem[EV02]{EV02} L. Ers\"{o}s, V. Vougalter, Pauli operator and Aharonov-Casher Theorem for measure valued magnetic fields, {\it Commun. Math. Phys.} {\bf 225}, 399--421 (2002).

\bibitem[F23]{F23} D. Fermi, Quadratic forms for Aharonov-Bohm Hamiltonians, in \textsl{Quantum Mathematics I}, M. Correggi, M. Falconi (eds.), Springer INdAM Series {\bf 57}, Springer (2023).

\bibitem[Fi23]{Fi23} M. Fialov\' a, Aharonov-Casher theorems for manifolds with boundary and APS boundary condition, {\it Ann. Henri Poincaré} (2024).

\bibitem[FS92]{FS92} J, Fr\"{o}lich, U.M. Studer, $U(1)\times SU(2)$-Gauge invariance of non-relativistic quantum mechanics and generalized hall effect, {\it Commun. Math. Phys.} {\bf 148}, 553--600 (1992).

\bibitem[FS93]{FS93} J. Fr\"{o}lich, U.M. Studer, Gauge invariance and current algebra in nonrelativistic many-body theory, {\it Rev. Mod. Phys.} {\bf 65}, 733 (1993).

\bibitem[GR07]{GR07} I.S. Gradshteyn, I.M. Ryzhik, {\sl Table of Integrals, Series, and Products}, Academic Press, Elsevier, Amsterdam (2007).

\bibitem[GHT23]{GHT23} F. Gesztesy, M. Hunziker, G. Teschl, Essential self-adjointness of even-order, strongly singular, homogeneous half-line differential operators, {\it Ann. Henri Poincaré} (2024).

\bibitem[GPS22]{GPS22} F. Gesztesy, M.M.H. Pang, J. Stanfill, On domain properties of Bessel-type operators, {\it Discrete Contin. Dyn. Syst. - Ser. S} (2022).

\bibitem[GS04]{GS04} V.A. Geyler, P. Stovicek, On the Pauli operator for the Aharonov–Bohm effect with two solenoids, {\it J. Math. Phys.} {\bf 45}, 51--75 (2004).

\bibitem[HH23]{HH23} D. Hundertmark, H. Kovarik, Absence of embedded eigenvalues of Pauli and Dirac operators, {\it J. Funct. Anal.} {\bf 286}(4), 110288 (2024).

\bibitem[IT01]{IT01} H.T. Ito, H. Tamura, Aharonov–Bohm effect in scattering by point-like magnetic fields at large separation, {\it Ann. H. Poincar\'e} {\bf 2}, 309--359 (2001).

\bibitem[Ko22]{Ko22} H. Kovarik, Spectral properties and time decay of the wave functions of Pauli and Dirac operators in dimension two, {\it Adv. Math.} {\bf 398} (2022).

\bibitem[LB22]{LB22} E. Lavigne Bon, Semiclassical spectrum of the Dirichlet–Pauli operator on an annulus, {\it J. Math. Phys.} {\bf 63}, 052102 (2022).

\bibitem[MPS18]{MPS18} A. Mantile, A. Posilicano, M. Sini, Limiting absorption principle, generalized eigenfunctions and scattering matrix for Laplace operators with boundary conditions on hypersurfaces, {\it J. Spectr. Theory} {\bf 8}, 1443--1486 (2018).

\bibitem[LMS09]{LMS09} M. Loss, T. Miyao, H. Spohn, Kramers Degeneracy Theorem in Nonrelativistic QED, {\it Lett. Math. Phys.} {\bf 89}, 21--31 (2009).

\bibitem[OLBC10]{OLBC10} F.W.J. Olver, D.W. Lozier, R.F. Boisvert, C.W. Clark, {\it NIST Handbook of Mathematical Functions}, Cambridge University Press, Cambridge (2010).

\bibitem[PR11]{PR11} K. Pankrashkin, S. Richard, Spectral and scattering theory for the Aharonov-Bohm operators, {\it Rev. Math. Phys.} \textbf{23}, 53--81 (2011).

\bibitem[Pe05]{Pe05} M. Persson, On the Aharonov-Casher formula for different self-adjoint extensions of the Pauli operator with singular magnetic field, {\it Electron. J. Differential Equations} {\bf 55}, 1--16 (2005).

\bibitem[Pe06]{Pe06} M. Persson, On the Dirac and Pauli Operators with Several Aharonov–Bohm Solenoids. {\it Lett. Math. Phys.} {\bf 78}, 139--156 (2006).

\bibitem[Po01]{Po01} A. Posilicano, A Krein-like Formula for Singular Perturbations of Self-Adjoint Operators and Applications, {\it J. Funct. Anal.} {\bf 183}, 109--147 (2001).

\bibitem[Po03]{Po03} A. Posilicano, Self-adjoint extensions by additive perturbations, {\it Ann. Sc. Norm. Super. Pisa -- Cl. Sci.} {\bf 2}, 1--20 (2003).

\bibitem[Po04]{Po04} A. Posilicano, Boundary triples and Weyl functions for singular perturbations of self-adjoint operators, {\it Methods Funct. Anal. Topology} {\bf 10}, 57--63 (2004).

\bibitem[Po08]{Po08} A. Posilicano, Self-adjoint extensions of restrictions, {\it Oper. Matrices} {\bf 2}, 483--506 (2008).

\bibitem[RS81]{RS81} M. Reed, B. Simon, {\it Methods of modern mathematical physics. Vol. I-IV}, Academic Press (1981).

\bibitem[Ru83]{Ru83} S.N.M. Ruijsenaars, The Aharonov-Bohm Effect and scattering theory, {\it Ann. Phys.} {\bf 146}, 1--34 (1983).

\bibitem[Ta03]{Ta03} H. Tamura, Resolvent convergence in norm for Dirac operator with Aharonov–Bohm field, {\it J. Math. Phys.} {\bf 53}, 2967--2993 (2003).

\bibitem[Ta11]{Ta11} H. Tamura, Asymptotic analysis for Green functions of Aharonov-Bohm Hamiltonian with application to resonance widths in magnetic scattering, {\it Math. J. Okayama Univ.} {\bf 53}, 1--37 (2011).

\bibitem[Te90]{Te90} A. Teta, Quadratic forms for singular perturbations of the Laplacian, {\it Publ. Res. Inst. Math. Sci.} \textbf{26}, 803--817 (1990).

\end{thebibliography}
\end{document}